\newtheorem{theorem}{Theorem}
\newtheorem{lemma}{Lemma}
\newtheorem{example}{Example}
\newtheorem{corollary}{Corollary}
\providecommand{\keywords}[1]
{
  \textbf{\textit{Index Terms---}} #1
}
\begin{document}
\title{Age of Information for Multiple-Source Multiple-Server Networks} 



 \author{%
   \IEEEauthorblockN{Alireza Javani, Student Member, IEEE, Marwen Zorgui, Student Member, IEEE, Zhiying Wang, Member, IEEE}
  
 }

\maketitle


\begin{abstract}

Having timely and fresh knowledge about the current state of information sources is critical in a variety of applications. In particular, a status update may arrive at the destination later than its generation time due to processing and communication delays. The freshness of the status update at the destination is captured by the notion of age of information. In this study, we analyze a multiple sensing network with multiple  sources, multiple servers, and a monitor (destination). Each source corresponds to an independent piece of information and its age is measured individually. Given a particular source, the servers independently sense the source of information and send the status update to the monitor. We assume that updates arrive at the servers according to  Poisson random processes. Each server sends its updates to the monitor through a direct link, which is modeled as a queue. The service time to transmit an update is considered to be an exponential random variable. 
We examine both homogeneous and heterogeneous service and arrival rates for the single-source case, and only homogeneous arrival and service rates for the multiple-source case. 
We derive a closed-form expression for the average age of information under a last-come-first-serve (LCFS) queue for a single source and arbitrary number of homogeneous servers. Using a recursive method, we derive the explicit average age of information for any number of sources and homogeneous servers. We also investigate heterogeneous servers and a single source, and present algorithms for finding the average age of information. 
\end{abstract}

\keywords{\textbf{Age of Information, wireless sensor network, status update, queuing analyses, monitoring network.}}
\section{Introduction}
\let\thefootnote\relax\footnotetext{This
paper was presented in part at the 2019 IEEE Global Communications Conference (Globecom).
The authors are with the Center for Pervasive Communications and Computing, University of California Irvine (e-mail:
ajavani@uci.edu,
mzorgui@uci.edu, zhiying@uci.edu).}

Widespread sensor network applications such as health monitoring using wireless sensors \cite{8908653} and the Internet of things (IoT)\cite{10.1145/3368089.3409682}, as well as applications like stock market trading and vehicular networks \cite{du2015effective}, require sending several status updates to their designated recipients (called monitors). Outdated information in the monitoring facility may lead to undesired situations. As a result, having the data at the monitor as fresh as possible is crucial. 
In order to quantify the freshness of the received status update, the age of information (AoI) metric was introduced in~\cite{kaul2012real}. For an update received by the monitor, AoI is defined as the time elapsed since the generation of the update. AoI captures the timeliness of status updates, which is different from other standard communication metrics like delay and throughput. It is affected by the inter-arrival time of updates and the delay that is caused by queuing during update processing and transmission. 

Instead of sensing the source by one server, we consider the \emph{multiple sensing} problem, where  updates arrive at sensors and are sent to the receiver (monitoring facility) through multiple servers. In this work we study the settings of  homogeneous and heterogeneous arrival and service rates and extend our previous results in \cite{9013935}.
We study the average age of information defined as in~\cite{kaul2012real}. 
We consider AoI in a multiple sensing network and assume that a number of shared sources are sensed and then the data is transmitted to the monitor by $n$ independent servers. For example, the sources of information could be some shared environmental parameters, and independently operated sensors in the surrounding area obtain such information.
As another example, the source of information can be the prices of several stocks which is transmitted to the user by multiple independent service providers. Throughout this paper, a sensor or a service provider is called a server, since it is responsible to serve the updates to the monitor. In this paper we aim to answer the question \emph{how much  gain in terms of AoI we can get using multiple servers}.

We assume that status updates arrive at the servers independently according to Poisson random processes, and the server is modeled as a queue whose service time for an update is exponentially distributed. 
We assume information sources are independent and are sensed by $n$ independent servers. We mainly consider the Last-Come-First-Serve with preemption in service  (in short, LCFS) queue model, namely, upon the arrival of a new update, the server immediately starts to serve it and drops any old update being served.

In summary, this paper makes the following main contributions:
\begin{itemize}
    \item We propose the multiple-sensing network for updating information of multiple sources. Depending on the information arrival rates and the service rates, the network is categorized as homogeneous or heterogeneous. The stochastic hybrid system (SHS) is established for various cases  to derive the average AoI similar to \cite{yates2018status , yates2018age}. 
    \item A closed-form expression of the average AoI for a single-source multiple-server network under LCFS policy is derived.
    \item We develop a recursive algorithm that calculates the average AoI for LCFS with multiple sources and  multiple servers in a homogeneous network. Moreover, closed-form AoI expressions are derived for an arbitrary number of sources and $n=2,3$ servers. 
    \item The heterogeneous network with a single source is considered. For the cases of $n=2,3$, the expressions for the average AoI are developed. For the general case, an algorithm is developed for computing the average AoI.
    \item Simulations are carried out for different queue models and network setups.
\end{itemize}

{\bf Related work.}
In \cite{kaul2012real}, the authors considered the single-source single-server and first-come-first-serve (FCFS) queue model and determined the arrival rate that minimizes AoI. A series of works afterwards investigated average AoI minimization under various system models with multiple sources, servers and different queue models. Different cases of multiple-source single-server under FCFS and last-come-first-serve (LCFS) were considered in \cite{yates2018age}, \cite{9099557} and the region of feasible age was derived. In \cite{yates2018status, 8406966}, the system is modeled as a source that submits status updates to a network of parallel and serial servers, respectively, for delivery to a monitor and AoI is evaluated. The parallel-server network is also studied in \cite{kam2016effect} when the number of servers is 2 or infinite, and the average AoI for the FCFS queue model was derived. The authors in \cite{doncel2020age} also 
considered a system with multiple sources, where packets are sent to the parallel queues. They compute the average AoI of a system with only two parallel servers and compare the average AoI with the case of a single queue. In \cite{8433704}, the authors considered a model with multiple sources, a single queue and multiple destinations. A real time monitoring system where IoT devices have to transmit status updates to a common destination is considered in \cite{zhou2020age}. The authors considered correlated status updates at the devices and showed that the optimal policy is threshold-based with respect to AoI at the destination.

The AoI has also been applied to different network models as a  performance metric for various communication systems that deal with time-sensitive information,
e.g., cellular wireless networks \cite{9174054,9148771,mankar2020throughput,8761106}, source nodes powered by energy harvesting \cite{8849636,8437547,9154225,9086254,9086214,gindullina2020ageofinformation,9217302}, wireless erasure networks and coding \cite{9149052,9013381,9144090,8636088,9174086}, scheduling in networks \cite{9374448,9097584,9348053,9055353,9184001,feng2020pecoding}, unmanned aerial vehicle (UAV)-assisted communication
systems \cite{8570843,9285214,9374461}, and multi-hop networks \cite{9048736,8732378,9149009,8262777}. In particular, the goal of this line of research  is to identify the characteristics of the optimal policies that minimize the average AoI.
Another age-related metric of peak AoI was also introduced in \cite{costa2016age}, which corresponds to the age of information at the monitor right before the receipt of the next update. 
The average peak AoI minimization in IoT networks and wireless systems was considered in \cite{abd2018average, 9360520,dogan2020multisource,9061059,9322606}.

This paper is organized as follows. Section \ref{sec:preliminaries} formally introduces the system model of interest, and provides preliminaries on SHS. Section \ref{sec:homo} studies the average AoI for homogeneous servers. In Section \ref{LCFS}, we derive the average age of information formula by applying the SHS method to our model for a homogeneous network with a single information source. In Section \ref{multiple}  we derive AoI for an arbitrary number of information sources and for any $n$. We also obtain the optimal arrival rates when $n=2$ that minimizes the weighted sum of average AoI. In Section \ref{hetro-sec}, we investigate the heterogeneous network and prove that the average AoI can be computed using our proposed algorithms. When $n=2$, we find the optimal arrival rate at each server given the service rates. In the end, we discuss our findings, future directions, and conclusion in Section \ref{conc}.

{\bf Notation.} In this paper, we use boldface for vectors, and normal font with a subscript for its elements. For example, for a vector $\mathbf{x}$, the $j$-th element is denoted by $x_j$. For non-negative integers $a$ and $b$, $b \geq a$, we define $[a:b] \triangleq  \{a,  \ldots, b\}$, and $[a]\triangleq  [1:a]$. If $a > b$, $[a:b] \triangleq \emptyset$.

\section{System Model and Preliminaries}
\label{sec:preliminaries}
In this section, we first present our network model, and then briefly review the stochastic hybrid system analysis from \cite{yates2018age}.
The network consists of $m$ information sources that are sensed by $n$ independent servers as illustrated in Figure \ref{fig2}. Updates from the information sources are aggregated at the monitor  after going through separate links. Server $j$ collects updates from source $i$ following a Poisson random process with rate $\lambda_{j}^{(i)}$, $j \in [n], i \in [m]$. For Server $j$, the service time is an exponential random variable with average $\frac{1}{\mu_{j}}$, independent of all other servers. We focus on the queuing model of \emph{last-come-first-come with preemption in service}, or in short, \emph{LCFS}. In this model, a server starts to transmit the new update right upon its arrival, thus dropping the previous update being served regardless of its source, if any.

A network is called \emph{homogeneous} if $\lambda_{j}^{(i)}=\lambda^{({i})}, \mu_j=\mu$, for all $j \in [n], i \in [m]$; otherwise, it is \emph{heterogeneous}. In the case of a single source in a homogeneous network, we denote $\lambda^{({1})}$ simply by $\lambda$.

Consider one particular source. Suppose the freshest update at the monitor at time $t$ is generated at time $u(t)$, the \emph{age of information} at the monitor (in short, AoI) is defined as $\Delta(t) =t-u(t)$, which is the time elapsed since the generation of the last received update \cite{kaul2012real}. 
From the definition, it is clear that AoI linearly increases at a unit rate with respect to $t$, except some reset jumps to a lower value at points when the monitor receives a fresher update from the source. The age of information of our network is shown in Figure \ref{fig1}.
For a particular source, let $t_{1}, t_{2},\dots, t_N$ be the generation times of all transmitted updates at all servers in increasing order. 
The black dashed lines show the age of every update.
Let $T_{1}, T_{2},\dots, T_N$ be the receipt time of all updates. Note that due to the contention among different updates at the same server, some updates may be dropped and not delivered at all. The red solid lines show AoI.

We note a key difference between the model in this work and most previous models. Updates come from different servers, therefore they might be out of order at the monitor and thus a new arrived update might not  have any effect on AoI because a fresher update has  already been delivered. As an example, from the $6$ updates shown in Figure \ref{fig1}, \emph{useful} updates that change AoI are updates $1,3,4$ and $6$, while the rest are disregarded as their information is obsolete when arriving at the monitor. 
Thus among all the received updates, we only need to consider the \emph{useful} ones that lead to a change in AoI. 
\begin{figure}
\centering
\includegraphics[width=0.42\textwidth]{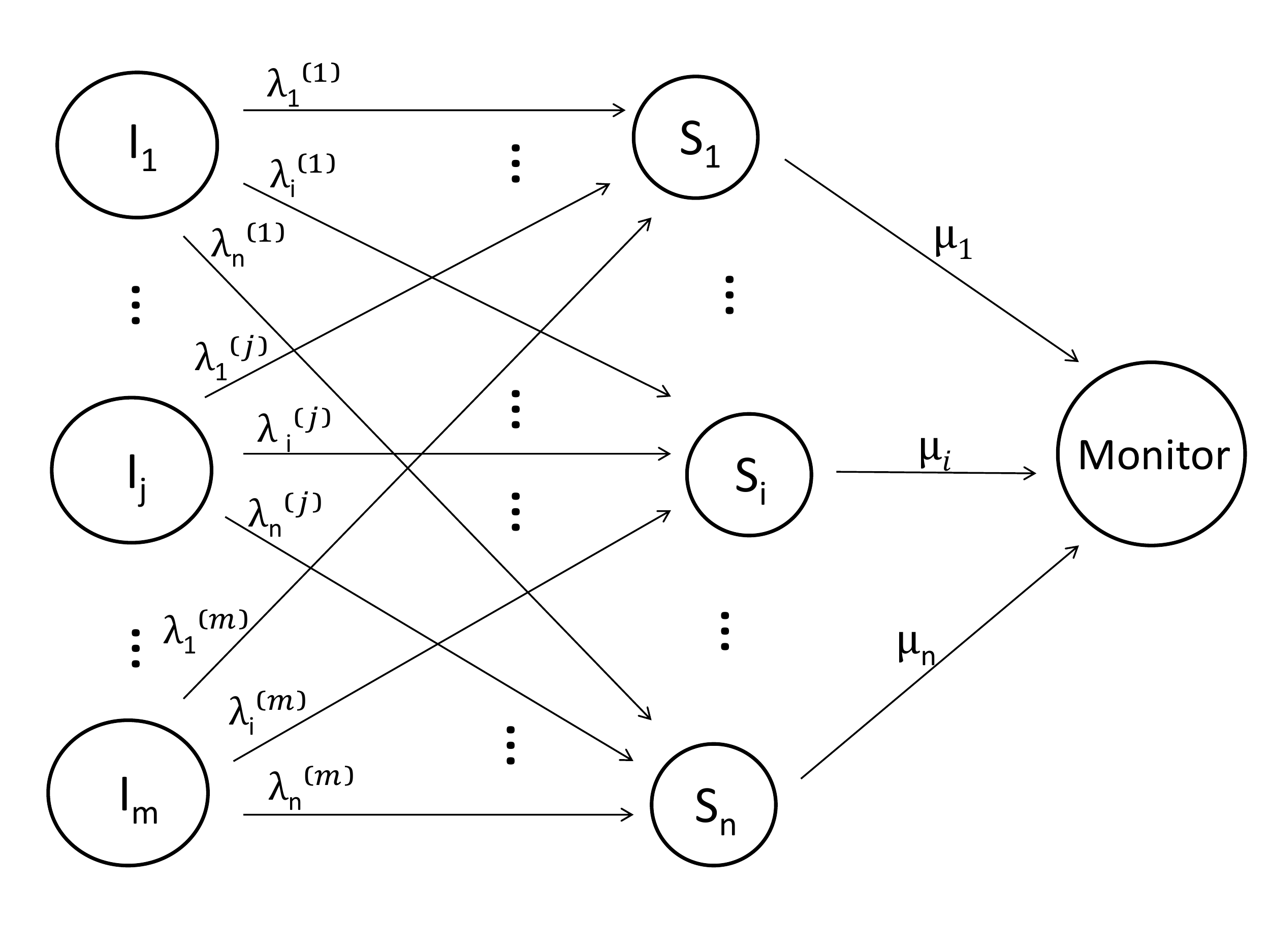} 
\caption{The $n$-server monitoring network with $S_{1},S_{2},...,S_{n}$ being the servers and $I_{1},I_{2},...,I_m$ being the independent information sources, sending the updates from sources to the monitor.
}

\label{fig2}
\end{figure}

\begin{figure}
\centering
\includegraphics[width=0.4\textwidth]{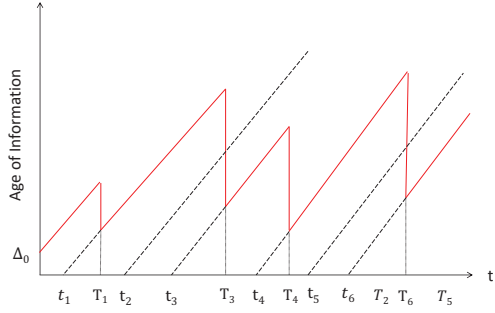} 
\caption{AoI for a particular source in a network with $n$ servers.}
\label{fig1}
\end{figure}

 The interest of this paper is the average AoI for each source at the monitor. The \emph{average AoI} \cite{kaul2012real} is the limit of the average age over time: $\Delta \triangleq \lim_{T \to \infty}  \frac{1}{T} \int_0^T \Delta(t)  \text{d}t$, and for a stationary ergodic system, it is also the limit of the average age over the ensemble: $\Delta= \lim_{t \to \infty} \mathbb{E}[\Delta(t)]$.

In the paper, we view our system as a stochastic hybrid system (SHS) and apply a method first introduced in \cite{yates2018age} in order to calculate AoI.

In the SHS, the state is composed of a discrete state and a continuous state. 
The discrete state $q(t) \in \mathcal{Q}$, for a discrete set $\mathcal{Q}$, is a continuous-time discrete Markov chain, and the continuous-time continuous state $\mathbf{x}(t) = (x_0(t),x_1(t),\dots,x_n(t)) \in \mathbb{R}^{n+1}$ is a continuous-time stochastic process. For example,  the discrete state can represent which server has the freshest update in the network. For another example, we can use $x_0(t)$ to represent the age at the monitor, and $x_j(t)$ for the age at the $j$-th server, $j=1,2,\dots,n$.  

Graphically, we represent each State $q \in \mathcal{Q}$ by a node. 
For the discrete Markov chain $q(t)$, transitions happen from one state to another through directed transition edge $l$, and the time spent before the transition occurs is exponentially distributed with rate $\lambda{(l)}$. Note that it is possible to transit from the one state to itself. The transition occurs when an update arrives at a server, or an update is received at the monitor. Thus the transition rate is the update arrival rate or the service rate,  $\lambda{(l)} \in \{\lambda_{1}^{(1)},...,\lambda_{n}^{(m)}, \mu_{1},...,\mu_{n}\}$. 
Denote by $L'_{{q}}$ and $L_{{q}}$ the sets of incoming and outgoing transitions of State $q$, respectively.
When transition $l$ occurs, we write that the discrete state transits from $q_l$ to $q_l^\prime$. 
For a transition, we denote that
the continuous state changes from $\mathbf{x}$ to $\mathbf{x}^\prime$.
In our problem, this transition is linear in the vector space of $\mathbb{R}^{n+1}$, i.e., $\mathbf{x}^\prime=\mathbf{x}A_{l}$, for some real matrix $A_{l}$ of size $(n+1) \times (n+1)$.  
Note that when we have no transition, the age grows at a unit rate for the monitor and relevant servers, and is kept unchanged for irrelevant servers.
Hence, within the discrete State $q$, $\mathbf{x}(t)$ evolves as a piece-wise linear function in time, namely, $\frac{\partial{\mathbf{x}(t)}}{\partial{t}} = \mathbf{b}_{q}$,  for some $ \mathbf{b}_q \in \{0,1\}^{n+1}$. 
\begin{example} 
Consider the case of 2 heterogeneous servers and 1 source. At each time, we keep track of the age of information in the continuous state $\mathbf{x}=(x_0,x_1,x_2)$. Here $x_0$ is the age at the monitor, $x_1$ is the age for the first server, and $x_2$ is the age for the second server. In this example, the discrete states are $\mathcal{Q}=\{1,2\}$. In State $1$, Server $1$ contains the freshest information, i.e., $x_1 \le x_2$; and in State $2$, Server $2$ has the freshest information, namely, $x_2 \le x_1$. Obviously, our system changes its state when servers receive new information. For instance, there is a transition $l$ from State $1$ to State $2$ with rate of $\lambda_2$, when a new update arrives at Server 2 and the freshest information was at Server $1$ before that. Hence, $q_{l}=1$ and $q_{l}^{\prime} = 2$  which shows that State $2$ is an outgoing transition for State $1$ and State $1$ is an incoming transition for State $2$. Moreover, the continuous state changes from $\mathbf{x}=(x_0,x_1,x_2)$ to $\mathbf{x}^\prime = (x_0,x_1,0)$. When there are no transitions, all entries of $\mathbf{x}$ grow linearly in time.
\end{example}

For our purpose, we consider the discrete state probability
 \begin{equation}
 \pi_{\hat{q}} (t) \triangleq \mathop{\mathbb{E}}[\delta_{\hat{q},q(t)}] =P[q(t)=\hat{q}],
\end{equation}
and the correlation between the continuous state $\mathbf{x}(t)$ and the discrete state $q(t)$:
 \begin{equation}
 \mathbf{v}_{\hat{q}} = (v_{\hat{q}_0} (t),\dots, v_{\hat{q}_n} (t)) \triangleq \mathop{\mathbb{E}}[\mathbf{x}(t) \delta_{\hat{q},q(t)}].
\end{equation}
Here $\delta_{\hat{q},q(t)}$ denotes the Kronecker delta function, i.e., it equals $1$ if $q(t)=\hat{q}$, and it equals $0$ otherwise. When the discrete state $q(t)$ is ergodic, ${\pi}_q(t)$ converges uniquely to the stationary probability ${{\pi}}_q$, for all $q \in \mathcal{Q}$. We can find these stationary probabilities from the following set of equations knowing that $\sum_{q \in \mathcal{Q}}^{}\pi_{q} = 1$,
\begin{align*}
{{\pi}}_{{q}} \sum_{l \in L_{{q}}}^{} \lambda{(l)}=
 \sum_{l \in L^\prime_{{q}}}^{}
\lambda{(l)} {{\pi}}_{q_l}, \quad {q} \in \mathcal{Q}.
\end{align*}

A key lemma we use to develop AoI for our LCFS queue model is the following from \cite{yates2018age}, which was derived from the general SHS results in \cite{hespanha2006modelling}.

\begin{lemma}[\cite{yates2018age}] \label{lem:yates}
 If the discrete-state Markov chain $q(t)$ is ergodic with stationary distribution ${\pi}$ and we can find a non-negative solution of $ \{{\mathbf{v}}_{{q}}, {q} \in \mathcal{Q} \}$ such that 
\begin{equation}\label{eq:yateslemma}
{\mathbf{v}}_{{q}} \sum_{l \in L_{{q}}}^{} \lambda{(l)}=
\mathbf{b}_{{q}} {\pi}_{{q}} + \sum_{l \in L^\prime_{{q}}}^{}
\lambda{(l)} {\mathbf{v}}_{q_l} A_{l}, \quad {q} \in \mathcal{Q},
\end{equation}
then the average age of information is given by 
\begin{align}
\Delta= \sum_{{q} \in \mathcal{Q}}^{} {v}_{{q} {0}}.
\end{align} 
\end{lemma}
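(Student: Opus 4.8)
The plan is to derive Lemma~\ref{lem:yates} as a specialization of the Dynkin formula for stochastic hybrid systems (SHS) from \cite{hespanha2006modelling} to the particular piecewise-linear SHS described above: constant drift $\mathbf{b}_q$ inside each discrete state, linear reset maps $\mathbf{x}\mapsto\mathbf{x}A_l$ at transitions, and exponential transition clocks with rates $\lambda(l)$. The first step is to record the extended generator of this SHS acting on a test function $\psi(q,\mathbf{x})$,
\begin{equation*}
(L\psi)(q,\mathbf{x})=\frac{\partial\psi(q,\mathbf{x})}{\partial\mathbf{x}}\,\mathbf{b}_q+\sum_{l\in L_q}\lambda(l)\bigl(\psi(q_l',\mathbf{x}A_l)-\psi(q,\mathbf{x})\bigr),
\end{equation*}
together with Dynkin's formula $\frac{d}{dt}\mathbb{E}[\psi(q(t),\mathbf{x}(t))]=\mathbb{E}[(L\psi)(q(t),\mathbf{x}(t))]$.

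Next I would instantiate this identity on two families of test functions. Taking $\psi(q,\mathbf{x})=\delta_{\hat q,q}$, which does not depend on $\mathbf{x}$, kills the drift term and returns the Kolmogorov forward equations $\dot\pi_{\hat q}(t)=\sum_{l\in L'_{\hat q}}\lambda(l)\pi_{q_l}(t)-\pi_{\hat q}(t)\sum_{l\in L_{\hat q}}\lambda(l)$, whose ergodic limit is the stationary distribution $\pi$ appearing in the statement. Taking $\psi(q,\mathbf{x})=x_j\,\delta_{\hat q,q}$ for each coordinate $j\in\{0,1,\dots,n\}$ and stacking the coordinates into the row vector $\mathbf{v}_{\hat q}(t)=\mathbb{E}[\mathbf{x}(t)\delta_{\hat q,q(t)}]$ yields the linear ODE system
\begin{equation*}
\dot{\mathbf{v}}_{\hat q}(t)=\mathbf{b}_{\hat q}\,\pi_{\hat q}(t)-\mathbf{v}_{\hat q}(t)\sum_{l\in L_{\hat q}}\lambda(l)+\sum_{l\in L'_{\hat q}}\lambda(l)\,\mathbf{v}_{q_l}(t)A_l,\qquad\hat q\in\mathcal{Q},
\end{equation*}
where the drift part $\frac{\partial\psi}{\partial\mathbf{x}}\mathbf{b}_q$ contributes $\mathbf{b}_{\hat q}\pi_{\hat q}(t)$ (using $\mathbb{E}[\delta_{\hat q,q(t)}]=\pi_{\hat q}(t)$), the outgoing jumps contribute the negative sink term, and the incoming jumps contribute the reset-weighted terms $\lambda(l)\mathbf{v}_{q_l}(t)A_l$. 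Imposing $\dot{\mathbf{v}}_{\hat q}(t)=0$ and $\pi_{\hat q}(t)=\pi_{\hat q}$ here is precisely the fixed-point system \eqref{eq:yateslemma}.

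It then remains to show that the existence of a \emph{finite}, \emph{non-negative} solution $\{\mathbf{v}_q\}$ of \eqref{eq:yateslemma} forces $\mathbf{v}_{\hat q}(t)\to\mathbf{v}_{\hat q}$ as $t\to\infty$; granting this, $\mathbb{E}[\Delta(t)]=\mathbb{E}[x_0(t)]=\sum_{\hat q}v_{\hat q 0}(t)\to\sum_{\hat q}v_{\hat q 0}$, and the stationary--ergodic identification $\Delta=\lim_{t\to\infty}\mathbb{E}[\Delta(t)]$ completes the proof. I expect this convergence step to be the main obstacle. By ergodicity $\pi_{\hat q}(t)\to\pi_{\hat q}$, so the $\mathbf{v}$-system is an asymptotically autonomous linear system with a convergent forcing term, and one must check that its homogeneous part --- the block-structured linear operator built from the outgoing-rate sinks and the reset-weighted incoming edges --- is Hurwitz, so that the stationary solution is globally attracting and unique. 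The non-negativity hypothesis is exactly what certifies this: since the coordinates of $\mathbf{x}(t)$ are genuine ages, increasing at unit rate and only decreased at resets, a finite non-negative stationary solution can exist only when the discrete chain revisits ``fresh'' configurations often enough to keep the moments $\mathbb{E}[x_j(t)]$ bounded, which is the spectral condition needed. This last point can be made rigorous either by a direct Lyapunov/contraction estimate on the coefficient matrix, or --- as in \cite{yates2018age,hespanha2006modelling} --- by appealing to the general SHS stability theorem, which under ergodicity of the discrete chain guarantees both uniqueness of the non-negative fixed point and convergence of $\mathbf{v}_{\hat q}(t)$ to it.
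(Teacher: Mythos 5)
This lemma is not proved in the paper at all: it is imported verbatim from \cite{yates2018age}, with the remark that it follows from the general SHS machinery of \cite{hespanha2006modelling}. So there is no in-paper proof to compare against; what you have written is essentially a reconstruction of the original derivation in the cited reference. Your outline is the standard and correct one: write down the extended generator for the piecewise-linear SHS, apply Dynkin's formula to the test functions $\delta_{\hat q,q}$ and $x_j\delta_{\hat q,q}$ to obtain the forward equations for $\pi_{\hat q}(t)$ and the linear ODE system for $\mathbf{v}_{\hat q}(t)$, and observe that \eqref{eq:yateslemma} is exactly the stationary point of the latter, so that $\Delta=\lim_t\mathbb{E}[x_0(t)]=\sum_{\hat q}v_{\hat q 0}$ once convergence is established.

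The one genuine gap is the step you yourself flag: the claim that existence of a finite non-negative solution of \eqref{eq:yateslemma} forces $\mathbf{v}_{\hat q}(t)\to\mathbf{v}_{\hat q}$. The physical heuristic you offer (ages grow at unit rate, so finite stationary moments require frequent resets, hence stability) is not a proof, and "non-negativity certifies the Hurwitz property" needs an actual argument. The clean way to close it is the positive-linear-systems route: stacking the $\mathbf{v}_{\hat q}$ into one vector, the homogeneous part of the ODE is a Metzler operator (the reset matrices $A_l$ and the rates $\lambda(l)$ are non-negative, so the off-diagonal couplings are non-negative and the flow preserves the non-negative orthant), and for a Metzler operator the existence of a non-negative equilibrium under the given forcing yields the Hurwitz property and hence global convergence --- with some care needed because the forcing $\mathbf{b}_q\pi_q$ is only non-negative, not strictly positive, in every coordinate. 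This is precisely the content of the stability theorem in the cited references, so deferring to it is legitimate for a cited lemma, but as written your proposal asserts rather than establishes this step.
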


\section{AoI in Homogeneous Networks} \label{sec:homo}
\subsection{Single Source Multiple Servers}
\label{LCFS}
In this section, we present AoI calculation with the LCFS queue for the single-source $n$-server homogeneous network using SHS techniques. 
Note that to compute the average AoI, Lemma \ref{lem:yates} requires solving $|\mathcal{Q}|(n+1)$ linear equations of $ \{{\mathbf{v}}_{{q}}, {q} \in \mathcal{Q} \}$. To obtain explicit solutions for these equations, the complexity grows with the number of discrete states. Since the discrete state typically represents the number of idle servers in the system for homogeneous servers, $|\mathcal{Q}|$ should be $n+1$. In the following, we introduce a method inspired by \cite{yates2018status} to reduce the number of discrete states and efficiently describe the transitions.

We define our continuous state $\mathbf{x}$ at  time $t$ as follows:
the $0$-th element $x_{0}$ is AoI at the monitor, the first element $x_1$ corresponds to the freshest update among all updates in the servers, the second element $x_2$ corresponds to the second freshest update in the servers, etc. With this definition we always have $x_{1} \leq x_{2} \leq .... \leq x_{n}$, for any time $t$. Note that the index $i$ of $x_i$ does not represent a physical server index, but the $i$-th smallest age of information among the $n$ servers. The physical server index for $x_i$ changes with each transition. We say that the server corresponding to $x_i$ is the $i$-th \emph{virtual} server. 

A transition indexed by $l$ is triggered by (i) the arrival of an update at a server, or (ii) the delivery of an update to the monitor. Recall that we use $\mathbf{x}$ and $\mathbf{x}'$ to denote the continuous state of AoI right before and after the transition $l$.

When one update arrives at the monitor and the server delivering the update becomes idle, we introduce a \emph{fake update} to the server using the method introduced in \cite{yates2018status}. 
Thus we can reduce the calculation complexities and only have one discrete state indicating that all servers are virtually busy. We denote this state by $q=0$.
In particular, we put the current update that is in the monitor to an idle server until the next update reaches this server. This assumption does not affect our final calculation for AoI, because even if the fake update is delivered to the monitor, AoI at the monitor does not change. Moreover, serving the fake update does not affect the service of future actual updates because of preemption in service.

When an update is delivered to the monitor from the $k$-th virtual server, the server becomes idle and as previously stated, receives the fake update. The age at the monitor becomes $x'_0=x_k$, and the age at the $k$-th server becomes $x'_k = x'_0=x_k$.
In this scenario, consider the update at the $j$-th virtual server, for $j>k$.  Its delivery to the monitor does not affect AoI since  it is older than the current update of the monitor, i.e., $x_j \ge x_k = x'_0$. 
Hence, we can adopt a \emph{fake preemption} where the update for the $j$-th virtual server, for all $k \le j \le n$, is preempted and replaced with the fake current update at the monitor. Therefore, we set $x'_j = x'_0 = x_k$, $k \le j \le n$.
Physically, these updates are not preempted and as a beneficial result, the servers do not need to cooperate and can work in a distributed manner. 

\begin{figure}
\centering
\includegraphics[width=0.2\textwidth]{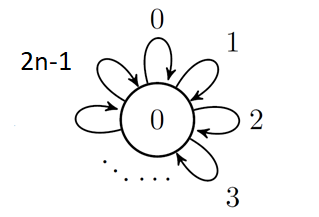} 
    \caption{SHS for our model with $n$ servers.}
   
  \label{fig3}  
\end{figure}

\begin{table}
\centering
\begin{tabular}{ cccccccccc }
 
 $l$ & $\lambda{(l)}$ & $\mathbf{x}^\prime$ =$\mathbf{x}A_{l}$ \\          \hline
 $0$ & $\lambda$ & $[x_{0},0,x_{2},x_{3},x_{4},...,x_{n}]$ \\ \hline
 $1$ & $\lambda$ & $[x_{0},0,x_{1},x_{3},x_{4},...,x_{n}]$ \\ \hline
 $2$ & $\lambda$ & $[x_{0},0,x_{1},x_{2},x_{4},...,x_{n}]$ \\ \hline
 
  & $\vdots$ & $\vdots$ \\  \hline
 $n-1$ & $\lambda$ & $[x_{0},0,x_{1},x_{2},x_{3},..,x_{n-1}]$ \\ \hline
 $n$ & $\mu$ & $[x_{1},x_{1},x_{1},x_{1},...,x_{1}]$ \\  \hline
 $n+1$ & $\mu$ & $[x_{2},x_{1},x_{2},x_{2},...,x_{2}]$ \\  \hline
 $n+2$ & $\mu$ & $[x_{3},x_{1},x_{2},x_{3},...,x_{3}]$ \\  \hline
  & $\vdots$ & $\vdots$ \\  \hline
 $2n-1$ & $\mu$ & $[x_{n},x_{1},x_{2},x_{3},...,x_{n}]$ \\  
 
\end{tabular}
\caption{Table of transformation for the Markov chain in Figure~\ref{fig3}.}
\label{table:table 0}
\end{table}
By utilizing virtual servers, fake updates, and fake preemptions, we reduce SHS to a single discrete state with linear transition described by matrix $A_l$, $l \in [0:2n-1]$.
We illustrate our SHS with discrete state space of $Q=\{0\}$ in Figure \ref{fig3}. The stationary distribution ${\pi}_{0}$ is trivial and ${\pi}_{0}=1$. We set $\mathbf{b}_{q}=(1,...,1)$ which indicates that the age at the monitor and the age of each update in the system grows at a unit rate. The transitions are labeled $l  \in \{0,1,...,2n-1\}$ and for each transition $l$ we list the transition rate and the transition mapping in Table \ref{table:table 0}. For simplicity, we drop the index $q=0$ in the vector $\mathbf{v}_0$, and write it as $\mathbf{v}=(v_0,v_1,\dots,v_n)$. Because we have one state, $\mathbf{x}A_{l}$ and $\mathbf{v}A_{l}$ are in correspondence. Next, we describe the transitions in Table \ref{table:table 0}. 

{\bf Case I.} $l=0,1,..,n-1:$ When a fresh update arrives at virtual server $l+1$, the age at the monitor remains the same and $x_{l+1}$ becomes zero. This server has the smallest age, so we take this zero and reassign it to the first virtual server, namely, $x^\prime_{1}=0$. In fact virtual Servers $1,2,\dots,l+1$ all get reassigned virtual server numbers. Specifically, after transition $l$, virtual server $l+1$ becomes virtual Server $1$, virtual Server $1$ becomes virtual Server $2$,..., and virtual Server $l$ becomes virtual Server $l+1$. The transition rate is the arrival rate of the update, $\lambda$. The matrix $A_l$ is

\vspace{-0.35cm}
\begin{footnotesize}
\begin{align}
    \bordermatrix{
~	&	0	&	1	&	2	&	\dots	&	l+1	&	l+2	&	\dots	&	n	\cr
0	&	1	&		&		&		&		&		&		&		\cr
1	&		&	0	&	1	&		&		&		&		&		\cr
\vdots	&		&		&		&	\ddots	&		&		&		&		\cr
l	&		&		&		&		&	1	&		&		&		\cr
l+1	&		&		&		&		&		&	0	&		&		\cr
l+2	&		&		&		&		&		&	1	&		&		\cr
\vdots	&		&		&		&		&		&		&	\ddots	&		\cr
n	&		&		&		&		&		&		&		&	1	\cr
    }.
\end{align}
\end{footnotesize}

\vspace{-0.35cm}
\noindent{\bf Case II.} $l=n,n+1,..,2n-1:$ When an update is received at the monitor from virtual Server $l+1-n$, the age at the monitor changes to $x_{l+1-n}$ and this server becomes idle. Using fake updates and fake preemption we assign $x'_{j}=x_{l+1-n}$, for all $l+1-n \le j \le n$. The transition rate is the service rate of a server, $\mu$. The matrix $A_l$ is

\vspace{-0.25cm}
\begin{footnotesize}
\begin{align}
\bordermatrix{
~	&	0	&	1	&			\dots	&	l-n	&	l+1-n	&	\dots	&	n	\cr
0	&	0	&		&				&		&		&		&		\cr
1	&		&	1	&				&		&		&		&		\cr
\vdots	&		&		&			\ddots	&		&		&		&		\cr
l-n	&		&		&				&	1	&		&		&		\cr
l+1-n	&	1	&	0	&			\dots	&	0	&	1	&	\dots	&	1	\cr
l+2-n	&	0	&		&			\dots	&	\dots	&	\dots	&	\dots	&	0	\cr
\vdots	&	\vdots	&		&				&		&		&		&	\vdots	\cr
n	&	0	&		&			\dots	&	\dots	&	\dots	&	\dots	&	0	\cr
}.
\end{align} 
\end{footnotesize}

\vspace{-0.25cm}
\noindent	Below we state our main theorem on the average AoI for the single-source $n$-server network.

\begin{theorem} \label{theory}
Define $\rho = \frac{\lambda}{\mu}$. The average age of information at the monitor for a homogeneous single-source $n$-server network where each server has a LCFS queue is:

\begin{align}
\label{AoI_single_source}
&\Delta
 =
  \frac{1}{\mu} \left[ 
\frac{1}{n \rho} \sum\limits_{j = 1}^{n-1}  \prod\limits_{i=1}^{j } \frac{\rho (n-i+1)}{i    + (n-i) \rho}+ 
 \frac{1}{n \rho}
+
\frac{1}{n^2  } \prod\limits_{i=1}^{n-1} \frac{\rho (n-i+1)}{i    + (n-i) \rho} 
\right].
\end{align}
\noindent 
\end{theorem}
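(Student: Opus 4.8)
The plan is to apply Lemma~\ref{lem:yates} to the single-state SHS just constructed. Since $\mathcal{Q}=\{0\}$, $\pi_0=1$, $\mathbf{b}_0=(1,\dots,1)$, and the $2n$ transitions of Table~\ref{table:table 0} have total outgoing rate $n\lambda+n\mu$, equation~\eqref{eq:yateslemma} collapses to the single vector identity $(n\lambda+n\mu)\mathbf{v}=(1,\dots,1)+\lambda\sum_{l=0}^{n-1}\mathbf{v}A_l+\mu\sum_{l=n}^{2n-1}\mathbf{v}A_l$, with $\Delta=v_0$. First I would read off the scalar equations from the maps $\mathbf{x}\mapsto\mathbf{x}A_l$ in the table: the $0$-th component gives $v_0=\frac{1}{n\mu}+\frac{1}{n}\sum_{k=1}^{n}v_k$; the first component gives $v_1=\frac{1}{n\lambda}$; and for $2\le j\le n$, bookkeeping of how many of the $2n$ transitions carry entry $j$ to $x_{j-1}$, to $x_j$, or into the block replaced by some $x_k$, yields $[\lambda(n-j+1)+\mu(j-1)]\,v_j=1+\lambda(n-j+1)\,v_{j-1}+\mu\sum_{k=1}^{j-1}v_k$.

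The heart of the argument is solving this recursion. Because of the partial-sum term $P_{j-1}:=\sum_{k=1}^{j-1}v_k$ it is non-local, so I would subtract the $j$-th equation from the $(j+1)$-th to eliminate $P_{j-1}$; the resulting second-order relation, after cancellation, factors as $v_{j+1}-v_j=\frac{\rho(n-j+1)}{j+(n-j)\rho}\,(v_j-v_{j-1})$ for $2\le j\le n-1$. Combined with the base value $v_2-v_1=\frac{1}{\lambda(n-1)+\mu}$ (read off from the $j=2$ equation and $v_1=\frac{1}{n\lambda}$), and with the observation that $\frac{\rho(n-i+1)}{i+(n-i)\rho}=a_i/a_{i-1}$ where $a_j:=\prod_{i=1}^{j}\frac{\rho(n-i+1)}{i+(n-i)\rho}$ and $a_0:=1$, this telescopes to $v_j-v_{j-1}=\frac{a_{j-1}}{n\lambda}$, hence $v_j=\frac{1}{n\lambda}\sum_{m=0}^{j-1}a_m$ for all $1\le j\le n$.

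It then remains to assemble $v_0$. Interchanging the order of summation gives $\sum_{j=1}^{n}v_j=\frac{1}{n\lambda}\sum_{m=0}^{n-1}(n-m)a_m$, so $v_0=\frac{1}{n\mu}+\frac{1}{n^2\lambda}\left(n\sum_{m=0}^{n-1}a_m-\sum_{m=1}^{n-1}m\,a_m\right)$. The last sum is dispatched by a second telescoping: the defining relation for $a_m$ rearranges to $m\,a_m=\rho\left[(n-m+1)a_{m-1}-(n-m)a_m\right]$, so writing $b_m:=(n-m)a_m$ one gets $\sum_{m=1}^{n-1}m\,a_m=\rho(b_0-b_{n-1})=\rho(n-a_{n-1})$. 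Substituting this and using $\rho/\lambda=1/\mu$, the two $\frac{1}{n\mu}$ contributions cancel and one is left with $v_0=\frac{1}{n\rho\mu}\sum_{m=0}^{n-1}a_m+\frac{1}{n^2\mu}a_{n-1}$, which is exactly \eqref{AoI_single_source} once the $a_0=1$ term is separated out.

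I expect the main obstacle to be spotting the factorization of the second-order recursion in the second step: without it one is stuck with an inhomogeneous, non-local recursion and no evident route to a product formula. The two telescoping identities (for $a_j/a_{j-1}$ and for $\sum_m m\,a_m$) are the other points where something must be recognized rather than merely computed; everything else is routine linear algebra and algebraic simplification, which I would sanity-check against the $n=1,2,3$ instances.
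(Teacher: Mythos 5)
Your proposal is correct and follows essentially the same route as the paper: the single-state SHS with the coordinate equations $v_0=\frac{1}{n\mu}+\frac{1}{n}\sum_k v_k$ and $v_1=\frac{1}{n\lambda}$, the differencing of consecutive equations to obtain the first-order recursion for $w_j=v_j-v_{j-1}$, and the resulting product formula are exactly the paper's steps \eqref{eq3}--\eqref{w_j}. The only (immaterial) divergence is in the final assembly, where you evaluate $\sum_j v_j$ by a second telescoping of $\sum_m m\,a_m$ rather than, as the paper does, substituting the $i=n$ equation to express $\frac{1}{n}\sum_i v_i$ in terms of $v_n$ and $w_n$; both yield \eqref{AoI_single_source}.
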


\begin{proof}

Recall that $\mathbf{v}$ denotes the vector  $\mathbf{v}_0$ for the single state $q=0$.
By Lemma \ref{lem:yates} and the fact that there is only one state,
we need to calculate the vector $\mathbf{v}$ as a solution to \eqref{lem:yates}, and the $0$-th coordinate $v_{0}$ is the average AoI at the monitor. As we mentioned $\mathbf{v}A_{l}$ is in correspondence with $\mathbf{x}A_{l}$, so we have:

\vspace{-.4cm}
\begin{small}
\begin{align}
 (n\lambda+n\mu){\mathbf{v}}= & \quad \quad (1,1,1,1,1,1,1,...,1) \nonumber \\
   &+\lambda(v_{0},0,v_{2},v_{3},v_{4},...,v_{n})  \nonumber \\ 
   &+\lambda(v_{0},0,v_{1},v_{3},v_{4},...,v_{n})  \nonumber \\
   &+\lambda(v_{0},0,v_{1},v_{2},v_{4},...,v_{n})  \nonumber \\
   & \quad \quad \quad \quad \vdots \quad \quad \quad \quad    \quad \vdots  \nonumber \\
   &+\lambda(v_{0},0,v_{1},v_{2},v_{3},...,v_{n-1})  \nonumber \\
   &+\mu(v_{1},v_{1},v_{1},v_{1},v_{1},...,v_{1})   \nonumber \\
   &+\mu(v_{2},v_{1},v_{2},v_{2},v_{2},...,v_{2})   \nonumber \\
   &+\mu(v_{3},v_{1},v_{2},v_{3},v_{3},...,v_{3})   \nonumber \\
   & \quad \quad \quad \quad \vdots \quad \quad \quad \quad    \quad \vdots  \nonumber \\
   &+\mu(v_{n},v_{1},v_{2},v_{3},...,v_{n-1},v_{n})   \label{eq:main}.
\end{align}  
\end{small}

\vspace{-.3cm}
\noindent From the $0$-th coordinate of \eqref{eq:main}, we have $(n\lambda+n\mu)v_{0}= 1+n\lambda v_{0} + \mu \sum_{j=1}^{n} v_{j}$, implying
\begin{align}
  v_{0}= \frac{1}{n\mu} + \frac{\sum_{j=1}^{n} v_{j}}{n} \label{eq3}.
\end{align}
From the $1$-st coordinate of \eqref{eq:main}, it follows  that 
$v_{1}~=~\frac{1}{n\lambda}$.
Then, to calculate $v_{0}$, we have to calculate $v_{i}$ for $i \in \{2,...,n\}$. From the $i$-th coordinate of \eqref{eq:main},

\vspace{-0.35cm}
\begin{small}
\begin{align}
((n-i+1)\lambda+(i-1)\mu) v_{i} =   1+ \mu \sum_{j=1}^{i-1} v_{j} +\lambda (n-i+1) v_{i-1}.
\label{w2}
\end{align}
\end{small}

\vspace{-0.4cm}
\noindent	For $i \in \{2,3,...,n-1\} $, from \eqref{w2}, we obtain
\begin{align*}
(i\mu +(n-i)\lambda)(v_{i+1}-v_{i}) = \lambda (n-i+1) (v_{i}-v_{i-1}) .
\end{align*} 
Hence,
$w_{i+1} \triangleq v_{i+1}-v_{i} 
=\frac{\lambda (n-i+1)}{(i\mu +(n-i)\lambda)} w_i$.
\noindent Setting $i=2$ in \eqref{w2}, we have
\begin{align}
    ((n-1)\lambda+\mu)v_{2}=1+ \mu v_{1} +\lambda (n-1)v_{1}.
    \label{w_2_source}
\end{align}
Simplifying \eqref{w_2_source}, we obtain 
 $w_{2}=v_{2}-v_{1}=\frac{1}{(n-1)\lambda+\mu}$. Therefore, we write
 \begin{align}
w_j &= \frac{1}{n \lambda} \prod\limits_{i=1}^{j-1} \frac{\lambda (n-i+1)}{i \mu  + (n-i) \lambda}, 2 \le j \le n.
\label{w_j}
 \end{align}
Finally, setting $i=n$ in \eqref{w2}, 
\begin{align}
   (\lambda+ (n-1)\mu)v_{n}=  1+ \mu \sum_{j=1}^{n-1} v_{j}+ \lambda v_{n-1} ,
\end{align}
implying
 $   \mu  \sum_{i=1}^{n} v_{i}
 =   \mu \sum_{j=1}^{n-1} v_{j}+\mu v_{n}  
 =   (\lambda+ (n-1)\mu)v_{n} + \mu v_{n} -1 - \lambda v_{n-1}.
$
Hence,
\begin{align}\label{eq2}
 \frac{1}{n} \sum_{i=1}^{n} v_{i} = \frac{\lambda}{n\mu} w_{n} +v_{n}-\frac{1}{n\mu}.
\end{align}
Combining \eqref{eq3} and \eqref{eq2}, we obtain the average AoI as
\begin{align*}
AoI = v_0=v_{n} +\frac{\lambda}{n\mu} w_{n} =\sum_{j=2}^{n} w_{j} +\frac{1}{n\lambda} +\frac{\lambda}{n\mu} w_{n},
\end{align*}
which is simplified to \eqref{AoI_single_source} using \eqref{w_j}.
\end{proof}
Figure \ref{fig4} shows the average AoI when the total arrival rate $n\lambda$ is fixed and the number of servers $n$ varies among $1,2,3,4,10$. We observe that for up to $4$ servers,  a significant decrease in the average AoI occurs with the increase of $n$. However, increasing the number of servers beyond $4$ provides only a negligible decrease in AoI. 
\begin{figure}
\centering
\includegraphics[width = 0.43\textwidth]{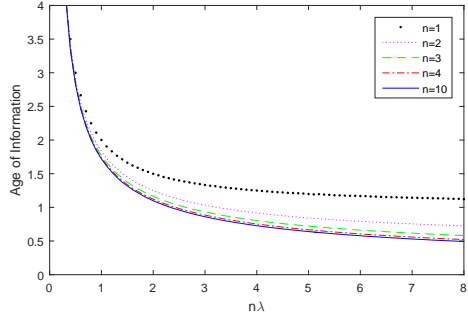} 
\caption{The average AoI versus the number of servers, for fixed total arrival rate. For each server, the service rate $\mu=1$ and the total arrival rate $n\lambda$ is shown in the x-axis.}
\label{fig4}
\end{figure}

\begin{figure}\label{lam_con}
\centering
\includegraphics[width = 0.43\textwidth]{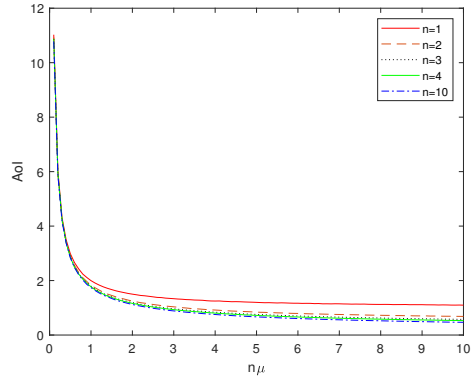} 
\caption{AoI versus the number of servers. For each server, the arrival rate is $\lambda=1$ and the total service rate $n\mu$ is shown in the x-axis. }
\end{figure}

In Figure \ref{fig6}, LCFS (with preemption in service), LCFS with preemption in waiting, and FCFS queue models are compared numerically. Preemption in waiting means that when a new update arrives, we drop any old updates that have not been served. As can be seen from the figure, LCFS outperforms the other two queue  models, which coincides with the intuition that exponential service time is memoryless and older updates in service should be preempted. Moreover, we observe that the optimal arrival rate for FCFS queue is approximately $0.5$ for all $n \le 50$, shown in Table \ref{table:table 2}. 
\begin{table}[]
    \centering
    \begin{tabular}{|c|c|c|c|c|c|c|}
        \hline 
        n & $1$ & $2$&  $4$ & $10$ & $50$ \\ \hline 
        $\lambda^{*}$ & $0.5$ & $0.5$& $0.525$ & $0.53$ & $0.529$  \\ \hline
    \end{tabular}
    \caption{Optimal individual arrival rate for FCFS queue, $\mu=1$.}


    \label{table:table 2}
\end{table}
\begin{figure}
\centering
\includegraphics[width=0.45\textwidth]{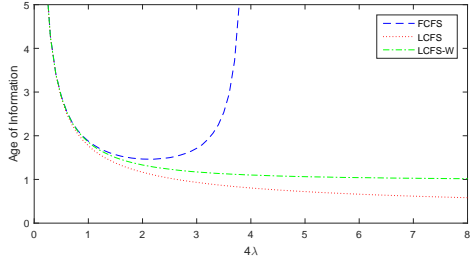}
\caption{Comparison of the average AoI under LCFS, FCFS, and LCFS with preemption in waiting (LCFS-W). The number of servers is $n=4$ and the service rate is $\mu=1$ for each server.}
\label{fig6}
\end{figure}
\subsection{Multiple Sources Multiple Servers}
\label{multiple}
In this subsection, we present the average AoI with the LCFS queue for the $m$-source $n$-server homogeneous network. The arrival rate of Source $i$ at any server is $ \lambda_{j}^{(i)}= \lambda^{(i)} $, for all $i~\in~[m], j~\in ~[n]$. The arrival rate of the sources other than Source $i$ is $\overline{\lambda^{(i)}} ~\triangleq~\sum_{i' \neq i} \lambda^{(i')},  i~  \in~[m]$. The service rate at any server is $\mu$. Our goal is to compute $\Delta_{i}$, the average AoI at the monitor for Source $i$, $i \in [m]$. Without loss of generality, we calculate $\Delta_1$ for Source $1$. In the queue model, upon arrival of a new update from any source, each server immediately drops any previous update in service regardless of its source and starts to serve the new update.  

The continuous state $\mathbf{x}$ represents the age for Source $1$, and similar to the single-source case, it is defined as follows:
$x_0$ is AoI of Source $1$ at the monitor, $x_1$ is the age of the freshest update among all updates of Source $1$ in the servers, $x_2$ corresponds to the second freshest update in the servers, etc. Therefore $x_{1}   \leq x_{2} \leq .... \leq x_{n}$, for any time $t$. 
Using fake updates and fake preemption as explained in Section~\ref{LCFS}, we obtain an SHS with a single discrete state and $3n$ transitions described below: 

{\bf Case I.}	$l \in [0: n-1]$: A fresh update arrives at virtual Server $l+1$ from Source $1$. This update is the freshest update, so $x_1' = 0$. Now, the previous freshest update becomes the second freshest update, that is $x_2' = x_1$, and so on. Then $\mathbf{x}' = (x_0,0, x_1, \ldots,x_{l}, x_{l+2}, \ldots,x_{n } )$. The transition rate is~$\lambda^{(1)}$.

{\bf Case II.}	$l \in [n: 2n-1]$: A fresh update arrives at virtual Server $l' \triangleq l+1-n$ from Source $i \neq 1$. The age at the monitor does not change, namely, $x_0' = x_0$. The $l'$-th freshest update is preempted. Moreover, since the virtual Server $l'$ drops the update for the source of interest (Source $1$), with fake update, the $l'$-th virtual server becomes the $n$-th virtual server with age $x_0$. Therefore, we have
$\mathbf{x}' = (x_0,x_1,   \ldots, x_{l' -1}, x_{l' +1} \ldots , x_{n}, x_0)$. The transition rate is $\overline{\lambda^{(1)}}$.

{\bf Case III.} $l \in [2n: 3n-1]$: the update of Source $1$ in virtual Server $h \triangleq l+1-2n $ is delivered. The age $x_0$ is reset to $x_{h}$ and the virtual Server $h$ becomes idle. Using fake update and fake preemption, we reset $x_l' = x_h, h \le j \le  n$. The transition rate is $\mu$.

\begin{algorithm}
\begin{algorithmic}
\State \textbf{Input:} $n, \lambda^{(1)},\overline{\lambda^{(1)}}, \mu$
\State \textbf{Output:} $\Delta_1$ 

\State {\bf Part 1. Base case for $v_1.$}
\State
$c_2=\frac{n \lambda^{(1)}}{\overline{\lambda^{(1)}}}, d_2 = \frac{-1}{\overline{\lambda^{(1)}}}$ and $c_3 = \frac{n \lambda^{(1)} ((n-1)\lambda^{(1)}+\mu)}{2 \overline{\lambda^{(1)}}^2}, d_3 = -\frac{1}{2\overline{\lambda^{(1)}}}-\frac{((n-1)\lambda^{(1)}+\mu)}{2 \overline{\lambda^{(1)}}^2}$.
       
        \For{$j=4:n+1$}
            \State $c_j=  \frac{(n-j+2)\lambda^{(1)}+(j-2)\overline{\lambda^{(1)}}+(j-2)\mu}{(j-1)\overline{\lambda^{(1)}}} c_{j-1}  -\frac{\lambda^{(1)}(n-j+3)}{(j-1)\overline{\lambda^{(1)}}}c_{j-2}$
            \State $d_j=\frac{(n-j+2)\lambda^{(1)}+(j-2)\overline{\lambda^{(1)}}+(j-2)\mu}{(j-1)\overline{\lambda^{(1)}}} d_{j-1}-\frac{\lambda^{(1)}(n-j+3)}{(j-1)\overline{\lambda^{(1)}}}d_{j-2}$
        \EndFor
\State
$  v_1 = \frac{\frac{1}{n\mu} - \sum_{j=2}^{n+1} d_j (\frac{j-1}{n})}{\sum_{j=2}^{n+1} c_j (\frac{j-1}{n})}$  

\State {\bf Part 2. Recursion for $v_j, j \ge 2$:}
        \For{$j=2:n$} 
            \State $a_j=1$,
            $b_j= -v_{j-1}$
            \For{$k=j:n$}
                \State $a_{k+1}= \frac{n-k+1}{k \overline{\lambda^{(1)}}} a_k+
             \frac{\mu}{k \overline{\lambda^{(1)}}} \sum_{l=2}^{k} (l-1)a_l$,
                \State $b_{k+1}=\frac{-1}{k \overline{\lambda^{(1)}}}+ \frac{n-k+1}{k \overline{\lambda^{(1)}}} b_k+
            \frac{\mu}{k \overline{\lambda^{(1)}}} \sum_{l=2}^{k} (l-1)b_l$
            \EndFor
        \State $v_j=\frac{ \frac{1}{n \mu} +\frac{\sum_{i=1}^{j-1} v_i}{n}-\sum_{i=j+1}^{n+1} b_i (\frac{i-1}{n})}{ \frac{j-1}{n} + \sum_{i=j+1}^{n+1} a_i (\frac{i-1}{n})}$\\
        \EndFor
    \State \textbf{return} $\Delta_1=v_0 = \frac{1}{n \mu} + \frac{\sum_{i=1}^{n} v_{i}}{n}$ 
    \end{algorithmic}
    \caption{The average AoI $\Delta_1$ of Source $1$ for the multiple-source $n$-server homogeneous network.}
    \label{alg:homo}
\end{algorithm}

\begin{theorem}\label{thm:homo_general}
Consider the $m$-source $n$-server homogeneous network, for $n \ge 3$.
The average AoI for Source $1$ can be computed in a recursive manner as in Algorithm \ref{alg:homo}. 
\end{theorem}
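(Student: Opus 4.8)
The plan is to run exactly the SHS argument behind Theorem~\ref{theory}, but now carrying the cross‑traffic rate $\overline{\lambda^{(1)}}$, and then to show that the linear system it produces is solved term‑for‑term by Algorithm~\ref{alg:homo}. As in Section~\ref{LCFS} there is a single discrete state $q=0$, so $\pi_0=1$, with $\mathbf b_0=(1,\dots,1)$ and the $3n$ transitions of Cases~I--III above; the total outgoing rate is $n(\lambda^{(1)}+\overline{\lambda^{(1)}}+\mu)$. Writing $\mathbf v=(v_0,\dots,v_n)$ for $\mathbf v_0$ and using that $\mathbf v A_l$ corresponds to $\mathbf x A_l$, Lemma~\ref{lem:yates} reduces everything to solving $n(\lambda^{(1)}+\overline{\lambda^{(1)}}+\mu)\,\mathbf v=(1,\dots,1)+\lambda^{(1)}\sum_{l\in[0:n-1]}\mathbf v A_l+\overline{\lambda^{(1)}}\sum_{l\in[n:2n-1]}\mathbf v A_l+\mu\sum_{l\in[2n:3n-1]}\mathbf v A_l$ and reading off $\Delta_1=v_0$. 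I would expand this identity one coordinate at a time; the substance of the proof is the bookkeeping of how many transitions move each entry where.

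The $0$‑th coordinate is immediate: only the Case~III transitions place a nonzero value in position $0$ (transition $2n{-}1{+}h$ contributes $v_h$), which gives $n\mu v_0=1+\mu\sum_{h=1}^n v_h$, i.e.\ $v_0=\tfrac1{n\mu}+\tfrac1n\sum_{i=1}^n v_i$ --- the last line of the algorithm. For a \emph{bulk} coordinate $1\le i\le n-1$, counting contributions (Case~I sends coordinate $i$ to $x_{i-1}$ in $n{-}i{+}1$ transitions and to $x_i$ in $i{-}1$ of them, and to $0$ when $i=1$; Case~II sends it to $x_{i+1}$ in $i$ transitions, to $x_i$ in $n{-}i$ of them, and to $x_0$ when $i=n$; Case~III sends it to $x_i$ in $n{-}i$ transitions and to $x_h$, $h=1,\dots,i$, in the rest) and cancelling the common $v_i$ terms yields, for $2\le i\le n-1$,
\[
i\,\overline{\lambda^{(1)}}\,v_{i+1}=\bigl[(n{-}i{+}1)\lambda^{(1)}+i(\overline{\lambda^{(1)}}{+}\mu)\bigr]v_i-(n{-}i{+}1)\lambda^{(1)}v_{i-1}-\mu\sum_{h=1}^{i}v_h-1 ,
\]
with the degenerate $i=1$ case (no Case~I contribution) collapsing to $\overline{\lambda^{(1)}}v_2=(n\lambda^{(1)}{+}\overline{\lambda^{(1)}})v_1-1$, i.e.\ $w_2\triangleq v_2-v_1=\tfrac{n\lambda^{(1)}}{\overline{\lambda^{(1)}}}v_1-\tfrac1{\overline{\lambda^{(1)}}}$. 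Writing everything in the increments $w_j=v_j-v_{j-1}$: the $i=2$ instance of the boxed relation gives $w_3=\tfrac{(n{-}1)\lambda^{(1)}+\mu}{2\overline{\lambda^{(1)}}}w_2-\tfrac1{2\overline{\lambda^{(1)}}}$ (the $-1$ survives because no subtraction has been done), and subtracting the $(i{-}1)$‑st boxed relation from the $i$‑th for $i\ge3$ collapses the cumulative sums ($\sum_{h=1}^i v_h-\sum_{h=1}^{i-1}v_h=v_i$) and cancels the $-1$'s, leaving the \emph{homogeneous} second‑order recursion $w_j=\tfrac{(n{-}j{+}2)\lambda^{(1)}+(j{-}2)(\overline{\lambda^{(1)}}{+}\mu)}{(j{-}1)\overline{\lambda^{(1)}}}w_{j-1}-\tfrac{(n{-}j{+}3)\lambda^{(1)}}{(j{-}1)\overline{\lambda^{(1)}}}w_{j-2}$ for $j\ge4$. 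Since every $w_j$ is affine in the still‑unknown $v_1$, I would propagate its $v_1$‑coefficient $c_j$ and its constant $d_j$ separately; this is precisely Part~1 of Algorithm~\ref{alg:homo} (and one checks the stated $c_2,d_2,c_3,d_3$ directly).

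To pin down $v_1$ I would use the $i=n$ coordinate, the \emph{closing} equation: Case~I maps coordinate $n$ to $x_{n-1}$ once and to $x_n$ otherwise, Case~II injects $n$ copies of $x_0$, and Case~III contributes the full sum $\sum_{h=1}^n v_h$; substituting $\sum_{h=1}^n v_h=nv_0-\tfrac1\mu$ from the $0$‑th coordinate simplifies it to $[\lambda^{(1)}+n(\overline{\lambda^{(1)}}{+}\mu)]v_n=\lambda^{(1)}v_{n-1}+n(\overline{\lambda^{(1)}}{+}\mu)v_0$. If one defines $v_{n+1}:=v_n+w_{n+1}$ by running the $w$‑recursion one step past $n$ (this is why the loop in Part~1 ends at $j=n+1$), a short substitution shows this closing equation is exactly $v_{n+1}=v_0$. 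An Abel summation, $\sum_{j=2}^{n+1}\tfrac{j-1}{n}w_j=v_{n+1}-\tfrac1n\sum_{j=1}^n v_j=v_{n+1}-v_0+\tfrac1{n\mu}$, then turns $v_{n+1}=v_0$ into $\sum_{j=2}^{n+1}\tfrac{j-1}{n}(c_jv_1+d_j)=\tfrac1{n\mu}$, whose solution for $v_1$ is exactly the value output at the end of Part~1. With $v_1$ fixed every $v_j$ is determined; Part~2 is a back‑substitution that re‑derives $v_j,\dots,v_n$ from the same coordinate equations (now treating $v_1,\dots,v_{j-1}$ as known) and finally returns $\Delta_1=v_0$ through the $0$‑th coordinate. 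Since $\mathbf v_0=\mathbb E[\mathbf x(t)\delta_{0,q(t)}]\ge0$ is an honest non‑negative solution of this non‑singular system, Lemma~\ref{lem:yates} applies and the algorithm's output is the average AoI.

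The hard part is clerical rather than conceptual: getting every coefficient count right across all $n+1$ coordinate equations --- especially the two boundary coordinates $i=1$ and $i=n$, which deviate from the bulk pattern --- and then verifying, line by line, that the index‑shifted recursions in the algorithm (including the \emph{phantom} index $n+1$ and the nested loop of Part~2) coincide with this linear system together with its single closing constraint. The one genuinely clarifying observation, which makes the compact $v_1$‑formula drop out, is that the closing equation is nothing more than $v_{n+1}=v_0$ for the recursion extended by one step.
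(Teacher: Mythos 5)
Your derivation of the linear system and of the recursions behind Parts~1 and~2 of Algorithm~\ref{alg:homo} follows essentially the same route as the paper: the same single-state SHS with $3n$ transitions, the same coordinate equations (your ``boxed'' relation is a rearrangement of the paper's equation \eqref{p}, and your $i=1$ case is \eqref{eq:v1}), the same second-order recursion for $w_j=v_j-v_{j-1}$, the same affine representation $w_j=c_jv_1+d_j$, and the same closing identity $\sum_{j=2}^{n+1}\tfrac{j-1}{n}w_j=\tfrac{1}{n\mu}$ obtained by comparing the two expressions for $v_0$ (the paper's \eqref{eq:28}--\eqref{eq:29}, with $v_{n+1}\triangleq v_0$). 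All of that checks out.

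The gap is in your last step. Lemma~\ref{lem:yates} is a one-directional statement: \emph{if} one can exhibit a non-negative solution of \eqref{eq:yateslemma}, \emph{then} the average AoI equals $\sum_q v_{q0}$. You instead argue that $\mathbf v_0=\mathbb E[\mathbf x(t)\delta_{0,q(t)}]\ge 0$ is ``an honest non-negative solution of this non-singular system,'' which inverts the logic of the lemma: the convergence of that expectation (i.e., stability of the age process) is precisely what the existence of a finite non-negative solution is meant to certify, and you have not proved it independently; nor have you shown the system is non-singular. Separately, even granting existence, the algorithm itself is only well defined if its denominators do not vanish --- you need $\sum_{j=2}^{n+1}\tfrac{j-1}{n}c_j\neq 0$ for the $v_1$ formula and $\tfrac{j-1}{n}+\sum_{i>j}\tfrac{i-1}{n}a_i\neq 0$ in Part~2 --- and you assert this nowhere. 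This is exactly where the paper spends its effort: a double induction showing $c_i>0,\ d_i<0$ (via the identity $1+i\overline{\lambda^{(1)}}w_{i+1}=(n-i+1)\lambda^{(1)}w_i+\mu\sum_{k<i}(v_i-v_k)$, whose right side telescopes into previously controlled $w_j$'s) and then $a_i>0,\ b_i<0$ in the recursive step. Those sign conditions simultaneously give non-vanishing positive denominators, positivity of each $v_j$, and hence the non-negative solution that Lemma~\ref{lem:yates} requires. Without some version of that argument your proof is incomplete.
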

\begin{proof}
By applying Lemma \ref{lem:yates} and dropping the index $q=0$, the system of equations for $\mathbf{v}_0=\mathbf{v}=(v_0,v_1,\dots,v_n)$ becomes: 
\begin{align} \label{multi-source}
(n \lambda_1 + n \overline{\lambda^{(1)}} + n \mu) (v_0, v_1,\ldots, v_n)&=  
(1,1,1,\ldots,1,1,1,1)  \nonumber \\
&+ \lambda_1 (v_0, 0, v_2, v_3, \ldots, v_n) \nonumber\\
&+ \lambda_1 (v_0, 0, v_1, v_3, \ldots, v_n) \nonumber\\
&+ \lambda_1 (v_0, 0, v_1, v_2, \ldots, v_n) \nonumber\\
  &  \qquad \vdots \nonumber \\
&+ \lambda_1 (v_0, 0, v_1, v_2, \ldots, v_{n-1}) \nonumber\\
&+ \overline{\lambda^{(1)}} (v_0, v_2, v_3,   \ldots, v_{n}, v_0)\nonumber\\
&+ \overline{\lambda^{(1)}} (v_0, v_1, v_3,   \ldots, v_{n}, v_0)\nonumber\\
&+ \overline{\lambda^{(1)}} (v_0, v_1, v_2,   \ldots, v_{n}, v_0)\nonumber\\
& \qquad \vdots \nonumber \\
&+ \overline{\lambda^{(1)}}  (v_0, v_1, v_2,   \ldots, v_{n-1}, v_0)\nonumber\\
&+ \mu (v_1, v_1, v_1, v_1, \ldots, v_{1}) \nonumber\\
&+ \mu (v_2, v_1, v_2, v_2, \ldots, v_{2})\nonumber\\
&+ \mu (v_3, v_1, v_2, v_3, \ldots, v_{3})\nonumber\\
& \qquad \vdots  \nonumber\\
&+ \mu [v_n, v_1, v_2, v_3, \ldots, v_{n}].
\end{align}
To find the average AoI ($\Delta_1 = v_0$) we need to solve the system of equations in \eqref{multi-source}, and prove that the solution to $v_i$,  $0 \le i \le n,$ is positive. 
Equations in \eqref{multi-source} are equivalent to 
\begin{align}
n \mu v_{0} &= 1 + \mu \sum_{i=1}^{n} v_{i} , \label{v0}\\
v_{1}(\overline{\lambda^{(1)}}+n\lambda^{(1)})&=1+ \overline{\lambda^{(1)}}v_{2}. \label{eq:v1}
\end{align}
And for $2 \le i \le n$,
\begin{align}
n (\lambda + \mu ) v_i &= 1 + (i-1) \lambda^{(1)} v_i + (n-i+1) \lambda^{(1)} v_{i-1} + i \overline{\lambda^{(1)}} v_{i+1} +
 (n-i) \overline{\lambda^{(1)}} v_i + \mu \sum\limits_{j=1}^{i-1} v_j + (n-i+1) \mu v_i ,
\label{equiv_system}
\end{align}
where $v_{n+1} \triangleq v_0$ and $\lambda=\overline{\lambda^{(1)}}+\lambda^{(1)}=\sum_{i=1}^{n} \lambda_i$.

Let us rewrite the equations using the difference of adjacent $v_j$'s.
From \eqref{equiv_system}, we have for $2 \le i \le n$,
\begin{align}\label{p}
 ((n-i+1)\lambda^{(1)}+i\overline{\lambda^{(1)}}+(i-1)\mu) v_{i} =  1+\lambda^{(1)}(n-i+1)v_{i-1}+
 i\overline{\lambda^{(1)}}v_{i+1} + \mu \sum_{j=1}^{i-1} v_{j} . 
 \end{align}
We plug in $i+1$ in \eqref{p} and subtract the resulting equation from Equation \eqref{p}. Therefore,
\begin{align} \label{main}
&((n-i)\lambda^{(1)}+i\overline{\lambda^{(1)}}+i\mu) (v_{i+1}-v_{i})  \nonumber \\
 =& \lambda^{(1)}(n-i+1) (v_{i}-v_{i-1}) + (i+1)\overline{\lambda^{(1)}} (v_{i+2}-v_{i+1})
\end{align}
Let us define
$w_{i}=v_{i}-v_{i-1}$ ($2 \leq i \leq n-1$), then we have:
\begin{align*}
& ((n-i)\lambda^{(1)}+i\overline{\lambda^{(1)}}+i\mu) w_{i+1} = \lambda^{(1)}(n-i+1) w_{i} + (i+1)\overline{\lambda^{(1)}} w_{i+2}  
\end{align*}
Define for each $i \in \{2,...,n-1\}$, coefficients $r_{i+2}= \frac{(n-i)\lambda^{(1)}+i\overline{\lambda^{(1)}}+i\mu}{(i+1)\overline{\lambda^{(1)}}}$ and $t_{i+2}= -\frac{\lambda^{(1)}(n-i+1)}{(i+1)\overline{\lambda^{(1)}}}$, then
\begin{align}
     w_{i+2} = r_{i+2} w_{i+1}+ t_{i+2} w_{i}, \text{ for } 2 \le i \le n-1. \label{eq:w}
\end{align}

To show that each $v_j$ is positive and also to determine its value, our proof is inductive. For the base case, we find the value of $v_1$ and show it is positive. Then, using induction, assuming $v_1,...v_j$ are positive and we know their values, we find $v_{j+1}$ and prove that it is positive.

{\bf Base case.} We will find $v_1$ and show that $v_1>0$.
As we can see from the recursive equations in \eqref{eq:w}, we can express each $w_{j}$ for $j \in \{4,..,n+1\}$ in terms of $w_{2}$ and $w_{3}$. Write such expressions as $w_{j}= x_{j} w_{3} +y_{j} w_{2}$ where
 $x_{4}=r_{4}$, $y_{4}=t_{4}$, $x_{5}=r_{5}r_{4}+t_{5}$, $y_{5}=r_{5}t_{4}$ and $x_{j+1}= r_{j+1} x_{j}+ t_{j+1} x_{j-1}$, $y_{j+1}= r_{j+1} y_{j}+t_{j+1}y_{j-1}$
for $5 \leq j \leq n$.
So far we can write $w_j$ for $ 4 \le j \le n+1$  as a linear function of $w_2$ and $w_{3}$ which are in fact 
linear functions of $v_{1}, v_{2}$ and $v_{3}$ because $w_{2}=v_{2}-v_{1}$ and $w_{3}=v_{3}-v_{2}$.
We also know from \eqref{eq:v1} and \eqref{p} for $i=2$:
\begin{align}\label{eq:v3}
((n-1)\lambda^{(1)}+2\overline{\lambda^{(1)}}+\mu) v_{2}= 1+\lambda^{(1)}(n-1)v_{1}+2\overline{\lambda^{(1)}}v_{3}+\mu v_{1}.
\end{align}
Combining \eqref{eq:v1} and \eqref{eq:v3} together we reach the conclusion that we can write $v_2, v_3,$ and all the
$w_{i}$, $2 \le i \le n+1$, based on $v_{1}$. Hence for some coefficients $c_i,d_i$, we write $$w_{i}=c_i v_1 +d_i.$$ 

Next, using (another) induction we will show that for $i \in \{2,3,...,n+1\}$,
\begin{align}
  c_i>0 \text{ and } d_i <0 \label{eq:induc}  .
\end{align}
For $i=2,3,$ from equation \eqref{equiv_system} we have
\begin{align}
  &  w_2  =\frac{v_1 n \lambda^{(1)} -1}{\overline{\lambda^{(1)}}} .\\
  &   w_3 =\frac{((n-1)\lambda^{(1)}+\mu) w_2 -1}{2\overline{\lambda^{(1)}}} .
\end{align}
Therefore, $c_2=\frac{n \lambda^{(1)}}{\overline{\lambda^{(1)}}}, d_2 = \frac{-1}{\overline{\lambda^{(1)}}}$ and $c_3 = \frac{n \lambda^{(1)} ((n-1)\lambda^{(1)}+\mu)}{2 \overline{\lambda^{(1)}}^2}, d_3 = -\frac{1}{2\overline{\lambda^{(1)}}}-\frac{((n-1)\lambda^{(1)}+\mu)}{2 \overline{\lambda^{(1)}}^2}$. Hence the claim in \eqref{eq:induc} holds. 

Assume that \eqref{eq:induc} holds for $2,3,\dots,i$, where $3 \le i\leq n$. We will prove that it also holds for $i+1$.
We can rewrite Equation \eqref{p} as 
\begin{align} 
    1+ i \overline{\lambda^{(1)}} w_{i+1}
    =& (n-i+1) \lambda^{(1)} w_i
    +\mu \sum_{k=1}^{i-1} (v_i -v_k) \label{eq:25}\\
    =& (n-i+1) \lambda^{(1)} w_i
    +\mu \sum_{k=1}^{i-1} \sum_{j=k+1}^{i} w_j \label{fin} \\
    =& c v_1 + d,
\end{align}
for some constants $c>0,d<0$. The last equality follows from the induction hypothesis \eqref{eq:induc} and the fact that \eqref{fin} consists of $w_j$'s where $j\leq i$. The above equation implies $c_{i+1}>0, d_{i+1}<0$. Therefore by induction the condition in \eqref{eq:induc} holds.

From \eqref{v0}, 
\begin{align}  
v_{0} &= \frac{1}{n \mu} + \frac{\sum_{i=1}^{n} v_{i}}{n}
 = \frac{1}{n \mu} + \frac{\sum_{j=2}^{n} \sum_{k=2}^{j} w_k}{n} +v_1 \nonumber\\ 
 &= \frac{1}{n \mu} +v_1 + \sum_{j=2}^{n}  \frac{n-j+1}{n} w_j. \label{eq:28}
 \end{align}
 Moreover, 
 \begin{align}
v_{0} = v_{n+1} &= w_{n+1} + v_n = w_{n+1} + \sum_{j=2}^{n} w_j  + v_1. \label{eq:29}
\end{align} 
Comparing \eqref{eq:28}, \eqref{eq:29} and using $w_j = c_i v_1 + d_i$, we have
\begin{align}
    \frac{1}{n \mu } = \sum_{j=2}^{n+1}  \frac{j-1}{n} w_j =  v_1 \sum_{j=2}^{n+1}  \frac{j-1}{n} c_j + \sum_{j=2}^{n+1}  \frac{j-1}{n} d_j.
\end{align} 
We can obtain $v_1$ by 
\begin{align}\label{init}
    v_1 = \frac{\frac{1}{n\mu} - \sum_{j=2}^{n+1}  \frac{j-1}{n} d_j}{\sum_{j=2}^{n+1}  \frac{j-1}{n} c_j}.
\end{align}
It can be seen that by the condition of \eqref{eq:induc}, $v_1$ is positive and we found its value in \eqref{init}.

{\bf Induction step.}
We assume that we obtained the values of $v_{1},...,v_{j-1}$ and they are positive. We need to show that $v_{j}$ is positive and find its value. From now on, $v_1,\dots,v_{j-1}$ are considered positive constants.

From \eqref{eq:w} and considering that $v_1,\dots,v_{j-1}$ are positive constants, it is obvious that we can write for $j \le i \le n$ and some constants $a_i,b_i$,
$$w_i=a_i v_j + b_i.$$ 
Next, We prove by (another) induction that for $j \le i \le n+1,$ 
\begin{align} \label{eq:induc2}
 a_i>0 \text{ and } b_i <0.   
\end{align}
Since $w_j= v_j - v_{j-1}$ and also $v_{j-1}$ is assumed to be a positive constant, the condition in \eqref{eq:induc2} is true for $j$. 

We assume \eqref{eq:induc2} holds for $j,j+1, \dots,i$, and prove it for $i+1$. We make use of \eqref{eq:25} again:
\begin{align}
     1 + i \overline{\lambda^{(1)}} w_{i+1} 
    &=  (n-i+1) \lambda^{(1)} w_i
    +\mu \sum_{k=1}^{i-1} (v_i -v_k) \\
    &= (n-i+1) \lambda^{(1)} w_i
    +\mu \sum_{k=1}^{j-1}  (v_i -v_k) + \mu \sum_{k=j}^{i-1}  (v_i -v_k)  \\
     &= (n-i+1) \lambda^{(1)} w_i
    +\mu \sum_{k=1}^{j-1} \left( \sum_{r=j+1}^{i} w_r + v_j - v_k  \right)
     + \mu \sum_{k=j}^{i-1} \sum_{r=k+1}^i w_r  \label{eq:35}\\
     &=a_{i+1} v_j+b_{i+1},
\end{align}
where $a_{i+1}>0, b_{i+1}<0$ are some constants. The last step holds because the right hand side of \eqref{eq:35} consists of $v_k$ ($1 \le k \le j-1$) and $w_r$ ($j+1 \le r \le i$), and we assumed the condition \eqref{eq:induc2} holds and $v_k$ is a positive constant for $k < j$. From the above equation, the condition in \eqref{eq:induc2} holds for $w_{i+1}$. Thus, we have proved the condition in \eqref{eq:induc2} by induction.


Now, we intend to prove $v_j$ is positive and find its value recursively based on $v_1,...,v_{j-1}$ and some constants. Similar to the way we found $v_1$ as in \eqref{eq:28}, \eqref{eq:29}, we write \eqref{v0} as
\begin{align} \label{end}
    v_{0} &= \frac{1}{n \mu} + \frac{\sum_{i=1}^{n} v_{i}}{n}
 = \frac{1}{n \mu} + \frac{v_j}{n} + \sum_{r=j+1}^{n}  \frac{\sum_{k=j+1}^{r} w_{k}+v_j}{n}.
\end{align}
Moreover,
\begin{align} 
v_{0} &= w_{n+1} + v_n = w_{n+1} + \sum_{k=j+1}^{n} w_k  + v_j. \label{eq:38}
\end{align}
Combining \eqref{end}, \eqref{eq:38}, we have
\begin{align}\label{vj}
    \frac{1}{n \mu} +\frac{\sum_{i=1}^{j-1} v_i}{n} 
    =   \frac{j-1}{n} v_j +w_{n+1}+ \sum_{i=j+1}^{n}  \frac{i-1}{n} w_i.
\end{align}
From \eqref{vj} we can write 
\begin{align}
    v_j = \frac{ \frac{1}{n \mu} +\frac{\sum_{i=1}^{j-1} v_i}{n}- \sum_{i=j+1}^{n} \frac{i-1}{n}b_{i}}{\frac{j-1}{n}+\sum_{i=j+1}^{n} \frac{i-1}{n}a_{i}},
\end{align}
where the denominator and the numerator are both positive by condition \eqref{eq:induc2}. Therefore, we proved that $v_j$ is positive and also found its value recursively. The solution to $v_1,\dots,v_n$ using the recursive calculation and the age of information $\Delta_1=v_0$ is summarized in Algorithm \ref{alg:homo}.
\end{proof}

Let $\Delta_{i}$ denote the average AoI at the monitor for Source $i$.
In the corollaries below we state the average AoI for $n=2,3$ servers, which can be directly derived from Theorem \ref{thm:homo_general}. Here we define $\lambda \triangleq \lambda^{(1)}+\lambda^{(2)}+\dots+\lambda^{(m)}$, and recall each server has service rate $\mu$.
\begin{corollary}
\label{prop:all_2_servers}
For $m$ information sources and $n=2$ servers, we have
\begin{align}
\Delta_{i} = \frac{1}{2 (\lambda + \mu)} + \frac{\lambda + \mu}{2 \mu \lambda^{(i)}}, \quad 1 \leq i \leq m.
\label{all_2_servers}
\end{align}
\end{corollary}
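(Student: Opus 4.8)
The plan is to redo the SHS computation of Section~\ref{multiple} directly for $n=2$, since Theorem~\ref{thm:homo_general} is stated only for $n\ge 3$ and Algorithm~\ref{alg:homo} degenerates when $n=2$ (its recursion loops become vacuous, so one simply solves a small linear system). By the symmetry of the homogeneous model under relabeling of the sources it suffices to prove the formula for $i=1$; write $\lambda^{(1)}$ for the arrival rate of Source~$1$ at a server and $\overline{\lambda^{(1)}}=\lambda-\lambda^{(1)}$ for the combined arrival rate of the other sources. Instantiating Cases~I--III preceding Theorem~\ref{thm:homo_general} with $n=2$ gives a single discrete state $q=0$ and $3n=6$ transitions acting on $\mathbf{x}=(x_0,x_1,x_2)$: the Case-I maps $(x_0,0,x_2)$ and $(x_0,0,x_1)$ with rate $\lambda^{(1)}$; the Case-II maps $(x_0,x_2,x_0)$ and $(x_0,x_1,x_0)$ with rate $\overline{\lambda^{(1)}}$; and the Case-III maps $(x_1,x_1,x_1)$ and $(x_2,x_1,x_2)$ with rate $\mu$. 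The total outgoing rate is $2(\lambda+\mu)$.

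Next I would apply Lemma~\ref{lem:yates} with $\pi_0=1$ and $\mathbf{b}_0=(1,1,1)$, and read off the three coordinates of the resulting identity for $\mathbf{v}=(v_0,v_1,v_2)$:
\begin{align*}
v_0 &= \tfrac{1}{2\mu}+\tfrac{v_1+v_2}{2},\\
(2\lambda^{(1)}+\overline{\lambda^{(1)}})\,v_1 &= 1+\overline{\lambda^{(1)}}\,v_2,\\
(\lambda^{(1)}+2\overline{\lambda^{(1)}}+\mu)\,v_2 &= 1+(\lambda^{(1)}+\mu)\,v_1+2\overline{\lambda^{(1)}}\,v_0,
\end{align*}
the middle equation being exactly \eqref{eq:v1} with $n=2$ and the last being \eqref{p} with $i=n=2$ (so there is no ``interior'' relation here). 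Every coefficient is positive, hence the unique solution is positive, and $\Delta_1=v_0$.

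Finally I would solve this $3\times 3$ system: substituting the first equation into the third eliminates $v_0$ and yields $v_2=v_1+\frac{\mu+\overline{\lambda^{(1)}}}{\mu(\lambda+\mu)}$; feeding this back into the second gives $v_1$ in closed form, and then $v_0=\frac{1}{2\mu}+v_1+\frac{\mu+\overline{\lambda^{(1)}}}{2\mu(\lambda+\mu)}$. The only nonroutine step is the algebraic simplification: after clearing denominators, the numerator of $v_0$ collapses, using $\lambda=\lambda^{(1)}+\overline{\lambda^{(1)}}$, to $(\lambda+\mu)^2+\lambda^{(1)}\mu$, which gives $v_0=\frac{\lambda+\mu}{2\mu\lambda^{(1)}}+\frac{1}{2(\lambda+\mu)}$, i.e.\ \eqref{all_2_servers} for $i=1$; the general $i$ follows by the symmetry noted at the outset. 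I anticipate no real obstacle beyond this bookkeeping; the one place needing care is writing down the six transition maps correctly, especially the Case-II maps, where the virtual server that receives an update from another source is demoted to the last position and assigned age $x_0$.
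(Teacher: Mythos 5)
Your proposal is correct and follows essentially the same route as the paper: the paper's own (unpublished, commented-out) derivation of this corollary writes down exactly the same three coordinate equations from the $n=2$ instance of the multi-source SHS system \eqref{multi-source} and solves the resulting $3\times 3$ linear system, arriving at the numerator $(\lambda+\mu)^2+\lambda^{(i)}\mu$ as you do. Your transition maps, the three equations, the intermediate relation $v_2=v_1+\frac{\mu+\overline{\lambda^{(1)}}}{\mu(\lambda+\mu)}$, and the final simplification all check out.
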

In Figure \ref{homo2}, we observe that as we increase $\rho^{(1)}$, average AoI for Source $1$ decreases. Also, average AoI for Source $2$ increases since $\rho$ is constant which matches our intuition.
\begin{figure}
\centering
\includegraphics[width=0.45\textwidth]{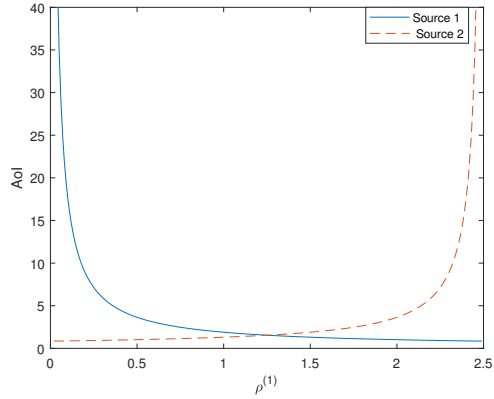}
\caption{Average AoI for Source $1$ and Source $2$ when $\rho=2.5$, $\mu=1$, and $n=2$. Here, $\rho^{(1)}$ is changing from $0$ to $2.5$.}
\label{homo2}
\end{figure}
Next,  we determine the optimal arrival rate given the sum arrival rate when $n=2$.
\begin{theorem} \label{convex}
Consider $m$ information sources and $n = 2$ servers. The optimal arrival rate ${\lambda^{(i)}}^*$ minimizing the weighted sum of AoIs in Corollary \ref{prop:all_2_servers}, i.e., $w_1 \Delta_1 + w_2 \Delta_2+...+w_m \Delta_m$ for $w_i \geq 0$, subject to the constraint $\lambda^{(1)}+ \lambda^{(2)}+...+ \lambda^{(m)} = \lambda$, is given by
\begin{align*}
{\lambda^{(i)}}^{*}=\frac{\lambda \sqrt{w_i} }{\sum_{j=1}^{m} \sqrt{w_j}},  i \in [m]. 
\end{align*}
\end{theorem}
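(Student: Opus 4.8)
The plan is to reduce the constrained minimization to a standard convex program and then solve it by Cauchy--Schwarz (or, equivalently, by Lagrange multipliers). First I would exploit the fact that the total rate $\lambda = \sum_{i=1}^m \lambda^{(i)}$ is held fixed: in the expression for $\Delta_i$ from Corollary~\ref{prop:all_2_servers} the term $\frac{1}{2(\lambda+\mu)}$ is then a constant, and $\frac{\lambda+\mu}{2\mu}$ is a fixed positive multiplier in front of $\frac{1}{\lambda^{(i)}}$. Hence
\begin{align*}
\sum_{i=1}^m w_i \Delta_i = \frac{\sum_{i=1}^m w_i}{2(\lambda+\mu)} + \frac{\lambda+\mu}{2\mu}\sum_{i=1}^m \frac{w_i}{\lambda^{(i)}},
\end{align*}
so minimizing the weighted sum of AoIs over the feasible set $\{(\lambda^{(1)},\dots,\lambda^{(m)}) : \lambda^{(i)} > 0,\ \sum_i \lambda^{(i)} = \lambda\}$ is equivalent to minimizing $g(\lambda^{(1)},\dots,\lambda^{(m)}) \triangleq \sum_{i=1}^m \frac{w_i}{\lambda^{(i)}}$ subject to the same linear constraint; in particular the answer will not depend on $\mu$.

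Next I would apply the Cauchy--Schwarz inequality to the vectors $\big(\sqrt{w_i}/\sqrt{\lambda^{(i)}}\big)_i$ and $\big(\sqrt{\lambda^{(i)}}\big)_i$:
\begin{align*}
\Big(\sum_{i=1}^m \sqrt{w_i}\Big)^2 = \Big(\sum_{i=1}^m \frac{\sqrt{w_i}}{\sqrt{\lambda^{(i)}}}\cdot \sqrt{\lambda^{(i)}}\Big)^2 \le \Big(\sum_{i=1}^m \frac{w_i}{\lambda^{(i)}}\Big)\Big(\sum_{i=1}^m \lambda^{(i)}\Big) = \lambda\, g(\lambda^{(1)},\dots,\lambda^{(m)}),
\end{align*}
which gives $g \ge \frac{1}{\lambda}\big(\sum_i \sqrt{w_i}\big)^2$, with equality precisely when $\sqrt{w_i}/\sqrt{\lambda^{(i)}} \propto \sqrt{\lambda^{(i)}}$, i.e.\ when $\lambda^{(i)} \propto \sqrt{w_i}$. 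Imposing $\sum_i \lambda^{(i)} = \lambda$ then forces $\lambda^{(i)} = \lambda\sqrt{w_i}/\sum_{j}\sqrt{w_j}$, which is exactly the claimed ${\lambda^{(i)}}^*$; this point is feasible (each coordinate is positive when $w_i>0$ and the coordinates sum to $\lambda$) and attains the lower bound, so it is the unique minimizer. As an alternative justification one could write the Lagrangian $\sum_i w_i/\lambda^{(i)} + \nu\big(\sum_i\lambda^{(i)}-\lambda\big)$, set $\partial/\partial\lambda^{(i)}=0$ to obtain $-w_i/(\lambda^{(i)})^2 = \nu$, hence $\lambda^{(i)} = \sqrt{w_i/(-\nu)}$, and pin down $\nu$ from the constraint; convexity of $g$ on the positive orthant (each $w_i/\lambda^{(i)}$ is convex) plus the convex feasible set certifies that this stationary point is the global optimum.

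The only items needing care are degenerate ones rather than genuine obstacles: if some $w_i=0$ the formula returns ${\lambda^{(i)}}^*=0$, consistent with that source contributing nothing to the objective (one simply discards such sources and runs the Cauchy--Schwarz argument on the strictly positive weights), and one must assume the weights are not all zero so that $\sum_j\sqrt{w_j}>0$ and the problem is nontrivial. I do not expect a hard step here; the bulk of the work is the algebraic reduction to $g$ and the verification that the Cauchy--Schwarz equality condition yields precisely the stated closed form.
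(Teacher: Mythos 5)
Your argument is correct. The paper's proof is exactly your ``alternative justification'': it asserts convexity of the weighted objective, forms the Lagrangian, sets $\partial/\partial\lambda^{(i)}=0$ to get $w_i/(\lambda^{(i)})^2$ constant across $i$, and normalizes via $\sum_i\lambda^{(i)}=\lambda$. Your primary route differs in two ways that are mild improvements. First, you make the reduction explicit: since $\lambda$ is fixed, only the term $\sum_i w_i/\lambda^{(i)}$ matters and the optimizer is independent of $\mu$ --- the paper never isolates this. Second, your Cauchy--Schwarz bound $\bigl(\sum_i\sqrt{w_i}\bigr)^2\le \lambda\sum_i w_i/\lambda^{(i)}$ with its equality condition gives a self-contained certificate that the stated point is the \emph{global} minimizer, whereas the paper leans on an unproved convexity claim and does not verify that the stationary point is a minimum rather than some other critical point; your handling of the degenerate case $w_i=0$ is also more careful than the paper's (which silently divides by $\lambda^{(i)}$). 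What the paper's route buys is brevity; what yours buys is rigor. Either is acceptable, and both yield ${\lambda^{(i)}}^{*}=\lambda\sqrt{w_i}/\sum_j\sqrt{w_j}$.
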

\begin{proof}
The objective function that we are trying to minimize is convex and therefore we just have to set the partial derivative with respect to each $\lambda^{(i)}$ to be zero.
\begin{align} \label{deri}
 \frac{\partial}{\partial \lambda^{(i)}}   ( w_1 \Delta_1 + w_2 \Delta_2+...+w_m \Delta_m +a(\sum_{j=1}^{m} \lambda^{(j)}-\lambda)) =0,
\end{align}
for i $\in [m]$. Here $a$ is the Lagrange multiplier. Simplifying \eqref{deri} results in:
\begin{align}
    \frac{w_{1}}{(\lambda^{(1)})^{2}}= \frac{w_{2}}{(\lambda^{(2)})^{2}}=\dots=\frac{w_{n}}{(\lambda^{(n)})^{2}}=a.
\end{align}
Knowing the fact that $\lambda^{(1)}+ \lambda^{(2)}+...+ \lambda^{(m)} = \lambda$, we obtain the result in Theorem~\ref{convex}.
\end{proof}

From Theorem \ref{convex}, when there are 2 servers and the weights are all identical, i.e., $w_1=w_2=\dots=w_m$, the optimal arrival rate should be equal for all sources. In general, the optimal arrival rate is inversely proportional to the square root of the weight.

\section{AoI in Single-Source Heterogeneous Networks} \label{hetro-sec}
\subsection{Overview}
In this section, we consider a single source and assume that the arrival rates and service rates of servers are arbitrary. 
We denote by $\lambda_{j}^{(1)} \triangleq \lambda_j$ the arrival rate of a single source at Server $j$, and $\mu_j$ the service rate of Server $j \in [n]$.
For this setting, we can no longer use the technique of virtual servers used in the homogeneous case to reduce the state space and derive AoI. In particular, we need to keep track of the age of updates at the physical servers as well as their ordering, resulting in $n!$ number of states. However, we can still use fake update and fake preemption such that the server is always busy even after its update is delivered or is outdated.
If we consider $n$ servers, we will have $n!$ states, $(2n)n!$ transitions and $(n+1)n!$ equations and unknowns. Writing down the $(n+1)n!$ equations from Lemma \ref{lem:yates} in a matrix form, we obtain $T \mathbf{v} = \mathbf{ \pi }$, for the coefficient matrix $T$ and the steady state probabilities $\bm{\pi}$. First, we find steady state probabilities. Then, we prove that we can break down matrix $T$ into sub-matrices which have the same general form as some matrix $T_0$. Afterwards, by doing some column and row operations on matrix $T_0$, we show that we are able to solve all the equations and eventually find the average AoI.

In the following subsections, we present the notations, the main theorem, and examples with 2 and 3 servers. 

\subsection{Notations and definitions.}\label{subsec:notation} 
A permutation of the set $\{1,2,...,n\}$ is denoted by a lower-case letter or a tuple of length $n$, e.g., $q=(q_1,q_2,\dots,q_n)$. Additionally, set by default $q_0=0$.
A permutation $q$ is said to be \emph{$j$-increasing} if the last $j$ positions are increasing:
$$q_{n-j+1} < q_{n-j+2}< \dots < q_{n}.$$
If a permutation is $2$-increasing, it is said to be \emph{odd}. Otherwise, it is \emph{even}.
Define a permutation $h_i(\cdot)$ that takes the $i$-th element of $q$ and place it at the first position: 
\begin{align*}
h_i(q)=(q_{i},q_{1},...,q_{i-1},q_{i+1},...,q_{n}),  
\end{align*}
for $i \in [n]$. Let its inverse be $h^{-1}(\cdot)$. Define the set
\begin{align}\label{eq:H_q_-1}
    H^{-1}_q = \{h^{-1}_i(q): i=1,2,\dots,n\}.
\end{align}
Define a function $g_{j,k}(\cdot)$ that takes the $k$-th element of $q$ and place it at the $(n-j)$-th position:
\begin{align}
g_{j,k}(q) =& (q_1,...,q_{n-j-1}, q_k, q_{n-j},...,q_{k-1},q_{k+1},...,q_n ), \quad k = n-j,...,n. 
\end{align}
Denote by $g_{j,k}^{-1}(\cdot)$ the inverse permutation.

Given set of linear equations:
$$A\mathbf{v}  = \mathbf{b},$$
the matrix $A$ is called the coefficient matrix, $\mathbf{v}$ the variable vector, and $\mathbf{b}$ the constant vector. 
The matrices and vectors will be indexed by permutations and/or integers. 
Let $m,n$ be the row index and the column index for a matrix $A$, then $A(m,n)$ is the $(m,n)$-th entry. 
Let $M,N$ be sets of rows and columns indices for a matrix $A$, then $A(M,N)$ is the corresponding submatrix of $A$ with rows $M$ and columns $N$. 
Moreover, $A(:,N)$ is the submatrix of $A$ with columns N, and $A(M,:)$ is the submatarix with rows $M$. 
For a vector $\mathbf{v}$, its $n$-th entry is denoted by $v_n$, and its sub-vector indexed by $N$ is denoted by $\mathbf{v}(N)$.

\subsection{Main result} \label{ez}
In this subsection, we derive the algorithm to compute the AoI of the heterogeneous network.
To simplify the presentation, the proofs for the results are provided in the appendix. 

First, let us describe the SHS. 
The continuous state $\mathbf{x}=(x_0,x_1,\dots,x_n)$ represents the ages of the monitor, Server 1, ..., and Server $n$. 
 The set of discrete states $\mathcal{Q}$ is the set of all permutations of the set $\{1,2,...,n\}$. There are $n!$ states in total. 
 State $q=(q_{1},q_{2},...,q_{n})\in \mathcal{Q}$ represents the ordering of the age among all the servers, meaning $x_{q_{1}} \le x_{q_{2}} \le ...  \le x_{q_{n}}$. 

The incoming transitions of state $q=(q_1, q_2,...,q_n)$ are listed in Figure \ref{fig:hete_transitions}. Here for an incoming state $p$, $\mathbf{v}_p A_l$ corresponds to the last term in Equation \eqref{eq:yateslemma}.
For ease of exposition, the entries in vector $\mathbf{x}$ are reordered as $(x_0, x_{q_1},\dots,x_{q_n})$. By abuse of notation, in Figure \ref{fig:hete_transitions}, the reordered vector is still called $\mathbf{x}$. Similarly, $\mathbf{x}',\mathbf{v}_p$ are also reordered.

For transition $l, 1 \le l \le n,$  state $p= (q_2,...,q_{l-1},q_1,q_{l},...,q_n)$ is an incoming state of state $q$,  corresponding to an update arrival at server $q_1$ with rate $\lambda_{q_1}$. 
The $q_1$-th entry in $\mathbf{x}'$ becomes 0. Accordingly, the $q_1$-th entry in $\mathbf{v}_p$ becomes 0. The set of incoming states of $q$ for such transitions can be represented as $H^{-1}_q$.

For transition $l, n+1 \le l \le 2n$, set $i=l-n$. An update is delivered to the monitor from Server $q_i$ with rate $\mu_{q_i}$, and $q$ is an incoming state to itself. 
In this case, we preempt any update in the servers that has larger information age and put a fake update in them which is the update from Server $q_i$. In other words, we preempt updates in servers $(q_{i+1},...,q_{n})$ and replace them with the update from Server $q_i$. Therefore, the new vector $\mathbf{x}^{\prime}$ becomes $(x_{q_i},x_{q_1},\dots,x_{q_{i-1}},$ $x_{q_i},\dots,x_{q_i})$. Similarly the corresponding  vector $\mathbf{v}_p$ changes as $(v_{p,q_i},v_{p,q_1},\dots,v_{p,q_{i-1}},$ $v_{p,q_i},\dots,v_{p,q_i})$.

\begin{figure}
\begin{tabular}{ cccccccccc }
\centering
 $l$ & $\lambda^{(l)}$ & Transition &$\mathbf{x}^\prime$ =$\mathbf{x}A_{l}$ & $\mathbf{v}_{p} A_{l}$&\\          \hline
 $1$ & $\lambda_{q_1}$ & $q \leftarrow p=q $ & $(x_{0},0,x_{q_2},...,x_{q_n})$ &$(v_{p,0},0,v_{p,q_2},...,v_{p,q_n})$&\\ \hline
 $2$ & $\lambda_{q_1}$ & $q \leftarrow p=(q_2,q_1,q_3,...,q_n) $ & $(x_{0},0,x_{q_2},...,x_{q_n})$ &$(v_{p,0},0,v_{p,q_2},...,v_{p,q_n})$&\\ \hline
 $3$ & $\lambda_{q_1}$ & $q \leftarrow p=(q_2,q_3,q_1,...,q_n) $ & $(x_{0},0,x_{q_2},...,x_{q_n})$ &$(v_{p,0},0,v_{p,q_2},...,v_{p,q_n})$&\\ \hline
 & $\vdots$ & $\vdots$ & $\vdots$\\  \hline
  $n$ & $\lambda_{q_1}$ & $q \leftarrow p=(q_2,q_3,...,q_n,q_1) $ & $(x_{0},0,x_{q_2},...,x_{q_n})$ &$(v_{p,0},0,v_{p,q_2},...,v_{p,q_n})$&\\ \hline
  $n+1$ & $\mu_{q_1}$ & $q \leftarrow p=q $ & $(x_{q_1}, x_{q_{1}},x_{q_1},...,x_{q_1})$ &$(v_{q,q_1}, v_{q,q_{1}},v_{q,q_1},...,v_{q,q_1})$&\\ \hline
 $n+2$ & $\mu_{q_2}$ & $q \leftarrow p=q $ & $(x_{q_2}, x_{q_1},x_{q_2},...,x_{q_2})$ &$(v_{q,q_2}, v_{q,q_1},v_{q,q_2},...,v_{q,q_2})$&\\ \hline
 & $\vdots$ & $\vdots$ & $\vdots$\\  \hline
  $2n$ & $\mu_{q_n}$ & $q \leftarrow p=q $ & $(x_{q_n}, x_{q_1},x_{q_2},...,x_{q_n})$ &$(v_{q,q_n}, v_{q,q_1},v_{q,q_2},...,v_{q,q_n})$&\\ \hline

\end{tabular}
\caption{Incoming transitions of any given state $q$ caused by update arrival or update delivery. The incoming state is denoted as $p$.}
\label{fig:hete_transitions}
\end{figure}

Now, we write down Equation \eqref{eq:yateslemma} as in Lemma \ref{lem:yates} for each state $q \in \mathcal{Q}$. 
Notice that each update arrival or update delivery results in an outgoing state for state $q$.
Hence on the left-hand side of Equation \eqref{eq:yateslemma}, 
$\mathbf{v_{q}}$ is multiplied by sum of rates of outgoing transitions which for every $q \in \mathcal{Q}$ is equal to $\sum_{j=1}^{n} \lambda_{q_{j}} +\sum_{j=1}^{n} \mu_{q_{j}}$. Also, $\mathbf{b}_{{q}}=[1,...,1]$ due to the fake update, and $\pi_{q}$ is  the stationary distribution to be computed by Lemma \ref{ss}. The last term on the right-hand side of \eqref{eq:yateslemma} can be expressed according to Figure \ref{fig:hete_transitions}. Therefore, for $q \in \mathcal{Q}$,
\begin{align}\label{general_eq}
(v_{q,0}, v_{q,q_1}, v_{q,q_2},...,v_{q,q_n}) (\sum_{j=1}^{n} \lambda_{q_{j}} +\sum_{j=1}^{n} \mu_{q_{j}})
&=\pi_{q} +\lambda_{q_{1}} (\sum_{p \in H^{-1}_{q}}^{} (v_{p,0}, 0, v_{p,q_2},...,v_{p,q_n})) \\ \nonumber &+ \sum_{i=1}^{n} {\mu_{q_{i}} (v_{q,q_i}, v_{q,q_1},..., v_{q,q_{i-1}},v_{q,q_i},...,v_{q,q_i}) }. 
\end{align}

The following lemma gives the steady-state probability, which only depends on arrival rates $\lambda_{i}$ and the order of the update's age in a state.
\begin{lemma}
\label{ss}
For a given state ${q}= (q_1, q_2,..., q_n)$ in which ${q} \in \mathcal{Q}$, the steady state probability ($\pi_{q}$) is
\begin{equation}
\pi_{q}= \frac{\lambda_{q_{1}}}{\sum_{j=1}^{n} \lambda_{q_{j}}} \frac{\lambda_{q_{2}}}{\sum_{j=2}^{n} \lambda_{q_{j}}} \frac{\lambda_{q_{3}}}{\sum_{j=3}^{n} \lambda_{q_{1}}} ... \frac{\lambda_{q_{n-1}}}{\sum_{j=n-1}^{n} \lambda_{q_{j}}}
\end{equation}
\end{lemma}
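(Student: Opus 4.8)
The plan is to verify that the claimed product formula for $\pi_q$ satisfies the balance equations for the discrete-state Markov chain, namely $\pi_q \sum_{l \in L_q} \lambda(l) = \sum_{l \in L'_q} \lambda(l) \pi_{q_l}$, together with the normalization $\sum_{q} \pi_q = 1$; since the chain is ergodic, a non-negative solution of these equations is the unique stationary distribution. First I would identify the incoming and outgoing transitions for a given state $q=(q_1,\dots,q_n)$. From Figure~\ref{fig:hete_transitions}, the only transitions that actually change the discrete state are the arrival transitions: an arrival at server $q_1$ (rate $\lambda_{q_1}$) keeps us in a state whose first coordinate is $q_1$, so the incoming states of $q$ are exactly the permutations $p \in H^{-1}_q$, i.e.\ those obtained by moving $q_1$ from the front of $q$ into any of the $n$ positions (one of which is $q$ itself, via transition $l=1$). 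The delivery transitions (rates $\mu_{q_i}$) and the self-arrival all map $q$ to $q$ and cancel on both sides of the balance equation. Hence the net balance equation reduces to: $\pi_q\big(\sum_{j=2}^n \lambda_{q_j}\big) = \sum_{p \in H^{-1}_q,\ p\neq q} \lambda_{p_1}\,\pi_p$, where the factor $\sum_{j=2}^n\lambda_{q_j}$ comes from the outgoing arrival rates that leave $q$ (the arrival at $q_1$ returns to $q$ and cancels).

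Next I would plug the product formula into this reduced equation. Observe that $\pi_q$ as written is a telescoping-style product: writing $S_k(q) \triangleq \sum_{j=k}^n \lambda_{q_j}$, we have $\pi_q = \prod_{k=1}^{n-1} \frac{\lambda_{q_k}}{S_k(q)}$. The key structural fact is that a permutation $p\in H^{-1}_q$ with $p\neq q$, obtained by inserting $q_1$ at position $i\ge 2$, has the form $p=(q_2,\dots,q_i,q_1,q_{i+1},\dots,q_n)$; crucially $p$ and $q$ share the same \emph{set} of last $n-i+1$ entries viewed appropriately — more precisely $S_k(p)=S_k(q)$ for all $k> i$ wait, I should be careful: $p$ and $q$ agree on positions $i+1,\dots,n$, so $S_k(p)=S_k(q)$ for $k\ge i+1$, and the first $i$ entries of $p$ are a cyclic-shift of the first $i$ entries of $q$, so $S_k(p)$ for $k\le i$ involves $\{q_1,\dots,q_i\}$ minus the first $k-1$ of them. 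Using this, $\pi_p/\pi_q$ simplifies to an explicit ratio, and one shows $\lambda_{p_1}\pi_p = \lambda_{q_2}\pi_{(q_2,q_3,\dots,q_i,q_1,q_{i+1},\dots)}$ collapses; I would then prove the identity $\sum_{p\in H^{-1}_q, p\neq q}\lambda_{p_1}\pi_p = \pi_q\, S_2(q)$ by a short induction on the suffix, or equivalently by recognizing the right-hand side as $\pi_q$ times the telescoping sum $\sum_{i=2}^{n}\big(S_{i}(q)$-type terms$\big)$ — essentially the same ``first-step'' identity that gives the stationary distribution of the classical "move-to-front"/Tsetlin-library Markov chain, which this chain is (since a fresh arrival at a uniformly-at-random-by-rate server moves that server to the front of the age-ordering). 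Indeed the cleanest route is to recognize that the discrete chain on orderings is precisely the move-to-front chain with item weights $\lambda_i$, whose stationary distribution is the well-known Gilbert--Shannon formula $\prod_{k=1}^{n-1}\lambda_{q_k}/\sum_{j\ge k}\lambda_{q_j}$; I would cite or reprove this.

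Concretely, the induction I would run is on $n$: condition on which server is at the front. The front position holds server $q_1$ with long-run probability $\lambda_{q_1}/S_1(q)$ (the last arrival among all servers was at $q_1$ with exactly this probability, by competing exponentials / PASTA-type reasoning), and conditioned on that, the ordering of the remaining $n-1$ servers behaves as the move-to-front chain on those servers with the same rates — giving $\prod_{k=2}^{n-1}\lambda_{q_k}/S_k(q)$ by the inductive hypothesis. Multiplying yields the formula. I would spell out the base case $n=2$ ($\pi_{(1,2)}=\lambda_1/(\lambda_1+\lambda_2)$, which is immediate) and the conditional-independence step carefully.

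The main obstacle I anticipate is the bookkeeping in the conditional-independence / telescoping step: making rigorous the claim that "the last arrival is at server $q_1$ with probability $\lambda_{q_1}/S_1(q)$ independently of the order of the rest," and that the suffix evolves as an autonomous move-to-front chain. This requires either a clean lumping/censoring argument on the Markov chain or a direct verification of the global balance equations with the telescoping algebra; the algebra of showing $\sum_{i=2}^{n} \lambda_{q_2}\cdots$ wait — the sum $\sum_{p\neq q, p\in H^{-1}_q}\lambda_{p_1}\pi_p = \pi_q S_2(q)$ — is where a sign or index slip is easiest, so I would organize it as an explicit telescoping sum $\sum_{i=2}^{n}\big(S_i(q)-S_{i+1}(q)\big)\cdot(\text{running product})$ with $S_{n+1}\triangleq 0$, and check the two ends. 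Everything else (cancellation of $\mu$-transitions and the self-loop, non-negativity, normalization via the same telescoping) is routine.
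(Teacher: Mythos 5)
Your overall strategy --- verify that the product formula satisfies the global balance equations together with normalization, after observing that all $\mu$-transitions and the arrival at server $q_1$ from $q$ itself are self-loops that cancel --- is the same strategy the paper uses. The paper then checks the balance equations by an algebraic induction on $n$ (deleting server $q_2$ and relating the $n$-server probabilities to the $(n-1)$-server ones), whereas you propose either a direct telescoping verification or an appeal to the weighted move-to-front (Tsetlin library) chain. That identification is correct and is a legitimate shortcut: the discrete chain is exactly move-to-front with weights $\lambda_i$, and the claimed $\pi_q$ is its classical stationary law, so citing or reproving that result would suffice. Your front-element conditioning induction does, as you note, require a careful lumping/censoring argument that the paper's purely algebraic induction avoids.

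There is, however, one concrete error that makes the central identity you plan to verify false as written. The transition from $p=h_i^{-1}(q)=(q_2,\dots,q_i,q_1,q_{i+1},\dots,q_n)$ into $q$ is triggered by an arrival at server $q_1$ (the server that lands at the front of $q$), so its rate is $\lambda_{q_1}$, not $\lambda_{p_1}$; note that $p_1=q_2$ for every such $p$ with $i\ge 2$. Your identity $\sum_{p\in H^{-1}_q,\,p\neq q}\lambda_{p_1}\pi_p=\pi_q\sum_{j=2}^n\lambda_{q_j}$ therefore reads $\lambda_{q_2}\sum_{p\neq q}\pi_p=\pi_q\sum_{j=2}^n\lambda_{q_j}$ and already fails for $n=2$ unless $\lambda_1=\lambda_2$: taking $q=(1,2)$, the left side is $\lambda_2^2/(\lambda_1+\lambda_2)$ while the right side is $\lambda_1\lambda_2/(\lambda_1+\lambda_2)$. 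The correct identity is $\lambda_{q_1}\sum_{p\in H^{-1}_q,\,p\neq q}\pi_p=\pi_q\sum_{j=2}^n\lambda_{q_j}$, and with this fix the telescoping does go through (e.g.\ for $q=(1,2,3)$ one has $\pi_{(2,1,3)}+\pi_{(2,3,1)}=\lambda_2/\sum_j\lambda_j$, and multiplying by $\lambda_{q_1}=\lambda_1$ matches $\pi_{(1,2,3)}(\lambda_2+\lambda_3)$). The rest of your plan --- normalization by telescoping, the base case, and the cancellation of self-loops --- is sound once the rate is corrected.
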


In the next theorem we represent the equations of \eqref{general_eq} in matrix form as $T\mathbf{v} = \bm{\pi}$ for some coefficient matrix $T$ and some constant vector $\bm{\pi}$. In total, there are $(n+1)!$ equations since there $n!$ states and each  $\mathbf{v}_q=(v_{q,q_0}, v_{q,q_1},...,v_{q,q_n})$ has $n+1$ entries. 
We represent the row and column indices of matrix $T$ using $2$ tuples of $(q,i)$ and $(p,k)$ where $p,q$ are any $2$ arbitrary permutations and and $i,k$ are numbers in $\{0,1,...,n\}$. 
In particular, variable $v_{p,p_k}$ corresponds to column index $(p,k)$ in the coefficient matrix, and the $i$-th equation (out of $n+1$) in equation \eqref{general_eq} corresponds to row $(q,i)$. 
Accordingly, vectors $\mathbf{v}, \bm{\pi}$ are indexed by $(p,k)$.


\begin{lemma} \label{gen_mat}
The $(n+1)!$ transition equations in \eqref{general_eq} can be written as
\begin{align}
    T \mathbf{v} = \bm{\pi}.
\end{align}
Here the constant vector $\bm{\pi}$ has entry $\pi_p$ in row $(p,k)$, for all $0 \le k \le n$, and any permutation $p$.
And the coefficient matrix $T$ is as follows:
\begin{align} \label{co_t}
T((q,i),(p,k))=
\begin{cases}
\sum_{l=2}^{n} \lambda_{q_{l}}+ \sum_{l=1}^{n} \mu_{q_{l}}
,& \text{if } i=0 \quad \text{and} \quad (q,i)=(p,k),\\
-\lambda_{q_{1}} ,& \text{if } i=k=0 \quad \text{and} \quad q = h_j(p), \quad \text{for} \quad j=2,\dots,n,\\
-\mu_{q_{k}} ,& \text{if } i=0 \quad \text{and} \quad q =p , \quad \text{for} \quad k=1,\dots,n,\\
\sum_{j=1}^{n} \lambda_{q_{j}}
,& \text{if } i=1 \quad \text{and} \quad (q,i)=(p,k),\\
\sum_{l=2}^{n} \lambda_{q_{l}}  +\sum_{l=1}^{i-1} \mu_{q_{l}}, & \text{if } i>1 \quad  \text{and} \quad (q,i)=(p,k)  \\
-\lambda_{q_{1}}, & \text{if } i>1 \quad \text{and} \quad q = h_j(p), k = \langle i \rangle_j, \quad \text{for} \quad j=2,\dots,n,\\
- \mu_{q_{k}}, & \text{if } i>1 \quad  \text{and } \quad  q=p \quad  \text{for} \quad k=1,2,...,i-1, \\
0,&  \quad \text{o.w.} \\
\end{cases}
\end{align}
Here $k = \langle i \rangle_j$ means that $k=i-1$ if $i \le j$, and $k=i$ if $i > j$.
\end{lemma}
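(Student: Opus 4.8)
The plan is to obtain $T$ and $\bm{\pi}$ directly by expanding the vector identity \eqref{general_eq} coordinate by coordinate. Recall that \eqref{general_eq} holds for every state $q=(q_1,\dots,q_n)\in\mathcal{Q}$ and is an equation between vectors in $\mathbb{R}^{n+1}$, whose coordinates I index by $i\in\{0,1,\dots,n\}$ under the convention $q_0=0$; in the reordered convention of Figure~\ref{fig:hete_transitions} the $i$-th coordinate of $\mathbf{v}_q$ is $v_{q,q_i}$. For each fixed $q$ and each $i$ I would move every term that contains a $v$-variable to the left-hand side, leaving only $\pi_q$ on the right. This already proves the claim about the constant vector: the right-hand side of \eqref{general_eq} contributes exactly $\pi_q$ to each of the $n+1$ rows $(q,i)$, so $\bm{\pi}$ has entry $\pi_p$ in every row $(p,k)$. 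The entry $T((q,i),(p,k))$ is then, by definition, the coefficient of $v_{p,p_k}$ in the rearranged $i$-th equation for state $q$, and it remains to identify these coefficients term by term.

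The core of the argument is the bookkeeping that sends each summand on the right of \eqref{general_eq} to a column index $(p,k)$. For the arrival term $\lambda_{q_1}\sum_{p\in H^{-1}_q}(v_{p,0},0,v_{p,q_2},\dots,v_{p,q_n})$ I would first record that the incoming states are exactly $p=h_j^{-1}(q)=(q_2,\dots,q_j,q_1,q_{j+1},\dots,q_n)$ for $j=1,\dots,n$, these being pairwise distinct (the position of $q_1$ in $p$ is $j$), and that $p=q$ precisely when $j=1$; thus ``$p\in H^{-1}_q$ with $p\neq q$'' is the same as ``$q=h_j(p)$ for some $j\in\{2,\dots,n\}$''. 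In coordinate $i=0$ this term contributes $-\lambda_{q_1}$ to column $(p,0)$. In coordinate $i\ge 2$ it contributes $-\lambda_{q_1}$ to the variable $v_{p,q_i}$, and the only point to verify is which position $q_i$ occupies inside the tuple $p=h_j^{-1}(q)$: it is position $i-1$ when $i\le j$ and position $i$ when $i>j$, which is exactly the definition of $k=\langle i\rangle_j$. Coordinate $i=1$ receives nothing from the arrival term since the first entry of each $(v_{p,0},0,v_{p,q_2},\dots,v_{p,q_n})$ is $0$.

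I would then treat the delivery term $\sum_{i'=1}^{n}\mu_{q_{i'}}(v_{q,q_{i'}},v_{q,q_1},\dots,v_{q,q_{i'-1}},v_{q,q_{i'}},\dots,v_{q,q_{i'}})$. Its $i$-th coordinate equals $v_{q,q_{i'}}$ when $i\ge i'$ and $v_{q,q_i}$ when $i<i'$, so summing over $i'$ gives $\bigl(\sum_{i'>i}\mu_{q_{i'}}\bigr)v_{q,q_i}+\sum_{i'\le i}\mu_{q_{i'}}v_{q,q_{i'}}$ for $i\ge 1$ (and $\bigl(\sum_{i'}\mu_{q_{i'}}\bigr)v_{q,0}$ for $i=0$). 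Only the state $q$ appears here, so these terms contribute $-\mu_{q_k}$ to columns $(q,k)$ for $k=1,\dots,i-1$ and a further amount to the diagonal $(q,i)$. It then only remains to assemble the diagonal: the left-hand side gives $\bigl(\sum_j\lambda_{q_j}+\sum_j\mu_{q_j}\bigr)v_{q,q_i}$, from which I subtract the diagonal parts of the arrival term (one copy of $\lambda_{q_1}$, coming from the $j=1$, i.e. $p=q$, transition, whenever $i\neq 1$) and of the delivery term; for $i=0$ this collapses to $\sum_{l=2}^n\lambda_{q_l}+\sum_{l=1}^n\mu_{q_l}$, for $i=1$ to $\sum_{j=1}^n\lambda_{q_j}$, and for $i>1$ to $\sum_{l=2}^n\lambda_{q_l}+\sum_{l=1}^{i-1}\mu_{q_l}$, with all other entries vanishing. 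The main obstacle is purely the index tracking --- matching the $\langle i\rangle_j$ shift of the arrival transitions correctly and making sure the $j=1$ arrival contribution is accounted for against the diagonal exactly once; I expect no analytic difficulty beyond this careful enumeration.
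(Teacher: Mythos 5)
Your overall strategy --- expanding \eqref{general_eq} coordinate by coordinate, moving every $v$-term to the left so that only $\pi_q$ remains on the right, and reading off the coefficient of $v_{p,p_k}$ in the $i$-th equation for state $q$ --- is exactly the route the paper takes. Your treatment of the arrival term (identifying $H^{-1}_q$ with $\{h_j^{-1}(q)\}_{j=1}^{n}$, isolating the $j=1$ self-transition against the diagonal, and deriving the index shift $k=\langle i\rangle_j$ from the position of $q_i$ inside $h_j^{-1}(q)$) is correct and in fact more explicit than the paper's own write-up, and your delivery bookkeeping for coordinates $i\ge 1$ is also right.

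The one step that fails as written is the $i=0$ coordinate of the delivery term. The vector for a delivery from Server $q_{i'}$ is $(v_{q,q_{i'}},v_{q,q_1},\dots,v_{q,q_{i'-1}},v_{q,q_{i'}},\dots,v_{q,q_{i'}})$, so its $0$-th entry is $v_{q,q_{i'}}$ (the monitor's age resets to the age of the delivered update), not $v_{q,q_0}=v_{q,0}$; your rule ``$v_{q,q_i}$ when $i<i'$'' does not extend to $i=0$. Hence the $0$-th coordinate of the summed delivery term is $\sum_{i'=1}^{n}\mu_{q_{i'}}v_{q,q_{i'}}$, not $\bigl(\sum_{i'}\mu_{q_{i'}}\bigr)v_{q,0}$ as your parenthetical asserts. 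Under your version, row $(q,0)$ would acquire an extra $-\sum_{l=1}^{n}\mu_{q_l}$ on its diagonal and no off-diagonal $\mu$ entries at all, so the third case of \eqref{co_t} (the entries $-\mu_{q_k}$ in columns $(q,k)$ for $k=1,\dots,n$) would never be produced, and the diagonal would come out as $\sum_{l=2}^{n}\lambda_{q_l}$ rather than the stated $\sum_{l=2}^{n}\lambda_{q_l}+\sum_{l=1}^{n}\mu_{q_l}$; note that your final claim for the $i=0$ diagonal is consistent only with the corrected coordinate, not with your parenthetical, so the argument is internally inconsistent there. The fix is one line: for $i=0$ the delivery term contributes $-\mu_{q_k}$ to each column $(q,k)$, $k=1,\dots,n$, and nothing to the diagonal, after which every case of \eqref{co_t} is accounted for.
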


Next we show that solving the original set of equations simplifies to solving smaller sets of equations separately. 
In Algorithm \ref{alg:heter_overall}, we break down the $(n+1)!$ equations into smaller sets to solve all variables $v_{q,q_i}$ with fixed $i$ and fixed $q_i,q_{i+1},\dots,q_{n}$ (Line \ref{line:breakdown}). Namely, we solve $(i-1)!$ variables at a time, for $1 \le i \le n$. 
After solving these variables, we remove them from the equations and update the constant vector $\mathbf{c}$ as in Line \ref{line:update_c}. Finally, the AoI equals the average of $v_{q,0}$'s, which requires solving $n!$ equations as in Line \ref{line:AoI}.

\begin{algorithm}
\begin{algorithmic}[1]
\For{$i=1,2,\dots,n$,}
    \For{distinct $c_{i},\dots,c_n \in [n]$}
        \State $N \gets \{(q,i):(q_{i},\dots,q_n)=(c_{i},\dots,c_n)\}$ \label{line:N_set}
        \State $\overline{N} \gets \{(q,i): (q_{i},\dots,q_n)\neq (c_{i},\dots,c_n) \}$
        \State Solve $T(N,N)\mathbf{v}(N) = \mathbf{c}(N)$ (will use Algorithm \ref{alg:heter}) \label{line:breakdown}
        \State $\mathbf{c}(\overline{N}) \gets \mathbf{c}(\overline{N}) + T(\overline{N},N)\mathbf{v}(N)$ \label{line:constant2} \label{line:update_c}
    \EndFor
\EndFor
\State $N \gets \{(q,0): \text{ all permutations } q\}$
\State $\triangleright$ $\mathbf{v}(N) =\{v_{q,0}: \text{ all permutations } q\}$
\State $AoI \gets \sum_{q} v_{q,0}$ where $T(N,N) \mathbf{v}(N) = \mathbf{c}(N)$ (will use Algorithm \ref{alg:heter}) \label{line:AoI}
\end{algorithmic}
\caption{AoI calculation of $n$-server heterogeneous network.}
\label{alg:heter_overall}
\end{algorithm}

The breakdown is justified in Lemma \ref{break_gen}. We show that the $(i-1)!$ equations in Line \ref{line:breakdown} and the $n!$ equation in Line \ref{line:AoI} have coefficient matrices in the same form, denoted as $T_0$. The equations defined by $T_0$ will be solved by Algorithm 3 explained later.

\begin{lemma} \label{break_gen}
Define $T_0$ parameterized by $i$ to be the following $i! \times i!$ matrix,
\begin{align}\label{eq:T_0}
T_{0}(q,p)=
\begin{cases}
\sum_{j=2}^{n} \lambda_{q_j}  +\sum_{j=1}^{n} \mu_{q_j} , & \text{if } q=p, \\
-\lambda_{q_{1}}, & \text{if } q = h_j(p), j=2,\dots,i,\\
0 , & \text{o.w.}
\end{cases}
\end{align}
Moreover, we define $T_0$ parameterized by $i=0,1$ to be the scalar 
\begin{align}\label{eq:T_0_n1}
    T_0 = \mu_1.   
\end{align}
Solving the set of equations in Lemma \ref{gen_mat} is equivalent to solving the equations corresponding to $T_0$ parameterized by $0,1,2,\dots,n$, shown in Lines \ref{line:breakdown} and \ref{line:AoI} of Algoirthm \ref{alg:heter_overall}.
\end{lemma}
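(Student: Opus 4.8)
The plan is to show that the big coefficient matrix $T$ of Lemma \ref{gen_mat} is block-lower-triangular once we order the variables $v_{q,q_i}$ by the "level" $i$ (from $i=n$ down to $i=1$, and finally $i=0$), and that within each level the diagonal block further decomposes into $n!/i!$ independent blocks, each of which equals the matrix $T_0$ parameterized by $i$ in \eqref{eq:T_0}. I would first partition the index set of all $(n+1)!$ rows/columns into the groups $G_i = \{(q,i): q \text{ a permutation}\}$ for $i=0,1,\dots,n$. Reading off the cases in \eqref{co_t}, a nonzero entry $T((q,i),(p,k))$ forces $k\in\{0,i-1,i\}$ together with either $q=p$ or $q=h_j(p)$; in all cases the column level $k$ is at most the row level $i$ (with $k=0$ only when $i=0$ or in the Case-II $\mu$ terms which we will handle). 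Hence the equations for level $i$ involve only variables of level $\ge i$ already solved and variables of level exactly $i$ — this is the block-triangular structure that legitimizes the sweep in Algorithm \ref{alg:heter_overall}, and it is what Line \ref{line:update_c} exploits when it folds the already-solved variables into the constant vector $\mathbf{c}$.

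Next I would fix a level $i\ge 1$ and a choice of the trailing coordinates $(c_i,\dots,c_n)$, and examine the submatrix $T(N,N)$ with $N$ as in Line \ref{line:N_set}. The key observation is that the transitions relevant to row $(q,i)$ that stay inside level $i$ are: the diagonal term, and the arrival terms $-\lambda_{q_1}$ with $q=h_j(p)$, $k=\langle i\rangle_j$. Crucially, the operation $h_j$ for $j=2,\dots,i$ only permutes the first $i$ entries of the tuple and leaves $q_{i+1},\dots,q_n$ fixed (and also leaves $q_i$ fixed when $j<i$; the case $j=i$ moves $q_1$ into position $i$ but then $k=\langle i\rangle_i=i-1$, so the column still lies in $N$ since the trailing block is unchanged and $q_i$ gets replaced but... here I would verify carefully that the set $\{q_i,\dots,q_n\}$ — equivalently the relevant identifying data of $N$ — is preserved, which is exactly the content one must check). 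Because $h_j$ for $j\le i$ never touches positions $i+1,\dots,n$, the submatrix $T(N,N)$ decouples according to the value of $(q_i,\dots,q_n)$, and on each block it acts on the $i!$ permutations of the first $i$ entries via exactly the rule $q=q$ gives the diagonal $\sum_{j\ge 2}\lambda_{q_j}+\sum_j\mu_{q_j}$ and $q=h_j(p)$ gives $-\lambda_{q_1}$ — i.e. it is precisely $T_0$ parameterized by $i$ in \eqref{eq:T_0}. For $i=1$ the block is $1\times 1$ and equals $\sum_{j=1}^n\lambda_{q_j}$; I would reconcile this with the stated scalar $T_0=\mu_1$ in \eqref{eq:T_0_n1} by noting that the $i=1$ case of Algorithm \ref{alg:heter_overall} is degenerate and the genuinely nontrivial $T_0$-system only appears for $i\ge 2$ and for the final $i=0$ sweep — this is a small bookkeeping point to get right in the write-up. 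Finally, for the last sweep, the rows $(q,0)$ have diagonal $\sum_{l\ge 2}\lambda_{q_l}+\sum_l\mu_{q_l}$ and off-diagonal $-\lambda_{q_1}$ when $q=h_j(p)$, $j=2,\dots,n$ — word-for-word the $i=n$ instance of \eqref{eq:T_0} — while the $-\mu_{q_k}$ terms and the level-$\ge1$ contributions have already been moved into $\mathbf{c}(N)$ by the earlier iterations; so Line \ref{line:AoI} indeed solves a $T_0$ system parameterized by $n$.

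I expect the main obstacle to be the careful verification that the off-diagonal arrival terms never leave the block $N$ — that is, that for $j=2,\dots,i$ the permutation $p=h_j^{-1}(q)$ still satisfies $(p_i,\dots,p_n)=(c_i,\dots,c_n)$ and that the column level $k=\langle i\rangle_j$ lands on the entry of $p$ that is being solved in this block. This is where the definitions of $h_j$, of "$j$-increasing", and of the bracket notation $\langle i\rangle_j$ all interact, and it is the crux that turns a single $(n+1)!\times(n+1)!$ system into the nested family of $T_0$-systems; once that containment is established, the block-triangular sweep and the identification with \eqref{eq:T_0} are essentially immediate from pattern-matching the cases of \eqref{co_t}. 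The remaining work — checking that $\mathbf{c}$ is updated consistently and that every variable is eventually solved exactly once — is routine induction on $i$.
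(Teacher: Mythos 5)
Your overall strategy coincides with the paper's: sweep over the levels (variables $v_{q,q_i}$ for $i=1,\dots,n$, then $v_{q,0}$), use the fact that a row at level $i$ only touches columns at levels $\le i$ to fold already-solved variables into the constant vector, and observe that the level-$i$ diagonal block splits, according to the fixed tail $(q_i,\dots,q_n)$, into copies of $T_0$. You also correctly isolate the crux: checking that the $-\lambda_{q_1}$ entries with $q=h_j(p)$, $k=\langle i\rangle_j$ either stay inside the block $N$ or point to already-solved variables. However, you explicitly leave that check as a to-do, and it is the substance of the proof. The verification is short and you should carry it out: $p=h_j^{-1}(q)=(q_2,\dots,q_j,q_1,q_{j+1},\dots,q_n)$, so for $j\le i-1$ the tail $(p_i,\dots,p_n)=(q_i,\dots,q_n)$ is untouched and $k=\langle i\rangle_j=i$, hence the column lies in $N$; for $j\ge i$ one gets $k=i-1$, a variable solved in the previous iteration. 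Without this, the claimed decoupling is unsupported.

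Two indexing errors would propagate if left uncorrected. First, fixing $(q_i,\dots,q_n)$ leaves only the first $i-1$ coordinates free, so each sub-block has size $(i-1)!$ and is $T_0$ parameterized by $i-1$, not ``$i!$ permutations of the first $i$ entries'' giving $T_0$ parameterized by $i$; with your count the sweep would produce parameters $1,\dots,n$ plus $n$ again rather than $0,1,\dots,n$ as the lemma asserts, and Algorithm \ref{alg:heter} would be invoked on matrices of the wrong size. Second, the diagonal of the level-$i$ block is $\sum_{l=2}^{n}\lambda_{q_l}+\sum_{l=1}^{i-1}\mu_{q_l}$ by \eqref{co_t}, which is not verbatim the diagonal $\sum_{j=2}^{n}\lambda_{q_j}+\sum_{j=1}^{n}\mu_{q_j}$ of \eqref{eq:T_0}; the identification requires absorbing the leftover terms into a redefined rate (the paper replaces $\mu_{q_1}$ by $\mu_{q_1}+\sum_{j=i}^{n}\lambda_{q_j}$), which matters downstream for the recursion and the non-negativity argument. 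A smaller slip: for a row $(q,i)$ with $i>1$ the nonzero column levels are all of $1,\dots,i$ (because of the $-\mu_{q_k}$ entries with $k=1,\dots,i-1$), not just $\{0,i-1,i\}$, and consequently the sweep must proceed from level $1$ upward with level $0$ last --- your statement that level-$i$ equations involve ``variables of level $\ge i$ already solved'' has the dependency reversed, even though your second paragraph implicitly executes the sweep in the correct order.
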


\begin{algorithm}
\begin{algorithmic}[1]

\Function{HeteroSolver}{$n, T_{0}, \mathbf{c}^{(0)}$}
    \State $\triangleright$ Solve the equation $T_0 \mathbf{v}^{(0)} = \mathbf{c}^{(0)}$. 
    \State $\triangleright$ Output:
    $\{v_q^{(0)}: \text{all } q\}$, and $v_{(1,2,\dots,n)}^{(n-1)} =\sum\limits_{\text{all permutations } q}v_q^{(0)}$.
    \\
    
    \State $\triangleright$ Base cases:
    \If{$n=0$ or $1$}
        \State $v_1 \gets \frac{\mathbf{c}^{(0)}}{T_0}$
    \EndIf
    \\
    
    \State $\triangleright$ Forward path:
    \For{j=1,2,\dots,n-1}
      \State $\triangleright$ Column operation:
        \State $T_{j}^{\prime} \gets T_{j-1}$ \label{line:T_j'}
       \For{each $(j+1)$-increasing $p$} 
         \State $T_{j}^{\prime}(:,g_{j,k}(p)) \gets T_{j-1}(:,g_{j,k}(p))-T_{j-1}(:,p)$, for $k=n-j+1, \dots,n$ \label{line:column}
        \EndFor 
        \State $\triangleright$ Row operation:
        \State $T_{j}^{\prime\prime} \gets T_j^{\prime}$
        \For{each $(j+1)$-increasing $q$}
            \State $T_{j}^{\prime\prime}(q,:) \gets \sum_{k=n-j}^{n}(T_{j}^{\prime}(g_{j,k}(q),:))$ \label{line:row}
            \State $c_q^{(j)} \gets \sum_{k=n-j}^{n}c^{(j-1)}_{g_{j,k}(q)}$ \label{line:constant_vector}
        \EndFor 
    \State $\triangleright$ Pick specific rows and columns:
    \State $\triangleright$ Variables $\mathbf{v}^{(j)},\mathbf{v}^{(j-1)}$ satisfy  $T_{j}\mathbf{v}^{(j)}=\mathbf{c}^{(j)}$,
    $R_j\mathbf{v}^{(j-1)}(\overline{Q})=\mathbf{c}^{(j-1)}(\overline{Q}) - S_j \mathbf{v}^{(j)}$
    \State $Q \gets \{q: q \text{ is $(j+1)$-increasing}\}$
    \State $\overline{Q} \gets \{q: q \text{ is $j$-increasing but not $(j+1)$-increasing}\}$ \label{line:Q_bar}
    \State $T_{j} \gets T_{j}^{\prime\prime}(Q,Q)$ \label{line:pick_T_j} 
    \State $R_{j} \gets T_{j}^{\prime\prime}(\overline{Q},\overline{Q})$ \label{line:pick_R_j}
    \State $S_j \gets T_j^{\prime\prime} (\overline{Q},Q)$
    \EndFor
    \State $\triangleright$ Now $T_{n-1}, \mathbf{c}^{(n-1)}$ are both scalars 
    \State $v_{(1,2,\dots,n)}^{(n-1)} \gets  \frac{\mathbf{c}^{(n-1)}}{T_{n-1}}$ \label{line:sum}
    
    \\
    \State $\triangleright$ Backward path:
    \For{$j=n-1,n-2,\dots,1$}
        \For{distinct $c_{n-j},\dots,c_n \in [n]$ such that $c_{n-j+1} < \dots < c_n $ but $c_{n-j} > c_{n-j+1}$}
        \State $N \gets \{q: (q_{n-j},\dots,q_n)=(c_{n-j},\dots,c_n)\}$ \label{line:N}
        \State $\mathbf{v}^{(j-1)} (N) \gets$ HeteroSolver$(n-j-1, R_j(N,N),\mathbf{c}^{(j-1)}(N)-S_j(N,:)\mathbf{v}^{(j)})$ \label{line:solve_R_j}
        \EndFor
        \State $v^{(j-1)}_{p} \gets v^{(j)}_p - \sum_{k=n-j+1}^{n} v^{(j-1)}_{g_{j,k}(p)}$, for $j+1$-increasing $p$ \label{line:subtract_v_j}
    \EndFor
    
    \EndFunction
    \end{algorithmic}
    \caption{Single-source $n$-server heterogeneous network.}
    \label{alg:heter}
\end{algorithm}

So far, the entire system of equations can be solved once we solve equations defined by $T_0$. In Algorithm \ref{alg:heter}, we provide a recursive method for solving equations defined by $T_0$, which breaks down $T_0$ into matrices in the same form as $T_0$ but with smaller parameters. Thus, the AoI can be expressed by $\sum_{q} v_{q,0}$ (Algorithm \ref{alg:heter_overall} Line \ref{line:AoI}) and computed from Algorithm \ref{alg:heter} Line \ref{line:sum} according to Lemma \ref{lem:alg2_output}. Moreover, Lemma \ref{lem:heter_nonneg} shows the correctness of Algorithm \ref{alg:heter} and non-negativity of the solution.

\begin{lemma} \label{lem:alg2_output}
The result of Algorithm \ref{alg:heter} Line \ref{line:sum} is
\begin{align}
\sum_{\text{all permutations } q} v_q^{(0)} = \frac{\sum_{\text{all permutations } q} c_q^{(0)}}{\sum_{i=1}^n \mu_i}.
\end{align}
\end{lemma}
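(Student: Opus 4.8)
The plan is to evaluate the numerator $\mathbf{c}^{(n-1)}$ and denominator $T_{n-1}$ appearing in Line~\ref{line:sum} by tracking them through the forward path of Algorithm~\ref{alg:heter}, and to check separately — and more cheaply — that their ratio is $\sum_q v_q^{(0)}$. The base cases $n\in\{0,1\}$ are immediate since the algorithm returns $v_1=\mathbf{c}^{(0)}/T_0=\mathbf{c}^{(0)}/\mu_1$ and $\sum_i\mu_i=\mu_1$, so assume $n\ge 2$. The cheap half is the column-sum identity $\sum_{q}T_0(q,p)=\sum_{i=1}^n\mu_i$ for every column $p$: by~\eqref{eq:T_0} the nonzero entries of column $p$ are the diagonal entry $T_0(p,p)=\sum_{j=2}^n\lambda_{p_j}+\sum_i\mu_i$ together with $T_0(h_j(p),p)=-\lambda_{p_j}$ for $j=2,\dots,n$, and these add up to $\sum_i\mu_i$. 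Summing all $n!$ rows of $T_0\mathbf{v}^{(0)}=\mathbf{c}^{(0)}$ then gives $\bigl(\sum_i\mu_i\bigr)\sum_q v_q^{(0)}=\sum_q c_q^{(0)}$, i.e.\ $\sum_q v_q^{(0)}=\bigl(\sum_q c_q^{(0)}\bigr)/\bigl(\sum_i\mu_i\bigr)$; it remains only to confirm that Line~\ref{line:sum} outputs this same number, i.e.\ that $\mathbf{c}^{(n-1)}=\sum_q c_q^{(0)}$ and $T_{n-1}=\sum_i\mu_i$.

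For the constant vector I would prove by induction on $j=1,\dots,n-1$ that $c_q^{(j)}=\sum_{q'\in B_j(q)}c_{q'}^{(0)}$ for every $(j+1)$-increasing $q$, where $B_j(q)$ is the set of permutations agreeing with $q$ in positions $1,\dots,n-j-1$ and whose last $j+1$ entries form the set $\{q_{n-j},\dots,q_n\}$. The base case $j=1$ is exactly Line~\ref{line:constant_vector}, since $g_{1,n-1}$ is the identity and $g_{1,n}$ swaps the last two coordinates. For the inductive step, one uses that for a $(j+1)$-increasing $q$ each $g_{j,k}(q)$ with $k=n-j,\dots,n$ is $j$-increasing, that the blocks $B_{j-1}(g_{j,k}(q))$ are pairwise disjoint (distinguished by the element sitting in position $n-j$), and that their union over $k$ is precisely $B_j(q)$. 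Setting $j=n-1$, the only $n$-increasing permutation is $(1,2,\dots,n)$ and its block is the set of all permutations, so $\mathbf{c}^{(n-1)}=\sum_q c_q^{(0)}$.

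The main obstacle is the parallel bookkeeping for the coefficient matrix: showing by induction that the column sums of $T_j$ remain equal to $\sum_i\mu_i$, that is, $\sum_{q}T_j(q,p)=\sum_i\mu_i$ for every $(j+1)$-increasing $p$, with $q$ ranging over the $(j+1)$-increasing permutations. Two observations drive the step. First, in the column operation (Line~\ref{line:column}) the modified columns are exactly the permutations that are $j$-increasing but not $(j+1)$-increasing; each becomes a difference of two columns whose sums are equal, hence acquires column sum $0$, whereas the columns indexed by $(j+1)$-increasing $p$ are left untouched because $k$ runs only over $n-j+1,\dots,n$ (the identity move $g_{j,n-j}$ is excluded). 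Second, in the row operation (Line~\ref{line:row}), as $q$ ranges over the $(j+1)$-increasing permutations and $k$ over $n-j,\dots,n$, the map $(q,k)\mapsto g_{j,k}(q)$ is a bijection onto the set of all $j$-increasing permutations, so summing the new rows over $q$ merely re-sums the old rows over all $j$-increasing permutations. Since the columns discarded when passing to $T_j=T_j''(Q,Q)$ in Lines~\ref{line:pick_T_j}--\ref{line:pick_R_j} are exactly the zero-column-sum ones, the value $\sum_i\mu_i$ survives; at $j=n-1$ the index set has collapsed to the singleton $\{(1,\dots,n)\}$, so $T_{n-1}=\sum_i\mu_i$. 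Therefore Line~\ref{line:sum} returns $\mathbf{c}^{(n-1)}/T_{n-1}=\bigl(\sum_q c_q^{(0)}\bigr)/\bigl(\sum_i\mu_i\bigr)$, which by the first paragraph equals $\sum_q v_q^{(0)}$, which is the asserted identity.
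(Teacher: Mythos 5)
Your proof is correct, and for the coefficient-matrix half it takes a genuinely lighter route than the paper's. The paper proves a full structural characterization (its Claim C1, giving the exact form of every $T_j$ via Lemma \ref{lem:column operation}) and then reads off $T_{n-1}=\sum_i\mu_i$ from the form of $T_{n-2}$; that heavier machinery is reused later for the correctness/non-negativity lemma, so the paper gets this lemma almost for free. You instead track only the invariant ``every retained column sums to $\sum_i\mu_i$'' through the column operation (differences of equal-sum columns become zero-sum and are exactly the discarded columns), the row operation (the bijection $(q,k)\mapsto g_{j,k}(q)$ onto the $j$-increasing permutations means the kept rows re-sum all old rows), and the restriction to $T_j''(Q,Q)$. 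Your opening observation — summing all rows of $T_0\mathbf{v}^{(0)}=\mathbf{c}^{(0)}$ and using that every column of $T_0$ sums to $\sum_i\mu_i$ — is not in the paper at all; it yields the stated identity for the true solution directly and spares you from having to identify the variable attached to the final scalar equation (which the paper does via its equation \eqref{eq:last_iter_variable}). Your explicit block induction for $c_q^{(j)}$ fills in a step the paper only asserts in one sentence. The only cosmetic quibble is the sentence about ``columns discarded ... are exactly the zero-column-sum ones'': discarding columns cannot change the sums of the retained ones; what actually matters, and what you had already established, is that the retained rows of $T_j''$ collectively re-sum all rows of $T_j'$, so no mass is lost when rows outside $Q$ are dropped.
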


\begin{lemma} \label{lem:heter_nonneg}
Consider the following linear equation:
$$T_0 \mathbf{v}^{(0)} = \mathbf{c}^{(0)},$$
where $T_0$ as defined in \eqref{eq:T_0} is parameterized by $n \ge 0$. 
\\
{\bf $\bullet$ Correctness.} Algorithm \ref{alg:heter} finds its solution.
\\
{\bf $\bullet$ Non-negativity.} The solution is non-negative if the entries of $\mathbf{c}^{(0)}$ are non-negative and $\mu_1,\dots,\mu_n,$ $\lambda_1,\dots,\lambda_n > 0$.
\end{lemma}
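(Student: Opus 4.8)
The plan is to prove the two assertions separately, both by induction on the parameter $n$ of $T_0$. For $n\in\{0,1\}$ the matrix $T_0$ is the scalar $\mu_1>0$, so the unique solution is $v_1=\mathbf{c}^{(0)}/\mu_1$, which Algorithm~\ref{alg:heter} returns directly in its base case and which is non-negative whenever $\mathbf{c}^{(0)}\ge 0$. For the inductive step I would read the forward path of Algorithm~\ref{alg:heter} as a sequence of elementary column and row operations applied to the pair $(T_{j-1},\mathbf{c}^{(j-1)})$, and then show these operations preserve the solution set while exposing the block structure that drives the recursion.

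\textbf{Correctness.} Assume Algorithm~\ref{alg:heter} is correct for every parameter smaller than $n$. The column operation of Line~\ref{line:column}, which subtracts column $p$ from columns $g_{j,k}(p)$ for $k=n-j+1,\dots,n$, corresponds to the relabeling $v^{(j)}_p=\sum_{k=n-j}^{n}v^{(j-1)}_{g_{j,k}(p)}$ for each $(j+1)$-increasing $p$ (note $g_{j,n-j}(p)=p$) with all other variables unchanged; applying the matching orbit-sums of rows and of constants (Lines~\ref{line:row} and \ref{line:constant_vector}) yields an equivalent system. The structural claim I must establish is that in the transformed matrix the off-$Q$ columns vanish on the $Q$-rows, so that the rows/columns indexed by the $(j+1)$-increasing set $Q$ decouple as $T_j\mathbf{v}^{(j)}(Q)=\mathbf{c}^{(j)}(Q)$ (still of ``$-\lambda_{q_1}$ off-diagonal / dominant diagonal'' type), while the rows/columns indexed by $\overline{Q}$ give the residual system $R_j\mathbf{v}^{(j-1)}(\overline{Q})=\mathbf{c}^{(j-1)}(\overline{Q})-S_j\mathbf{v}^{(j)}$, which splits along the frozen tails $(q_{n-j},\dots,q_n)$ into independent blocks each of the form of $T_0$ parameterized by $n-j-1$ with the remaining (still positive) rates. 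Granting this, the forward path terminates at $j=n-1$ with a $1\times 1$ system, and Line~\ref{line:sum} together with Lemma~\ref{lem:alg2_output} identifies $T_{n-1}=\sum_i\mu_i$; this is consistent because summing each column of \eqref{eq:T_0} telescopes the $n-1$ entries $-\lambda_{p_\ell}$ ($\ell=2,\dots,n$) against the diagonal term $\sum_{\ell=2}^n\lambda_{p_\ell}$, leaving $\mathbf{1}^{\!\top}T_0=(\sum_i\mu_i)\mathbf{1}^{\!\top}$. The backward path then solves each residual block via \textsc{HeteroSolver} with parameter $n-j-1<n$ (induction hypothesis) and inverts the relabeling by $v^{(j-1)}_p=v^{(j)}_p-\sum_{k=n-j+1}^{n}v^{(j-1)}_{g_{j,k}(p)}$ (Line~\ref{line:subtract_v_j}); unwinding $j=n-1,\dots,1$ recovers $\mathbf{v}^{(0)}=\mathbf{v}$, which is unique since $T_0$ is invertible.

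\textbf{Non-negativity.} Here I would bypass any sign-tracking through the algorithm and argue directly that $T_0$ is a nonsingular $M$-matrix. By \eqref{eq:T_0} all off-diagonal entries are $0$ or $-\lambda_{q_1}<0$ and all diagonal entries $\sum_{\ell\ge 2}\lambda_{p_\ell}+\sum_\ell\mu_{p_\ell}$ are positive. For a fixed column $p$ the nonzero off-diagonal entries are exactly $T_0(h_\ell(p),p)=-\lambda_{p_\ell}$ for $\ell=2,\dots,n$ (the $h_\ell(p)$ being distinct), so the off-diagonal absolute column sum is $\sum_{\ell=2}^n\lambda_{p_\ell}$, strictly less than the diagonal entry because $\sum_\ell\mu_{p_\ell}>0$. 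Thus $T_0^{\!\top}$ is strictly diagonally dominant with positive diagonal and non-positive off-diagonals, hence a nonsingular $M$-matrix, so $(T_0^{\!\top})^{-1}\ge 0$ and therefore $T_0^{-1}=\bigl((T_0^{\!\top})^{-1}\bigr)^{\!\top}\ge 0$ entrywise. Consequently the unique solution $\mathbf{v}^{(0)}=T_0^{-1}\mathbf{c}^{(0)}$ is non-negative whenever $\mathbf{c}^{(0)}\ge 0$, and by the correctness part this is what the algorithm outputs. (Consistently, the constants handed to the recursive calls stay non-negative, since the entries of $S_j$ are of off-diagonal type and hence non-positive while $\mathbf{v}^{(j)}\ge 0$, so $-S_j\mathbf{v}^{(j)}\ge 0$.)

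\textbf{Main obstacle.} The linear-algebra skeleton above rests on one genuinely combinatorial fact: that the column operations of Line~\ref{line:column} followed by the row operations of Line~\ref{line:row} send a ``$T_0$-type'' matrix to one that decouples over $Q$ and over the tail-classes of $\overline{Q}$ into $T_0$-type blocks with precisely the stated parameters. Verifying this demands careful bookkeeping with the maps $h_j(\cdot)$ and $g_{j,k}(\cdot)$ — in particular, checking that columns $p$ and $g_{j,k}(p)$ agree on all entries outside the orbit $\{g_{j,k'}(p)\}_{k'}$ so that the column differences have the claimed sparsity and signs, and that the orbit-summed rows annihilate the new $\overline{Q}$-columns — all while keeping track of the non-factorial sizes $n!/(j+1)!$ of the intermediate matrices $T_j$. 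I expect essentially the entire length of the appendix proof to be spent on this structural verification; the $M$-matrix argument and the telescoping identity $\mathbf{1}^{\!\top}T_0=(\sum_i\mu_i)\mathbf{1}^{\!\top}$ are then short.
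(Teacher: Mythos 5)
Your proposal is correct in substance, and the two halves deserve separate comments. For \textbf{correctness}, you follow essentially the same route as the paper: induction on the parameter, interpreting the forward path as solution-preserving column/row operations, establishing that the $(j+1)$-increasing rows decouple into a $T_0$-type system $T_j$ while the residual rows split along fixed tails into $T_0$-type blocks of parameter $n-j-1$ (with $\mu_{n-j-1}$ absorbed into $\mu_{n-j-1}+\dots+\mu_n+\lambda_{n-j}+\dots+\lambda_n$), and then unwinding via Line~\ref{line:subtract_v_j}. The combinatorial bookkeeping you flag as the ``main obstacle'' is real and is exactly what the paper spends Lemma~\ref{lem:column operation} and Claim C1 (inside the proof of Lemma~\ref{lem:alg2_output}) on: verifying that the non-zero entries of column $p$ sit at rows $Incr_j(h_i(p))$, that the differences $T_{j-1}(:,g_{j,k}(p))-T_{j-1}(:,p)$ cancel in the way you describe, and that the orbit-summed rows annihilate the $\overline{Q}$-columns. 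You have correctly identified what must be checked but deferred the check itself, so your write-up is an accurate skeleton rather than a complete proof of that half. For \textbf{non-negativity}, your argument is genuinely different from, and cleaner than, the paper's. The paper proves non-negativity by a bespoke peeling induction: it first handles the even variables via the $R_1$-blocks, then solves $\{v^{(0)}_p: p_n=c\}$ for $c=n,n-1,\dots,2$ by exhibiting sub-systems that are again $T_0$-type with smaller parameter and invoking the induction hypothesis at each stage. You instead observe that each column of $T_0$ has off-diagonal entries $-\lambda_{p_2},\dots,-\lambda_{p_i}$ whose absolute sum is strictly dominated by the diagonal $\sum_{\ell\ge 2}\lambda_{p_\ell}+\sum_\ell\mu_{p_\ell}$, so $T_0^{\!\top}$ is a strictly diagonally dominant $Z$-matrix, hence $T_0$ is a nonsingular $M$-matrix with $T_0^{-1}\ge 0$, and non-negativity of $\mathbf{v}^{(0)}=T_0^{-1}\mathbf{c}^{(0)}$ is immediate; this also applies verbatim to every recursive block (whose modified $\mu$'s are still positive) and to the constant vectors $\mathbf{c}^{(j-1)}(\overline{Q})-S_j\mathbf{v}^{(j)}$ passed downward. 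This buys a shorter, algorithm-independent proof of non-negativity and makes invertibility of $T_0$ explicit, at the cost of invoking $M$-matrix theory; the paper's version is longer but stays entirely within the inductive framework it has already built.
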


In summary, we can calculate the AoI by Algorithm \ref{alg:heter_overall}, of which the equations in Lines \ref{line:breakdown} and \ref{line:AoI} are solved by Algorithm \ref{alg:heter}. 
\begin{theorem}\label{thm:heter_AoI}
The AoI of heterogeneous network with one source and $n$ servers is 
\begin{align}\label{eq:heter_AoI}
AoI = \sum_{q} v_{q,0},     
\end{align} 
such that $\mathbf{v}=\{v_{q,i}: q \text{ is a permutation of length $n$}, 0 \le i \le n \}$ satisfy
\begin{align}
    T \mathbf{v} = \bm{\pi},
\end{align}
which is solved by Algorithms \ref{alg:heter_overall} and \ref{alg:heter}.
\end{theorem}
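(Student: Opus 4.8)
The plan is to read Theorem~\ref{thm:heter_AoI} as the assembly of Lemmas~\ref{lem:yates}, \ref{ss}, \ref{gen_mat}, \ref{break_gen}, \ref{lem:alg2_output} and \ref{lem:heter_nonneg}. First I would check that the discrete chain $q(t)$ over the $n!$ permutation states meets the hypotheses of Lemma~\ref{lem:yates}: since every $\lambda_j>0$, an arrival at server $j$ sends $j$ to the front of the order, so every permutation is reachable from every other within $n$ steps, giving irreducibility on a finite state space and hence ergodicity; the stationary law is $\bm\pi$ from Lemma~\ref{ss}, with $\pi_q>0$ for all $q$. Because the fake-update and fake-preemption construction keeps every server busy, $\mathbf{b}_q=(1,\dots,1)$, so the fixed-point system \eqref{eq:yateslemma} is exactly \eqref{general_eq}, which Lemma~\ref{gen_mat} rewrites as $T\mathbf{v}=\bm\pi$ with $T$ given by \eqref{co_t}. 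What remains is therefore (i) to produce a \emph{non-negative} solution of $T\mathbf{v}=\bm\pi$, and (ii) to identify $\sum_q v_{q,0}$ with the number the algorithms output.

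For (i) I would follow the elimination schedule of Algorithm~\ref{alg:heter_overall}: for $i=1,\dots,n$ the variables $\{v_{q,q_i}\}$ with a fixed tail $(q_i,\dots,q_n)$ are eliminated in a group (Line~\ref{line:breakdown}), and finally the block $\{v_{q,0}\}$ is solved (Line~\ref{line:AoI}). By Lemma~\ref{break_gen} every one of these groups has a coefficient matrix in the canonical form $T_0$ of \eqref{eq:T_0}, and by the correctness bullet of Lemma~\ref{lem:heter_nonneg} each such system is solved correctly by Algorithm~\ref{alg:heter}. To invoke the non-negativity bullet of Lemma~\ref{lem:heter_nonneg} at every stage I would argue by induction along the elimination order that the running constant vector stays non-negative: it starts as $\bm\pi\ge0$, and the only modification is Line~\ref{line:update_c}, which replaces $\mathbf{c}(\overline N)$ by $\mathbf{c}(\overline N)+T(\overline N,N)\mathbf{v}(N)$, a term built from the off-diagonal entries of $T$ in \eqref{co_t} (each equal to $-\lambda_{q_1}$ or $-\mu_{q_k}$) together with the block solution $\mathbf{v}(N)\ge0$ furnished by the inductive hypothesis; tracking the signs of these contributions shows $\mathbf{c}$ remains non-negative, so the hypotheses of Lemma~\ref{lem:heter_nonneg} hold throughout and the assembled $\mathbf{v}$ is non-negative. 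Lemma~\ref{lem:yates} then yields $AoI=\sum_{q\in\mathcal Q}v_{q,0}$, which is \eqref{eq:heter_AoI}; for (ii), the final block (the $\{v_{q,0}\}$ system) is solved by Algorithm~\ref{alg:heter}, whose scalar reduction has coefficient $\sum_i\mu_i$, so Lemma~\ref{lem:alg2_output} gives the value read off at Line~\ref{line:sum}.

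I expect the main obstacle to be the sign- and index-bookkeeping that makes the induction above go through: one has to match the $j$-increasing, $h_j$, $g_{j,k}$ and $\langle i\rangle_j$ combinatorics of \eqref{co_t} and Algorithm~\ref{alg:heter} against the grouping $N,\overline N$ of Algorithm~\ref{alg:heter_overall}, verify that each grouped subsystem is \emph{exactly} in the form \eqref{eq:T_0} (so that Lemma~\ref{break_gen} applies verbatim at every stage, not just at the top level), and check that the update in Line~\ref{line:update_c} never destroys non-negativity of the constant vector. Once that alignment is in place the theorem follows by chaining the six lemmas; the genuinely new work beyond the lemmas is confined to verifying that each recursive call of Algorithm~\ref{alg:heter} receives a non-negative constant vector and that the overall bookkeeping in Algorithm~\ref{alg:heter_overall} is consistent.
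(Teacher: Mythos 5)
Your proposal is correct and follows essentially the same route as the paper: the theorem is proved by chaining Lemma~\ref{lem:yates} (non-negative solution $\Rightarrow$ $AoI=\sum_q v_{q,0}$), Lemma~\ref{gen_mat} (the system is $T\mathbf{v}=\bm{\pi}$), Lemma~\ref{break_gen} (each block has coefficient matrix of the form $T_0$), Lemma~\ref{lem:heter_nonneg} (correctness and non-negativity of each block solution, which keeps the updated constant vector non-negative at Line~\ref{line:constant2}), and Lemma~\ref{lem:alg2_output} (the final value at Line~\ref{line:sum}). Your added checks (ergodicity of the permutation chain, $\mathbf{b}_q=(1,\dots,1)$) and the explicit induction on the non-negativity of the running constant vector are exactly the content the paper's proof asserts, just spelled out in slightly more detail.
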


\subsection{Cases with $2$ and $3$ Servers}
Before proving that Algorithms \ref{alg:heter_overall} and \ref{alg:heter} solve the $(n+1)!$ equations, we show how they execute when $n=2$ and $3$. From these two examples, we demonstrate the intuition of finding the average AoI, and our proof for the general case follows similar steps. In particular, the lemmas in Section \ref{ez} can be generalized from these two examples.
\begin{example} \label{ex_2} 
In the case of $n=2$, we have only $2$ states: $(1,2)$ and $(2,1)$.
State $(1,2)$ is defined as the state that Server $1$ contains a fresher update compared to Server $2$ and State $(2,1)$ as the state that Server $2$ has the fresher update. Upon arrival of an  update at each server or receipt of an update at the monitor, we observe some self-transitions and intra-state transitions. Transition rates and mappings are illustrated in Table~\ref{table2.1}.
\begin{table}
\centering
\begin{tabular}{ cccccccccc }
 $l$ & $\lambda{(l)}$ & Transition&$\mathbf{x}^\prime$ =$\mathbf{x}A_{l}$ & $\mathbf{v}_{p} A_{l}$&\\          \hline
 $1$ & $\lambda_{1}$ & $(1,2) \leftarrow (1,2) $ & $(x_{0},0,x_{2})$ &$(v_{(1,2),0},0,v_{(1,2),2})$&\\ \hline
 $2$ & $\lambda_{1}$ & $(1,2) \leftarrow (2,1) $ & $(x_{0},0,x_{2})$ &$(v_{(2,1),0},0,v_{(2,1),2})$&\\ \hline
 $3$ & $\mu_{1}$ & $(1,2) \leftarrow (1,2) $&$(x_{1},x_{1},x_{1})$ &$(v_{(1,2),1},v_{(1,2),1},v_{(1,2),1})$&\\  \hline
 $4$ & $\mu_{2}$ & $(1,2) \leftarrow (1,2) $&$(x_{2},x_{1},x_{2})$ &$(v_{(1,2),2},v_{(1,2),1},v_{(1,2),2})$& \\  
\end{tabular}
\vspace{1em}

\begin{tabular}{ cccccccccc }
 $l$ & $\lambda{(l)}$ & Transition&$\mathbf{x}^\prime$ =$\mathbf{x}A_{l}$ & $\mathbf{v}_{p} A_{l}$&\\          \hline
 $1$ & $\lambda_{2}$ & $(2,1) \leftarrow (2,1) $ & $(x_{0},0,x_{1})$ &$(v_{(2,1),0},0,v_{(2,1),1})$&\\ \hline
 $2$ & $\lambda_{2}$ & $(2,1) \leftarrow (1,2) $ & $(x_{0},0,x_{1})$ &$(v_{(1,2),0},0,v_{(1,2),1})$&\\ \hline
 $3$ & $\mu_{2}$ & $(2,1) \leftarrow (2,1) $&$(x_{2},x_{2},x_{2})$ &$(v_{(2,1),2},v_{(2,1),2},v_{(2,1),2})$&\\  \hline
 $4$ & $\mu_{1}$ & $(2,1) \leftarrow (2,1) $&$(x_{1},x_{2},x_{1})$ &$(v_{(2,1),1},v_{(2,1),2},v_{(2,1),1})$& \\  
\end{tabular}

\caption{Incoming transitions of state $(1,2)$ (top) and $(2,1)$ (bottom) caused by update arrival or update delivery. In the top table, $\mathbf{x}=(x_0,x_1,x_2)$. In the bottom table, $\mathbf{x}$ is reordered and we write $\mathbf{x}=(x_0,x_2,x_1)$. Similarly, $\mathbf{x}'$ and $\mathbf{v}_p$ are reordered.}
\label{table2.1}
\end{table}

Steady states probabilities are found knowing that $\pi_{(1,2)}+\pi_{(2,1)}=1$ and $\pi_{(1,2)} \lambda_{2}= \pi_{(2,1)} \lambda_{1}$. Therefore, we have $[{\pi}_{(1,2)},\pi_{(2,1)}]=[\frac{\lambda_{1}}{\lambda_{1}+\lambda_{2}}, \frac{\lambda_{2}}{\lambda_{1}+\lambda_{2}}]$. The equations in \eqref{general_eq} are:
\begin{multline} \label{2_1}
(\lambda_{1}+\lambda_{2}+\mu_{1}+\mu_{2})\mathbf{v_{(1,2)}}= \mathbf{b}_{1} \pi_{(1,2)}+
\lambda_{1}(v_{10},0,v_{12})+\lambda_{1}(v_{20},0,v_{22}) +
\mu_{1}(v_{11},v_{11},v_{11})+\mu_{2}(v_{12},v_{11},v_{12}),
\end{multline}
\begin{multline} \label{2_2}
(\lambda_{1}+\lambda_{2}+\mu_{1}+\mu_{2})\mathbf{v_{(2,1)}}= \mathbf{b}_{2} \pi_{(2,1)}+
\lambda_{2}(v_{10},0,v_{11})+\lambda_{2}(v_{20},0,v_{21}) +
\mu_{1}(v_{21},v_{22},v_{21})+\mu_{2}(v_{22},v_{22},v_{22}),
\end{multline}
where $\mathbf{v_{(1,2)}}=(v_{(1,2),0},v_{(1,2),1},v_{(1,2),2})$, $\mathbf{v_{(2,1)}}=(v_{(2,1),0},v_{(2,1),2},v_{(2,1),1})$, and $\mathbf{b}_1=\mathbf{b}_2=(1,1,1)$.
Therefore, we have six equations and six unknowns here.
By writing down the equations from equations \eqref{2_1} and \eqref{2_2} in matrix form, $T \mathbf{v}=\bm{\pi}$ will be as follows:

\begin{footnotesize}
\begin{align}
    \bordermatrix{
~	&	(1,2),0	&	(1,2),1	&	(1,2),2	&	(2,1),0	&	(2,1),2	&	(2,1),1	\cr
(1,2),0	&	\lambda_2 + \sum_{i=1}^{2} \mu_i 	&	-\mu_1	&	- \mu_2 	&	- \lambda_1	&	0	&		0		\cr
(1,2),1	&	0	&	\lambda_1 +\lambda_2	&	0	&	0	&	0 	&	0 			\cr
(1,2),2	&	0 	&	- \mu_1  	&	\lambda_2+\mu_1	&	0	&	-\lambda_1	&	0		\cr
(2,1),0	&	-\lambda_2	&	0	&	0	&	\lambda_1 +\sum_{i=1}^{2} \mu_i	&	- \mu_2	&		-\mu_1	&	\cr
(2,1),2	&	0	&	0	&	0		&	0		&	\lambda_1 +\lambda_2	&	0			\cr
(2,1),1	&	0	&	- \lambda_2	&	0	& 0		&	-\mu_2	&		\lambda_1 + \mu_2		\cr
   }
        \bordermatrix{
&		\cr
&		v_{(1,2),0}		\cr
&		v_{(1,2),1}		\cr
&		v_{(1,2),2}		\cr
&		v_{(2,1),0}	    \cr
&		v_{(2,1),2}	    \cr
&		v_{(2,1),1}	    \cr
    }
= \bordermatrix{
& \cr
& \pi_{(1,2)}\cr
& \pi_{(1,2)}\cr
& \pi_{(1,2)}\cr
& \pi_{(2,1)}\cr
& \pi_{(2,1)}\cr
& \pi_{(2,1)}\cr
}.
    \label{exm_2}
\end{align}
\end{footnotesize}

We can see that matrix $T$ here matches the general form in Lemma \ref{gen_mat}. 
Now we show these equations have non-negative solutions and use Lemma \ref{lem:yates} to find the AoI.
First, we look at the rows/columns $((1,2),1)$ and $((2,1),2)$ notice that they form a diagonal matrix of size $2$ by $2$. Therefore we can solve and remove the variables $v_{(1,2),1}, v_{(2,1),2}$. They correspond to variables $v_{q, q_1}$ in Lemma \ref{break_gen}. They are also non-negative since the the $2$ diagonal entries and the entries of vector $\bm{\pi}$ are positive. Second, consider rows/columns $((1,2,),2)$ and $((2,1),1)$
corresponding to variables $v_{q,q_2}$ in Lemma \ref{break_gen}, again we obtain a $2$ by $2$ diagonal matrix.
Hence we are able to find the variables $v_{(1,2),2}, v_{(2,1),1}$. 
After removing these $4$ variables we are left with matrix $T_0$ which is in the same form as Equation \eqref{eq:T_0}:

\begin{footnotesize}
\begin{align} \label{ex}
    \bordermatrix{
~	&	(1,2),0	&	(2,1),0		\cr
(1,2),0	&	\lambda_2 + \sum_{i=1}^{2} \mu_i 	&	-\lambda_1		\cr
(2,1),0	&	-\lambda_2	&	\lambda_1+\sum_{i=1}^{2} \mu_i		\cr
    }.
\end{align}
\end{footnotesize}

We can solve the matrix $T_0$ only after one iteration of Algorithm \ref{alg:heter}. The corresponding variables are denoted as $\mathbf{v}^{(0)}=({v^{(0)}_{(1,2)}},{v^{(0)}_{(2,1)}})^\top=({v_{(1,2),0}},{v_{(2,1),0}})^\top$.  By definition in Section \ref{subsec:notation}, the permutation $(1,2)$ is odd (2-increasing), and $(2,1)$ is even. In the forward path of Algorithm \ref{alg:heter} we  do column operation in Line \ref{line:column}, meaning subtracting the odd column from the even one, and then the row operation in Line \ref{line:row}, meaning adding the even row to the odd row. After these $2$ operations the matrix $T_0$ becomes $T_1^{\prime\prime}$:
\begin{align} \label{e_2}
    \bordermatrix{
		\cr
	&	 \sum_{i=1}^{2} \mu_i 	&	0		\cr
	&	-\lambda_2	&	\sum_{i=1}^{2} \mu_i  +  \sum_{i=1}^{2} \lambda_i	\cr
    }.
\end{align}
After the column operation, the second variable $v_{(2,1),0}$ remains unchanged, and the first variable becomes ${v^{(1)}_{(1,2)}}= {v^{(0)}_{(1,2)}}+{v^{(0)}_{(2,1)}}={v_{(1,2),0}}+{v_{(2,1),0}}$. From \eqref{e_2} we can solve the first equation with the first variable $v^{(1)}_{(1,2)}$, whose coefficient matrix (Line \ref{line:pick_T_j}) is $T_1= \sum_{i=1}^{3} \mu_i$. The remaining coefficient matrix (Line \ref{line:pick_R_j}) for the second variable $v_{(2,1),0}$ is $R_1=\sum_{i=1}^{2} \mu_i  +  \sum_{i=1}^{2} \lambda_i.$  

In the backward path of Algorithm \ref{alg:heter}, we find the first variable $v^{(1)}_{(1,2)}$,
and then the second variable $v_{(2,1),0}$. Now we can find the variable $v_{(1,2),0} = v^{(1)}_{(1,2)} - v_{(2,1),0}.$ One can see from \eqref{e_2} that $v_{(2,1),0}$ is non-negative, and we will show in Lemma \ref{lem:heter_nonneg} that $v_{(1,2),0}$ is also non-negative. So the solution to \eqref{exm_2} is non-negative, and by Lemma \ref{lem:yates} the AoI equals $v^{(1)}_{(1,2)}$.
\end{example} 

\begin{example}
Consider the case with $n=3$ servers.
By writing down the equations in \eqref{general_eq}, we have $T\mathbf{v} = \bm{\pi}$, where 
the constant vector $\bm{\pi}$ is
\begin{align*}
    (\pi_{(1,2,3)},\pi_{(1,2,3)},\pi_{(1,2,3)},\pi_{(1,2,3)},\pi_{(1,3,2)},\pi_{(1,3,2)},\pi_{(1,3,2)},\pi_{(1,3,2)},...,\pi_{(3,2,1)},\pi_{(3,2,1)},\pi_{(3,2,1)},\pi_{(3,2,1)})^\top,
\end{align*}
the variable vector $\mathbf{v}$ is
\begin{align*}
(
	&	v_{(1,2,3),0}		,
		v_{(1,2,3),1}		,
		v_{(1,2,3),2}		,
		v_{(1,2,3),3}		,
		v_{(1,3,2),0}		,
		v_{(1,3,2),1}		,
		v_{(1,3,2),3}		,
		v_{(1,3,2),2}		,\\
	&	v_{(2,1,3),0}		,
		v_{(2,1,3),2}		,
		v_{(2,1,3),1}		,
		v_{(2,1,3),3}		,
		v_{(2,3,1),0}		,
		v_{(2,3,1),2}		,
		v_{(2,3,1),3}		,
		v_{(2,3,1),1}		,\\
	&	v_{(3,1,2),0}		,
		v_{(3,1,2),3}		,
		v_{(3,1,2),1}		,
		v_{(3,1,2),2}		,
		v_{(3,2,1),0}		,
		v_{(3,2,1),3}		,
		v_{(3,2,1),2}		,
		v_{(3,2,1),1}		
)^\top.
\end{align*}
and the coefficient matrix $T$ is

\begin{tiny} \nonumber
\begin{align}
\setlength\arraycolsep{2pt}
    \left[ \begin{array}{cccc|cccc|cccc|cccc|cccc|cccc}
 * &	 -\mu_1	&	 -\mu_2 &	-\mu_3	&	0	&	0	&	0	&	0	&	  -\lambda_1	&	0	&	0	&	0	&	   -\lambda_1	&	0	&	0	&	0	&	0	&	0	&	0	&	0	&	0	&	0	&	0	&	0
\\
 0	&	 \$	&	0	& 0	&	0	&	0	&	0	&	0	&	0	&	0	&	0	&	0	&	0	&	0	&	0	&	0	&	0	&	0	&	0	&	0	&	0	&	0	&	0	&	0
\\ 
  0	&	 -\mu_1		&	 @	&	0	&	0	&	0 &	0	&	0	&	0	&	 -\lambda_1	&	0	&	0	&	0	&	-\lambda_1	&	 0	&	0	&	0	&	0	&	0	&	0	&	0	&	0	&	0	&	0 
\\
 0	&	 -\mu_1	&	 -\mu_2	&	 \#	&	0	&	0	&	0	&	0	&	0	&	0	&	0	&	-\lambda_1	&	0	&	0	&	-\lambda_1	&	 0	&	0	&	0	&	0	&	0	&	0	&	0	&	0	&	0	\\
 \hline
 0	&	0	&	0	&	0	&	 *	&	 -\mu_1		&	-\mu_3		&	 -\mu_2	&	0	&	0	&	0	&	0	&	0	&	0	&	0	&	0	&	   -\lambda_1	&	0	&	0	&	0	&	   -\lambda_1	&	0	&	0	&	0	\\ 
 0	&	0	&	0	&	0	&	0	&	 \$	&	0	&	0 &	0	&	0	&	0	&	0	&	0	&	0	&	0	&	0	&	0	&	0	&	0	&	0	&	0	&	0	&	0	&	0	\\
  0	&	0	&	0	&	0	&	0	&	 -\mu_1	&	@	&	 0	&	0	&	0	&	0	&	0	&	0	&	0	&	0	&	0	&	0	&	-\lambda_1	&	0	&	  0	&	0	&	-\lambda_1	&	0	&	  0	\\
 0	&	0	&	0	&	0	&	0	&	 -\mu_1	&	-\mu_3	&	  \#	&	0	&	0	&	0	&	0	&	0	&	0	&	0	&	0	&	0	&	0	&	0	&	-\lambda_1	&	0	&	0	&	 -\lambda_1	&	0	\\
 \hline
  -\lambda_2	&	0	&	0	&	0	&	   -\lambda_2	&	0	&	0	&	0	&	 *	&	 -\mu_2	&	 -\mu_1	&	-\mu_3		&	0	&	0	&	0	&	0	&	0	&	0	&	0	&	0	&	0	&	0	&	0	&	0	\\
   0	&	0	&	0	&	0	&	0	&	0	&	0	&	0	&	0	&	\$	&	 0	&	0	&	0	&	0	&	0	&	0	&	0	&	0	&	0	&	0	&	0	&	0	&	0	&	0	\\
 0	&	 -\lambda_2	&	0	&	0	&	0	&	 -\lambda_2	&	0	&	0	&	0	&	 -\mu_2	&	 @	&	0	&	0	&	0	&	0	&	0	&	0	&	0	&	0	&	0	&	0	&	0	&	0	&	0	\\
 0	&	0	&	0	&	-\lambda_2	&	0	&	0	&	-\lambda_2	&	 0	&	0	&	 -\mu_2	&	 -\mu_1	&	 \#	&	0	&	0	&	0	&	0	&	0	&	0	&	0	&	0	&	0	&	0	&	0	&	0	\\
 \hline
 0	&	0	&	0	&	0	&	0	&	0	&	0	&	0	&	0	&	0	&	0	&	0	&	 *	&	-\mu_2	&	 -\mu_3	&	 -\mu_1	&	  -\lambda_2	&	0	&	0	&	0	&	  -\lambda_2	&	0	&	0	&	0	\\
  0	&	0	&	0	&	0	&	0	&	0	&	0	&	0	&	0	&	0	&	0	&	0	&	0	&	\$	&	 0	&	0	&	0	&	0	&	0	&	0	&	0	&	0	&	0	&	0	\\
   0	&	0	&	0	&	0	&	0	&	0	&	0	&	0	&	0	&	0	&	0	&	0	&	0	&	-\mu_2	&	 @	&	 0	&	0	&	-\lambda_2		&	0	&	 0 &	0	&	-\lambda_2	&	0	&	 0	\\
 0	&	0	&	0	&	0	&	0	&	0	&	0	&	0	&	0	&	0	&	0	&	0	&	0	&	 -\mu_2	&	 -\mu_3	&	 \#	&	0	&	0	&	 -\lambda_2	&	0	&	0	& 0	&	0	&		-\lambda_2	\\
 \hline
  -\lambda_3	&	0	&	0	&	0	&	  -\lambda_3	&	0	&	0	&	0	&	0	&	0	&	0	&	0	&	0	&	0	&	0	&	0	&	 *	&	 -\mu_3	&	-\mu_1	&	 -\mu_2	&	0	&	0	&	0	&	0	\\
   0	&	0	&	0	&	0	&	0	&	0	&	0	&	0	&	0	&	0	&	0	&	0	&	0	&	0	&	0	&	0	&	0	&	\$	&	0	&	 0	&	0	&	0	&	0	&	0	\\
 0	&	 -\lambda_3	&	0	&	0	&	0	&	 -\lambda_3	&	0	&	0	&	0	&	0	&	0	&	0	&	0	&	0	&	0	&	0	&	0	&	-\mu_3 	&	@	&	 0	&	0	&	0	&	0	&	0	\\
 0	&	0	&	 -\lambda_3	&	0	&	0	&	0	&	0	&	-\lambda_3	&	0	&	0	&	0	&	0	&	0	&	0	&	0	&	0	&	0	&	 -\mu_3	&	-\mu_1 	&	\# 	&	0	&	0	&	0	&	0	\\
 \hline
 0	&	0	&	0	&	0	&	0	&	0	&	0	&	0	&	  -\lambda_3	&	0	&	0	&	0	&	  -\lambda_3	&	0	&	0	&	0	&	0	&	0	&	0	&	0	&	 *	&	-\mu_3	&	 -\mu_2	&	-\mu_1 	\\
  0	&	0	&	0	&	0	&	0	&	0	&	0	&	0	&	0	&	0	&	0	&	0	&	0	&	0	&	0	&	0	&	0	&	0	&	0	&	0	&	0	&	\$	&	0	&	 0	\\
 0	&	0	&	0	&	0	&	0	&	0	&	0	&	0	&	0	&	-\lambda_3	&	 0	&	0	&	0	&	-\lambda_3	&	 0	&	0	&	0	&	0	&	0	&	0	&	0	&	-\mu_3	&	 @	&	 0	\\
  0	&	0	&	0	&	0	&	0 &	0	&	0	&	0	&	0	&	 0	&	-\lambda_3	&	0 &	0	&	0	&	0	&	-\lambda_3	&	0	&	0	&	0	&	0	&	0	&-\mu_3	&	 -\mu_2	&	 \# 	
\end{array}\right].
\end{align}
\end{tiny}
Here * in row $(q,i)$ is $\sum_{j=2}^{3} \lambda_{q_j} + \sum_{j=1}^{3} \mu_{q_j} $, \$ is $\sum_{j=1}^{3} \lambda_{q_j}$, @ is $\sum_{j=2}^{3} \lambda_{q_j} +  \mu_{q_1} $, and \# is $\sum_{j=2}^{3} \lambda_{q_j} + \sum_{j=1}^{2} \mu_{q_j} $. We can see that matches with our result in Lemma \ref{gen_mat} as expected.
Non-zero elements of the first $4$ rows indexed by the permutation $((1,2,3))$ are in columns indexed by permutations $(1,2,3)$, $(2,1,3)$, and $(2,3,1)$, which are the incoming states of $(1,2,3)$. Non-zero elements of the first $4$ columns are in rows indexed by $(1,2,3)$, $(2,1,3)$, and $(3,1,2)$, which are the outgoing states of state $(1,2,3)$. 

We illustrate here how we use Lemma \ref{break_gen} in order to solve matrix $T$. Variables $v_{q,q_1}$ correspond a diagonal submatrix of $T$ with size $n! \times n!$ (the \$ entries), and we can find their values. 
After removing these variables, for finding $v_{q,q_2}$, we solve the ones that the last $2$ entries of their permutation are the same. For instance if $(q_2,q_3)= (2,3)$ we see that we only need to solve the single variable $v_{(1,2,3),2}$, corresponding to the $3$rd row/column.
Therefore, we can solve all these variables individually. 
For solving $v_{q,q_3}$, we solve the ones that their last  entry of their permutation is the same. For instance if $q_3= 3$, 
we need to solve variables $v_{(1,2,3),3}$ and  $v_{(2,1,3),3}$ together. The resulting coefficients for these variables are as follows:
\begin{align} \label{e_2_again}
    \bordermatrix{
		\cr
	&	 \lambda_2+ \lambda_3 + \mu_1 + \mu_2 	&	-\lambda_1		\cr
	&	-\lambda_2	&	\lambda_1+\lambda_3 + \mu_1 +\mu_2	\cr
    },
\end{align}
and we can see that we solved this in Equation \eqref{ex} of Example \ref{ex_2}  with a change of variable. At the end, we need to solve variables $v_{q,q_0}$ or in another word $T_0$ which is as follows:

\begin{scriptsize}
\begin{align}
    \bordermatrix{
~	&	(1,2,3)	&	(1,3,2)	&	(2,1,3)	&	(2,3,1)	&	(3,1,2)	&	(3,2,1)		\cr
(1,2,3)	&	\lambda_2 +\lambda_3+ \sum_{i=1}^{3} \mu_i 	&	0	&	- \lambda_1 	&	- \lambda_1	&	0	&		0		\cr
(1,3,2)	&	0	&	\lambda_2 +\lambda_3+\sum_{i=1}^{3} \mu_i	&	0	&	0	&	- \lambda_1 	&	- \lambda_1 			\cr
(2,1,3)	&	- \lambda_2 	&	- \lambda_2 	&	\lambda_1 +\lambda_3+\sum_{i=1}^{3} \mu_i	&	0	&	0	&	0		\cr
(2,3,1)	&	0	&	0	&	0	&	\lambda_1 +\lambda_3+\sum_{i=1}^{3} \mu_i	&	- \lambda_2	&			- \lambda_2	\cr
(3,1,2)	&	- \lambda_3	&	- \lambda_3	&	0	& 0		&	\lambda_1 +\lambda_2+\sum_{i=1}^{3} \mu_i	&	0			\cr
(3,2,1)	&	0	&	0	&	- \lambda_3		&	- \lambda_3		&	0	&	\lambda_1 +\lambda_2+\sum_{i=1}^{3} \mu_i			\cr
    }.
    \label{exm_4}
\end{align}
\end{scriptsize}
In the first run of the forward path in Algorithm \ref{alg:heter} we perform row and column operations on $T_0$, and obtain $T_{1}^{\prime\prime}$ as

\begin{footnotesize}
\begin{align}
    \bordermatrix{
    	\cr
&	\lambda_2 +\lambda_3+\sum_{i=1}^{3} \mu_i	&	0	&	- \lambda_1	&	0	&	- \lambda_1 	&	0
	\cr
	&	0 	&\lambda_2 +\lambda_3+ \sum_{i=1}^{3} \mu_i&	0&	0	&		- \lambda_1	& 0			\cr
	&	- \lambda_2 	&	0	&	\lambda_1 +\lambda_3+\sum_{i=1}^{3} \mu_i	&	0	&	- \lambda_2	&	0		\cr
&	0	&	0	&	0	&	\lambda_1 +\lambda_3+\sum_{i=1}^{3} \mu_i	&	- \lambda_2 &		 0	\cr
	&	- \lambda_3	&	0&	- \lambda_3	& 0		&	\lambda_1 +\lambda_2+\sum_{i=1}^{3} \mu_i	&	0			\cr
	&	0	&	0&	- \lambda_3		&	0	&	0	&	\lambda_1 +\lambda_2+\sum_{i=1}^{3} \mu_i
   	  }.
\end{align}
\end{footnotesize}
Therefore the submatrices $T_{1}$ and $R_{1}$ are as follows, respectively.
\begin{footnotesize}
\begin{align}
    \bordermatrix{
    	\cr
&	\lambda_2 +\lambda_3+\sum_{i=1}^{3} \mu_i	&	- \lambda_1	&	- \lambda_1 
		\cr
	&	- \lambda_2 	&	\lambda_1 +\lambda_3+\sum_{i=1}^{3} \mu_i	&	- \lambda_2 	&		\cr
	&	- \lambda_3	&	- \lambda_3	&	\lambda_1 +\lambda_2+\sum_{i=1}^{3} \mu_i	&			\cr
    },
\end{align}
\end{footnotesize}
\begin{footnotesize}
\begin{align}
    \bordermatrix{
    	\cr
&	\lambda_2 +\lambda_3+\sum_{i=1}^{3} \mu_i	&	0	&	0
		\cr
	&	0 	&	\lambda_1 +\lambda_3+\sum_{i=1}^{3} \mu_i	&	0 	&		\cr
	&0	&	0	&	\lambda_1 +\lambda_2+\sum_{i=1}^{3} \mu_i	&			\cr
    }.
\end{align}
\end{footnotesize}
Since we performed column operations on each iteration of Algorithm \ref{alg:heter}, the variables change accordingly. After the first run of the algorithm the new variables corresponding to $T_1$ are as follows:
\begin{align}
 v_{(1,2,3)}^{(1)} &= v_{(1,2,3),0} + v_{(1,3,2),0}\quad ,\nonumber\\  
 v_{(2,1,3)}^{(1)} &= v_{(2,1,3),0} + v_{(2,3,1),0}\quad, \nonumber\\ 
 v_{(3,1,2)}^{(1)} &= v_{(3,1,2),0} + v_{(3,2,1),0}\quad. \label{eq:v_1_new}
\end{align}
The remaining variables corresponding to $R_1$ are unchanged:
\begin{align}
  v_{(1,3,2)}^{(1)} &= v_{(1,3,2),0} \quad, \nonumber\\ 
  v_{(2,3,1)}^{(1)} &= v_{(2,3,1),0} \quad, \nonumber\\ 
  v_{(3,2,1)}^{(1)} &= v_{(3,2,1),0}\quad. \label{eq:v_1_old}
\end{align}
After the second run of the forward path, we perform row and column operations on $T_1$ and obtain $T_2^{\prime\prime}$:
\begin{footnotesize}
\begin{align} \label{e_3}
    \bordermatrix{
    	\cr
&	\sum_{i=1}^{3} \mu_i	&	0	&	0
		\cr
	&	-\lambda_2 	&	\sum_{i=1}^{3} \mu_i	&	0 	&		\cr
	&-\lambda_3	&	0	&	\sum_{i=1}^{3} \mu_i	&			\cr
    }.
\end{align}
\end{footnotesize}
Hence, $T_2=\sum_{i=1}^{3} \mu_i$ and $R_2$ is the $2 \times 2$ diagonal matrix with diagonal entries $\sum_{i=1}^{3} \mu_i$.
Correspondingly, the new variable corresponding to $T_2$ is
\begin{align}
    v_{(1,2,3)}^{(2)} &= v_{(1,2,3),0} + v_{(1,3,2),0}+v_{(2,1,3),0} + v_{(2,3,1),0}+v_{(3,1,2),0} + v_{(3,2,1),0}\quad, \label{eq:v_2_new}
\end{align}
and the other two variables corresponding to $R_2$ are not changed:
\begin{align}
    v_{(2,1,3)}^{(2)} &= v_{(2,1,3)}^{(1)} = v_{(2,1,3),0} + v_{(2,3,1),0}\quad, \nonumber\\ 
    v_{(3,1,2)}^{(2)} &= v_{(3,1,2)}^{(1)} = v_{(3,1,2),0} + v_{(3,2,1),0}\quad. \label{eq:v_2_old}
\end{align}

In the backward path, we solve the variables.
From $T_2$, we can find the variable $v_{(1,2,3)}^{(2)}$, and then after removing it from the matrix $T_2^{\prime \prime}$, we can find $v_{(2,1,3)}^{(2)}$ and $v_{(3,1,2)}^{(2)}$. Hence, we can solve the variables in \eqref{eq:v_1_new}. After removing these variables, the variables in \eqref{eq:v_1_old} can be solved since $R_1$ is diagonal. Finally, $v_{q,0}$ can be solved for any $q$ using \eqref{eq:v_1_new} and \eqref{eq:v_1_old}. The non-negativity of the solution is shown in Lemma \ref{lem:heter_nonneg}, and by Lemma \ref{lem:yates} the AoI can be computed as $v_{(1,2,3)}^{(2)}$.
\end{example}

In the following, we derive average AoI explicitly in the case of $n=2$ in Example \ref{ex_2}. 

\begin{theorem}
Consider one source and $n=2$ heterogeneous servers. The AoI is given by:
\begin{small}
\begin{align}
\Delta=  
\frac{1}{\mu_{1}+\mu_{2}} + \frac{1}{\lambda_{1}+\lambda_{2}}+\frac{1}{\mu_{1}+\mu_{2}} \frac{1}{\lambda_{1}+\lambda_{2}} (\frac{\mu_{1}\lambda_{2}}{\lambda_{1}+\mu_{2}} + \frac{\mu_{2}\lambda_{1}}{\lambda_{2}+\mu_{1}}).\nonumber
\end{align}
\end{small}
\end{theorem}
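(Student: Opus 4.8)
The plan is to specialize the general machinery of Section~\ref{ez} to $n=2$ and simply carry out the linear algebra. By Lemma~\ref{ss} the stationary distribution is $\pi_{(1,2)}=\frac{\lambda_1}{\lambda_1+\lambda_2}$ and $\pi_{(2,1)}=\frac{\lambda_2}{\lambda_1+\lambda_2}$, and by Lemma~\ref{lem:yates} together with Lemma~\ref{gen_mat} the six unknowns $\mathbf{v}=(v_{(1,2),0},v_{(1,2),1},v_{(1,2),2},v_{(2,1),0},v_{(2,1),2},v_{(2,1),1})^\top$ satisfy $T\mathbf{v}=\bm{\pi}$ with $T$ as displayed in \eqref{exm_2}. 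First I would read off the ``easy'' variables: the rows indexed by $((1,2),1)$ and $((2,1),2)$ are diagonal, giving immediately $v_{(1,2),1}=\frac{\pi_{(1,2)}}{\lambda_1+\lambda_2}$ and $v_{(2,1),2}=\frac{\pi_{(2,1)}}{\lambda_1+\lambda_2}$; substituting these, the rows indexed by $((1,2),2)$ and $((2,1),1)$ become one-variable equations for $v_{(1,2),2}$ and $v_{(2,1),1}$, yielding
$v_{(1,2),2}=\pi_{(1,2)}\bigl(\frac{1}{\lambda_1+\lambda_2}+\frac{1}{\lambda_2+\mu_1}\bigr)$ and
$v_{(2,1),1}=\pi_{(2,1)}\bigl(\frac{1}{\lambda_1+\lambda_2}+\frac{1}{\lambda_1+\mu_2}\bigr)$.

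Next I would plug these four known values into the two remaining equations (rows $((1,2),0)$ and $((2,1),0)$), which form the $2\times 2$ system with coefficient matrix $T_0$ in \eqref{ex}:
$(\lambda_2+\mu_1+\mu_2)v_{(1,2),0}-\lambda_1 v_{(2,1),0}=\pi_{(1,2)}+\mu_1 v_{(1,2),1}+\mu_2 v_{(1,2),2}$ and the symmetric equation for $v_{(2,1),0}$. Following the forward path of Algorithm~\ref{alg:heter}, I add the two equations to get the ``sum'' equation $(\mu_1+\mu_2)\,(v_{(1,2),0}+v_{(2,1),0}) = \text{(sum of right-hand sides)}$; by Lemma~\ref{lem:alg2_output} this is exactly the step that isolates $\Delta = v^{(1)}_{(1,2)} = v_{(1,2),0}+v_{(2,1),0}$ without ever needing the individual $v_{(q,0)}$'s. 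The sum of the right-hand sides is $\pi_{(1,2)}+\pi_{(2,1)} + \mu_1(v_{(1,2),1}+v_{(2,1),1}) + \mu_2(v_{(1,2),2}+v_{(2,1),2}) = 1 + \mu_1 v_{(2,1),1} + \mu_2 v_{(1,2),2} + \mu_1 v_{(1,2),1}+\mu_2 v_{(2,1),2}$, where the last two terms combine to $(\mu_1 \pi_{(1,2)}+\mu_2\pi_{(2,1)})\cdot\frac{1}{\lambda_1+\lambda_2}$-type contributions.

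Finally I would substitute the explicit values of $\pi$, $v_{(1,2),1}$, $v_{(2,1),2}$, $v_{(1,2),2}$, $v_{(2,1),1}$ and divide by $\mu_1+\mu_2$. Collecting terms, the $1/(\mu_1+\mu_2)$ piece comes from the ``$1$'', a $\frac{1}{\lambda_1+\lambda_2}$ piece comes from combining the $\frac{1}{\lambda_1+\lambda_2}$ summands weighted by $\frac{\mu_1\lambda_1+\mu_2\lambda_2 + \mu_1\lambda_2+\mu_2\lambda_1}{(\lambda_1+\lambda_2)(\mu_1+\mu_2)} = \frac{1}{\lambda_1+\lambda_2}$ after the $\pi$ weights are inserted, and the cross terms $\mu_1 v_{(2,1),1}$ (contributing $\frac{\mu_1\lambda_2}{(\lambda_1+\lambda_2)(\lambda_1+\mu_2)}$) and $\mu_2 v_{(1,2),2}$ (contributing $\frac{\mu_2\lambda_1}{(\lambda_1+\lambda_2)(\lambda_2+\mu_1)}$) give the bracketed term, after dividing by $\mu_1+\mu_2$. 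This produces exactly
\[
\Delta = \frac{1}{\mu_1+\mu_2} + \frac{1}{\lambda_1+\lambda_2} + \frac{1}{\mu_1+\mu_2}\cdot\frac{1}{\lambda_1+\lambda_2}\left(\frac{\mu_1\lambda_2}{\lambda_1+\mu_2}+\frac{\mu_2\lambda_1}{\lambda_2+\mu_1}\right).
\]
The only real obstacle is bookkeeping: making sure the ``$1$'' and the two $\frac{1}{\lambda_1+\lambda_2}$ contributions coalesce cleanly into a single $\frac{1}{\lambda_1+\lambda_2}$ term, and tracking the reordering convention (in state $(2,1)$ the continuous-state vector is written $(x_0,x_2,x_1)$, so index ``$1$'' in $v_{(2,1),1}$ refers to the physically-second virtual slot). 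No step is conceptually hard; it is a direct, finite computation certified by Lemmas~\ref{ss}, \ref{gen_mat}, \ref{lem:alg2_output} and \ref{lem:heter_nonneg}.
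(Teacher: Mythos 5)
Your proposal is correct and follows essentially the same route as the paper: it first solves for $v_{(1,2),1},v_{(2,1),2}$, then $v_{(1,2),2},v_{(2,1),1}$, and then sums the two $(q,0)$ equations to obtain $\Delta=v_{(1,2),0}+v_{(2,1),0}=\frac{1}{\mu_1+\mu_2}+\frac{\mu_1(v_{(1,2),1}+v_{(2,1),1})+\mu_2(v_{(1,2),2}+v_{(2,1),2})}{\mu_1+\mu_2}$, exactly as in the paper's Example~\ref{ex_2}-based proof. The only blemish is typographical: the weight you display for the $\frac{1}{\lambda_1+\lambda_2}$ summands, $\frac{\mu_1\lambda_1+\mu_2\lambda_2+\mu_1\lambda_2+\mu_2\lambda_1}{(\lambda_1+\lambda_2)(\mu_1+\mu_2)}$, equals $1$ rather than $\frac{1}{\lambda_1+\lambda_2}$ (the latter factor comes from the summands themselves), which does not affect the final result.
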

\begin{proof}
Following the solution in Example~\ref{ex_2}, we find the $6$ variables corresponding to $v_{1,2}=(v_{(1,2),0},v_{(1,2),1},v_{(1,2),2})$ and $v_{2,1}=(v_{(2,1),0},v_{(2,1),2},v_{(2,1),1})$ as:
 $v_{(1,2),1}=\frac{\pi_{1}}{\lambda_{1}+\lambda_{2}}$ and $v_{(2,1),2}=\frac{\pi_{2}}{\lambda_{1}+\lambda_{2}}$.
\begin{align*}
v_{(1,2),2}=  \pi_{1} (\frac{1}{\lambda_{1}+\lambda_{2}} + \frac{1}{\lambda_{2}+\mu_{1}}), \quad 
v_{(2,1),1}=  \pi_{2} (\frac{1}{\lambda_{1}+\lambda_{2}} + \frac{1}{\lambda_{1}+\mu_{2}}).
\end{align*}
Also from Lemma \ref{ss} we know that, $[{\pi}_{(1,2)},\pi_{(2,1)}]=[\frac{\lambda_{1}}{\lambda_{1}+\lambda_{2}}, \frac{\lambda_{2}}{\lambda_{1}+\lambda_{2}}]$.
Following the steps in Example \ref{ss}, the average AoI is  $v_{(1,2),0}+v_{(2,1),0}$ which simplifies to:
\begin{align*}
 AoI &= \frac{1}{\mu_{1}+\mu_{2}}+ \frac{\mu_{1}(v_{(1,2),1}+v_{(2,1),1})+\mu_{2}(v_{(1,2),2}+v_{(2,1),2})}{\mu_{1}+\mu_{2}}\\ & = \frac{1}{\mu_{1}+\mu_{2}} + \frac{1}{\lambda_{1}+\lambda_{2}}+\frac{1}{\mu_{1}+\mu_{2}} \frac{1}{\lambda_{1}+\lambda_{2}} (\frac{\mu_{1}\lambda_{2}}{\lambda_{1}+\mu_{2}} + \frac{\mu_{2}\lambda_{1}}{\lambda_{2}+\mu_{1}}).
\end{align*}

\end{proof}
Next, for $n=2$ servers, we find the optimal arrival rates of servers, ${\lambda_1}^{*},{\lambda_2}^{*}$,  given fixed service rates $\mu_1,\mu_2$ and sum arrival rate $\lambda \triangleq \lambda_1+\lambda_2$. The optimal ${\lambda_1}^{*}$ is illustrated in Figure~\ref{optimal_lambda}.
\begin{figure}
\centering
\includegraphics[width = 0.4\textwidth]{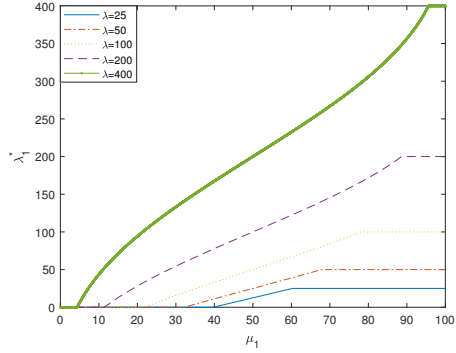} 
\caption{Optimal value of $\lambda_1$ as a function of $\mu_1$. $\lambda_1+\lambda_2=\lambda, \mu_1+\mu_2=100$.}
\label{optimal_lambda}
\end{figure}

\begin{theorem}
\label{thm:optimal_lambda_heto_n_2}
For one source and $n=2$ heterogeneous servers,  given $\mu_1,\mu_2$ and fixed $\lambda_1+\lambda_2=\lambda$, the optimal ${\lambda_{1}}^{*}$ satisfies

\noindent $\bullet$ if  $\mu_{1} < \mu_{2}$ and $\mu_{2}^2 - \frac{\mu_{1}(\lambda+\mu_{1})(\lambda+\mu_{2})}{\mu_{2}} < 0$,
\begin{align*}
 {\lambda_{1}}^{*}=\frac{-(\mu_{2}+c(\lambda+\mu_{1}))+\sqrt{\mu_{1}(\lambda+\mu_{2})(2+\frac{\mu_{2}}{\lambda+\mu_{1}}+\frac{\lambda+\mu_{1}}{\mu_{2}})}}{1- \frac{\mu_{1}(\lambda+\mu_{2})}{\mu_{2}(\lambda+\mu_{1})}},  
\end{align*}
 \noindent $\bullet$  if  $\mu_{1} < \mu_{2}$ and $\mu_{2}^2 - \frac{\mu_{1}(\lambda+\mu_{1})(\lambda+\mu_{2})}{\mu_{2}} \geq 0: $ 
     \begin{align*}
        \lambda_1^{*}=0, \lambda_2^{*}=\lambda,
   \end{align*}
\noindent $\bullet$ if $\mu_{1}>\mu_{2}$ and $\mu_{1}^2 \geq \frac{\mu_{2}(\lambda+\mu_{1})(\lambda+\mu_{2})}{\mu_{1}}:$ 
   \begin{align*}
         \lambda_{1}^{*}=\lambda, \lambda_{2}^{*}=0,
   \end{align*}
 \noindent $\bullet$    if $\mu_{1}>\mu_{2}$ and $\mu_{1}^2 < \frac{\mu_{2}(\lambda+\mu_{1})(\lambda+\mu_{2})}{\mu_{1}}$.
     \begin{align*}
{\lambda_{1}}^{*}= \lambda-\frac{-(\mu_{1}+\frac{(\lambda+\mu_{2})}{c})+\sqrt{\mu_{2}(\lambda+\mu_{1})(2+\frac{\mu_{1}}{\lambda+\mu_{2}}+\frac{\lambda+\mu_{2}}{\mu_{1}})}}{1- \frac{\mu_{2}(\lambda+\mu_{1})}{\mu_{1}(\lambda+\mu_{2})}},
\end{align*}  
\end{theorem}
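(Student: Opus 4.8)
The plan is to reduce the two-variable minimization to a one-dimensional convex problem and then do a case analysis on the location of its unique stationary point. Substituting $\lambda_2=\lambda-\lambda_1$ into the AoI formula of the previous theorem, the terms $\frac{1}{\mu_1+\mu_2}$ and $\frac{1}{\lambda_1+\lambda_2}=\frac{1}{\lambda}$ are constants, so minimizing $\Delta$ over $\lambda_1\in[0,\lambda]$ amounts to minimizing
\begin{align*}
f(\lambda_1)=\frac{\mu_1(\lambda-\lambda_1)}{\lambda_1+\mu_2}+\frac{\mu_2\lambda_1}{\lambda-\lambda_1+\mu_1}.
\end{align*}
Substituting $u=\lambda_1+\mu_2$ in the first summand and $v=\lambda-\lambda_1+\mu_1$ in the second writes each as an affine function of $1/u$ (resp. $1/v$) plus a constant, hence each summand is convex on $[0,\lambda]$; therefore $f$ is convex and its minimizer is either the interior stationary point (when it lies in $[0,\lambda]$) or an endpoint.

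Next I would compute
\begin{align*}
f'(\lambda_1)=-\frac{\mu_1(\lambda+\mu_2)}{(\lambda_1+\mu_2)^2}+\frac{\mu_2(\lambda+\mu_1)}{(\lambda-\lambda_1+\mu_1)^2}
\end{align*}
and set $f'(\lambda_1)=0$. Since every quantity is positive on $[0,\lambda]$, this is equivalent to $\frac{\lambda_1+\mu_2}{\lambda-\lambda_1+\mu_1}=\sqrt{c}$ with $c=\frac{\mu_1(\lambda+\mu_2)}{\mu_2(\lambda+\mu_1)}$ (the quantity in the statement's denominator); only the positive square root is admissible, so no spurious root is introduced. Solving this linear equation gives the candidate $\lambda_1=\frac{\sqrt{c}\,(\lambda+\mu_1)-\mu_2}{1+\sqrt{c}}$, which after rationalizing the denominator by $1-\sqrt{c}$ and using the identity $\mu_1(\lambda+\mu_2)\bigl(2+\frac{\mu_2}{\lambda+\mu_1}+\frac{\lambda+\mu_1}{\mu_2}\bigr)=c\,\bigl((\lambda+\mu_1)+\mu_2\bigr)^2$ becomes exactly the closed form written in the statement.

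The case analysis then rests on two facts. First, $\mu_1<\mu_2\iff c<1$, so the denominator $1-c$ is positive in the $\mu_1<\mu_2$ regime and negative otherwise. Second, by convexity the minimizer is the left endpoint $\lambda_1=0$ precisely when $f'(0)\ge 0$ and the right endpoint $\lambda_1=\lambda$ precisely when $f'(\lambda)\le 0$; a direct computation shows $f'(0)\ge 0\iff \mu_2^2\ge\frac{\mu_1(\lambda+\mu_1)(\lambda+\mu_2)}{\mu_2}$ and $f'(\lambda)\le 0\iff \mu_1^2\ge\frac{\mu_2(\lambda+\mu_1)(\lambda+\mu_2)}{\mu_1}$, which are exactly the boundary conditions in the theorem. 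In the complementary sub-cases the stationary point lies strictly between $0$ and $\lambda$ (positivity is equivalent to the displayed strict inequality, while the upper bound $\lambda_1<\lambda$ follows from $\sqrt{c}<1$ when $\mu_1<\mu_2$), so it is the minimizer; the $\mu_1>\mu_2$ interior formula follows from the $\mu_1<\mu_2$ one by the symmetry $(\lambda_1,\mu_1)\leftrightarrow(\lambda_2,\mu_2)$ together with $\lambda_1^*=\lambda-\lambda_2^*$.

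I expect the main obstacle to be purely algebraic bookkeeping: matching the compact root $\frac{\sqrt{c}\,(\lambda+\mu_1)-\mu_2}{1+\sqrt{c}}$ with the longer quadratic-formula expression written in the statement, and verifying that each stated threshold is precisely $f'(0)\ge 0$ or $f'(\lambda)\le 0$ (rather than a slightly rescaled inequality), together with confirming $0<\lambda_1^*<\lambda$ in the two interior cases.
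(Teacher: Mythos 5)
Your proposal is correct, and it reaches the same stationary condition as the paper but packages the argument differently. The paper works with the two-variable Lagrangian, clears denominators to get the quadratic $\lambda_1^2(1-c)+2\lambda_1(\mu_2+c(\lambda+\mu_1))+\mu_2^2-c(\lambda+\mu_1)^2=0$, and then decides between interior and boundary optima by applying Vieta's formulas to the signs of the two roots; it never explicitly verifies that the stationary point is a minimizer or that the objective is convex. You instead substitute $\lambda_2=\lambda-\lambda_1$ up front, observe that $f(\lambda_1)=\frac{\mu_1(\lambda-\lambda_1)}{\lambda_1+\mu_2}+\frac{\mu_2\lambda_1}{\lambda-\lambda_1+\mu_1}$ is a sum of two functions affine in $1/(\lambda_1+\mu_2)$ and $1/(\lambda-\lambda_1+\mu_1)$ respectively (hence convex), and then read off the boundary cases from the signs of $f'(0)$ and $f'(\lambda)$; your stationary equation $\bigl(\frac{\lambda_1+\mu_2}{\lambda-\lambda_1+\mu_1}\bigr)^2=c$ is exactly the paper's quadratic, but taking the positive square root selects the relevant root directly and the identity $\mu_1(\lambda+\mu_2)\bigl(2+\frac{\mu_2}{\lambda+\mu_1}+\frac{\lambda+\mu_1}{\mu_2}\bigr)=c\bigl((\lambda+\mu_1)+\mu_2\bigr)^2$ does convert $\frac{\sqrt{c}(\lambda+\mu_1)-\mu_2}{1+\sqrt{c}}$ into the stated expression. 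Your route buys a rigorous justification of optimality (convexity plus endpoint derivative tests, which reproduce the theorem's thresholds exactly: $f'(0)\ge 0\iff\mu_2^3\ge\mu_1(\lambda+\mu_1)(\lambda+\mu_2)$ and $f'(\lambda)\le 0\iff\mu_1^3\ge\mu_2(\lambda+\mu_1)(\lambda+\mu_2)$), and it cleanly explains why each endpoint case can only occur in the corresponding $\mu_1\lessgtr\mu_2$ regime; the paper's route is shorter on paper but leaves the second-order reasoning implicit. The remaining work you flag is indeed only algebraic bookkeeping, and the checks above confirm it goes through.
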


\noindent where $c= \frac{\mu_{1}(\lambda+\mu_{2})}{\mu_{2}(\lambda+\mu_{1})}$.
\begin{proof}
In order to find the optimal values of $\lambda_{1}$ and $\lambda_{2}$ for a given values of $\mu_1,\mu_{2}, \lambda$ where $\lambda_{1}+\lambda_{2}=\lambda$, we set the derivative of the following equation with respect to $\lambda_{1}$, $\lambda_{2}$ and $a$ to zero.
\begin{align*}
 AoI &=\frac{1}{\mu_{1}+\mu_{2}}+ \frac{\mu_{1}(v_{(1,2),1}+v_{(2,1),1})+\mu_{2}(v_{(1,2),2}+v_{(2,1),2})}{\mu_{1}+\mu_{2}} - a(\lambda_{1}+\lambda_{2}-\lambda),\\
 \frac{\partial AoI}{\partial \lambda_{1}} &= \frac{-1}{(\lambda_{1}+\lambda_{2})^2} 
-\frac{\mu_{1}\lambda_{2}(2\lambda_{1}+\lambda_{2}+\mu_{2})}{(\lambda_{1}+\lambda_{2})^2 (\lambda_{1}+\mu_{2})^2}+\frac{(\lambda_{2}+\mu_{1})(\mu_{2}\lambda_{2})}{(\lambda_{1}+\lambda_{2})^2 (\lambda_{2}+\mu_{1})^2}-a =0,\\
 \frac{\partial AoI}{\partial \lambda_{2}} & = \frac{-1}{(\lambda_{1}+\lambda_{2})^2} 
-\frac{\mu_{2}\lambda_{1}(2\lambda_{2}+\lambda_{1}+\mu_{1})}{(\lambda_{1}+\lambda_{2})^2 (\lambda_{2}+\mu_{1})^2}+\frac{(\lambda_{1}+\mu_{2})(\mu_{1}\lambda_{1})}{(\lambda_{1}+\lambda_{2})^2 (\lambda_{1}+\mu_{2})^2}-a =0.
\end{align*} 
Also, we know that $\lambda_{1}+\lambda_{2}=\lambda$. With some algebraic simplification we reach to this $2nd$ order polynomial equation for finding the optimal value of $\lambda_{1}$ and consequently $\lambda_{2}$.
\begin{align} \label{opti}
\lambda_{1}^2 (1-c)+ 2\lambda_{1} (\mu_{2}+c(\lambda+\mu_{1})) + \mu_{2}^2 - c(\lambda+\mu_{1})^2,
\end{align}

\noindent where $c= \frac{\mu_{1}(\lambda+\mu_{2})}{\mu_{2}(\lambda+\mu_{1})}$.

When $c=1$ it is equivalent to $\mu_{1}=\mu_{2}$ and the equation \eqref{opti} becomes a first order polynomial which results in $\lambda_{1}=\lambda_{2}= \frac{\lambda}{2}$.
This polynomial has $2$ real roots because of its positive discriminant and therefore solving the equation \eqref{opti} gives us $2$ possible candidate for our optimization problem. When $\mu_{1} < \mu_{2}$ then $c<1$. Knowing the fact that for $2$ roots of \eqref{opti} we have,
\begin{align*}
& r_{1}+r_{2} = \frac{\mu_2 + \frac{\mu_{1}(\lambda+\mu_{2})}{\mu_{2}}}{c-1}, \\
& r_{1}r_{2} = \frac{\mu_{2}^2 - \frac{\mu_{1}(\lambda+\mu_{1})(\lambda+\mu_{2})}{\mu_{2}}}{1-c},
\end{align*}

we conclude, when $\mu_{1} < \mu_{2}$ and $\mu_{2}^2 - \frac{\mu_{1}(\lambda+\mu_{1})(\lambda+\mu_{2})}{\mu_{2}} \geq 0$, the $2$ roots are negative and therefore in this regime our optimal values become $\lambda_{1}=0, \lambda_{2}=\lambda$. When $\mu_{1} < \mu_{2}$ and $\mu_{2}^2 - \frac{\mu_{1}(\lambda+\mu_{1})(\lambda+\mu_{2})}{\mu_{2}} \geq 0$, the positive root is the optimal rate which is equal to:
\begin{align*}
\lambda_{1}=\frac{-(\mu_{2}+c(\lambda+\mu_{1}))+\sqrt{\mu_{1}(\lambda+\mu_{2})(2+\frac{\mu_{2}}{\lambda+\mu_{1}}+\frac{\lambda+\mu_{1}}{\mu_{2}})}}{1- \frac{\mu_{1}(\lambda+\mu_{2})}{\mu_{2}(\lambda+\mu_{1})}}.
\end{align*}

Similarly by writing the $2$-nd order polynomial for $\lambda_{2}$, we reach to the conclusion that when $\mu_{1}>\mu_{2}$ , if $\mu_{1}^2 \geq \frac{\mu_{2}(\lambda+\mu_{1})(\lambda+\mu_{2})}{\mu_{1}}$ the optimal rates are $\lambda_{1}=\lambda, \lambda_{2}=0$.
In the regime that $\mu_{1}>\mu_{2}$ and $\mu_{1}^2 < \frac{\mu_{2}(\lambda+\mu_{1})(\lambda+\mu_{2})}{\mu_{1}}$, the positive root is the optimal rate.
\begin{align*}
\lambda_{2}= \frac{-(\mu_{1}+\frac{(\lambda+\mu_{2})}{c})+\sqrt{\mu_{2}(\lambda+\mu_{1})(2+\frac{\mu_{1}}{\lambda+\mu_{2}}+\frac{\lambda+\mu_{2}}{\mu_{1}})}}{1- \frac{\mu_{2}(\lambda+\mu_{1})}{\mu_{1}(\lambda+\mu_{2})}}.
\end{align*}
\end{proof}
\vspace{0.1cm}
When $\mu_1=\mu_2$ the optimal rates that minimize AoI are ${\lambda_{1}}^{*}={\lambda_{2}}^{*}=\frac{\lambda}{2}$. As Figure \ref{optimal_lambda} illustrates, for $\mu_1=\mu_2=50$, optimal rates are ${\lambda_{1}}^{*}=\frac{\lambda}{2}$ and in the regimes that one of the service rates is much greater than the other one, AoI minimizes when all the updates are sent to the server with greater service rate.  
\begin{figure}
\centering
\includegraphics[width = 0.55\textwidth]{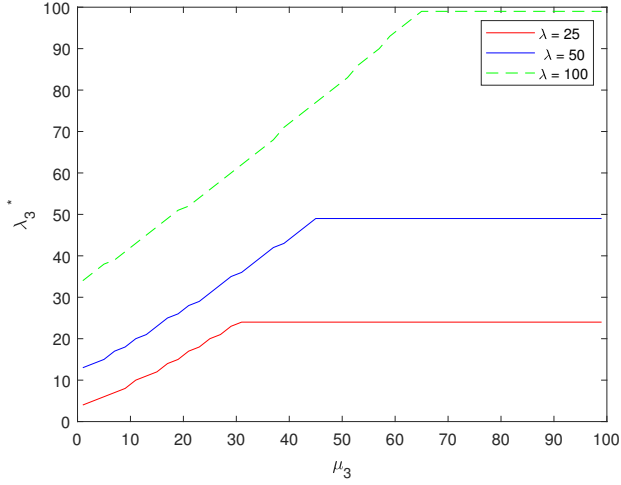} 
\caption{Optimal arrival rate (${\lambda_3}^* $) for server $3$ based on service rate ($\mu_3$) of server $3$, where $\lambda_1=\lambda_2$, $\mu_1=\mu_2$ ,  $\sum_{i=1}^{3} \mu_i=100$, $\sum_{i=1}^{3} \lambda_i = \lambda$, and $n=3$.}
\label{final_simul}
\end{figure}

Also, when we have $n=3$ servers and sum of service rates is $\mu_1+\mu_2+\mu_3=100$, we notice a saturation region in Figure \ref{final_simul} similar to what we observed in Figure \ref{optimal_lambda}. Here Server $1$ and Server $2$ have the same arrival rate ($\lambda_1=\lambda_2$) and service rate  ($\mu_1=\mu_2$). It shows that when the service rate for one of the servers is much greater compared to the other $2$ servers, it is optimal in terms of minimizing average AoI to allocate most of the arrival rate to that server which is intuitive.

\section{Conclusion}
\label{conc}
In this paper, we studied the age of information in the presence of multiple independent servers monitoring several information sources. We derived AoI for the LCFS queue model using SHS analysis when we had a homogeneous network and a single source. We also provided an algorithm for deriving AoI when we have $m$ sources and $n$ servers in a homogeneous network. For a heterogeneous network, 
we proved that using our algorithm we can solve the enormous numbers of equations and find the average AoI. We illustrated how these algorithms implement for the cases of $n=2,3$ servers and derive the optimal arrival rate given that the sum arrival rate is fixed for when $n=2$. From the simulation, it is observed that LCFS outperforms LCFS with preemption in waiting and FCFS for a homogeneous single information source network. Future directions include deriving explicit formula of AoI in heterogeneous sensing networks where the update arrival rate and/or the service rate are different among the servers for any number of sources and servers. Also, investigating arrival rates and service rate other than Poison distribution can further enrich the system model practicality.
\appendix \label{apen}
{\bf Notations.}
Below are notations and observations used throughout the proof of the AoI for heterogeneous networks.
\begin{itemize}
   
\item
Consider linear equations in matrix form, $A\mathbf{v}=\mathbf{c}$. For most cases, the variable vector $\mathbf{v}$ and the constant vector $\mathbf{c}$ will have non-negative entries, denoted as 
$$\mathbf{v} \succeq 0.$$
$$\mathbf{c} \succeq 0.$$
\item 
A row of $A$ corresponds to an equation, hence we use the terms row and equation interchangeably. A column of $A$ corresponds to a variable, hence we use the terms column and variable interchangeably. 
\item
Denote the $(q,p)$-th entry of $A$ by $A(q,p).$ The $p$-th variable is denoted by $v_p$. The $p$-th entry of $\mathbf{c}$ is $c_p$.
\item
For row/equation $q$, column/variable $p$, if $A(q,p)<0, \mathbf{c} \succeq 0$  and we have solved $v_p \ge 0$, then we can change the $p$-th entry of $\mathbf{c}$ to $c_p-A(q,p)v_p$. The new constant vector is still non-negative. Thus we have one less variable in the equation. We say we \emph{exclude $v_p$ from the equation}.
\item If we do column operations on the coefficient matrix $A$ (in order to simplify the equations), then it is equivalent to changing the variables. For example, suppose $A$ has 4 columns, $A = [A(:,1),A(:,2),A(:,3),A(:,4)]$, and we subtract Column 1 from Columns 2 and 3. Then the variables become $v_1+v_2+v_3,v_2,v_3,v_4$, because
\begin{align}\label{eq:column_op_eg}
    [A(:,1),A(:,2),A(:,3),A(:,4)]
    \left[ \begin{matrix}
    v_1 \\
    v_2 \\
    v_3 \\
    v_4 
    \end{matrix} \right]
    = 
    [A(:,1),A(:,2)-A(:,1),A(:,3)-A(:,1),A(:,4)] 
    \left[ \begin{matrix}
    v_1+v_2+v_3 \\
    v_2 \\
    v_3 \\
    v_4 
    \end{matrix} \right].
\end{align}
By the above observation, in the forward path of Algorithm \ref{alg:heter}, after the $j=1$st iteration's column operation,
the $p$-th variable, denoted as $v_p^{(j)}$, becomes 
\begin{align}\label{eq:1_iter_variable}
v_p^{(1)} =
\sum\limits_{k=n-1}^{n}v_{g_{1,k}(p)} = \sum\limits_{q: (q_1,\dots,q_{n-2})=(p_1,\dots,p_{n-2})} v_{q}^{(0)},  
 \text{ $p$ is $2$-increasing}.
\end{align}
In the $j=2$nd iteration, only $2$-increasing variables are considered. The $p$-th variable becomes
\begin{align}
    v_p^{(2)} = 
    \sum\limits_{k=n-2}^{n} v^{(1)}_{g_{2,k}(p)} = \sum\limits_{q: (q_1,\dots,q_{n-3})=(p_1,\dots,p_{n-3})} v_{q}^{(0)},
     \text{ $p$ is $3$-increasing}.
\end{align}
Continue in a similar manner, in the $j$-th iteration, $1 \le j \le n-1$, the variable is
\begin{align} \label{eq:j_iter_variable}
    v_p^{(j)} = 
    \sum\limits_{k=n-j}^{n} v^{(j-1)}_{g_{j,k}(p)} =
    \sum\limits_{q: (q_1,\dots,q_{n-j-1})=(p_1,\dots,p_{n-j-1})} v_{q}^{(0)},
    \text{ $p$ is $(j+1)$-increasing}.
\end{align}
After the $(n-1)$-th iteration, the $p=(1,2,\dots,n)$-th variable becomes
\begin{align}\label{eq:last_iter_variable}
    v_{(1,2,\dots,n)}^{(n-1)} = \sum_{\text{all permutations $q$}} v_q^{(0)}, \text{ $p=(1,2,\dots,n)$ is $n$-increasing.}
\end{align}
\item
Let $Q$ be the set of all $j$-increasing permutation, $|Q|=\frac{n!}{j!}$. Then it can be partitioned as below:
\begin{align}
    Q = \bigcup\limits_{p: (j+1)\text{-increasing}} \{ g_{j,k}(p), n-j \le k \le n\}.
\end{align}
In other words, any permutation that is $j$-increasing but not $(j+1)$-increasing can be written as $g_{j,k}(p)$, for some $(j+1)$-increasing $p$ and $k$, $n-j+1 \le k \le n$.
\item
If $p$ is $j$-increasing, define a permutation $Incr_{j+1}(\cdot)$ on $p$ whose result is $(j+1)$-increasing, $1 \le j \le n-1$:
\begin{footnotesize}
\begin{align} \label{eq:Incr}
    Incr_{j+1}(p) = \begin{cases}
    p, & \text{if } p_{n-j} < p_{n-j+1},\\
    (p_1,\dots, p_{n-j-1}, p_{n-j+1},..., p_{k}, p_{n-j}, p_{k+1}, ..., p_n), & \text{if }  p_{k} < p_{n-j} < p_{k+1}, k=n-j+1, ..., n.
    \end{cases}
\end{align}
\end{footnotesize}
Here we define $p_{n+1} = \infty$. We also define $Incr_1(\cdot)$ as the identity permutation.
For example,
$Incr_2(1,3,4,2)=(1,3,2,4)$, $Incr_3(1,3,2,4)= (1,2,3,4)$.
\item We see that if $p$ is $(j+1)$-increasing, 
\begin{align}\label{eq:Incr_gjk}
    Incr_{j+1}(g_{j,k}(p)) = p, k=n-j,\dots,n.
\end{align}
For example, $Incr_3(g_{2,4}(2,1,3,4))=Incr_3(2,4,1,3) = (2,1,3,4)$.
\item 
If $p, q$ are $(j+1)$-increasing, there exits at most one $i$, $1 \le i \le n$, such that $g_{j,k}(q)=Incr_j(h_i(p))$ for some $n-j \le k \le n$.  This is because $Incr_j(h_i(p))$ starts with $p_i$, and $g_{j,k}(q)$ starts with $q_1$, but it is not possible to have $q_1 = p_i = p_{i'}$ for distinct $i,i'$. Moreover, we get
\begin{align}
    q =& g_{j,k}^{-1}(Incr_j(h_i(p))) \\
    =& Incr_{j+1}(Incr_j(h_i(p))) \label{eq:74}\\
    =& Incr_{j+1}(h_i(p)), \label{eq:75}
\end{align}
where \eqref{eq:74} follows from \eqref{eq:Incr_gjk}, and \eqref{eq:75} follows from $h_i(p)$ being already $j$-increasing.
\end{itemize}
\begin{proof} (\textbf{Lemma \ref{ss}})
There are $n!$ states and therefore $n!$ steady states probabilities $\pi_{q}$ where $ \sum _{q \in \mathcal{Q}}\pi_{q}=1$.
 Considering the fact that our Markov chain $q(t)$ is ergodic, the state probabilities always converge to a unique solution satisfying the following equations: 
\begin{equation}\label{eq:pi}
{{\pi}}_{{q}} \sum_{l \in L_{{q}}}^{} \lambda^{(l)}=
 \sum_{l \in L^\prime_{{q}}}^{}
\lambda^{(l)}\pi_{q_{l}}, \quad {q} \in \mathcal{Q}.
\end{equation} 
Self-loops do not need to be considered in the above equation because they will be cancelled.
Without loss of generality  let us suppose our state is $q=(q_{1},q_{2},...,q_{n})$ which is a permutation of $\{1,2,...,n\}$. The incoming states of state $q$ (excluding self-loops) are $H_q^{-1}$ as defined in \eqref{eq:H_q_-1}, with incoming rates $\lambda_{q_{1}}$. The outgoing states of $q$ excluding self-loops have rates $\lambda_{q_i}$ for $i\in [n]$. Therefore, the incoming states are
\begin{align}
s_1=h_{1}^{-1}(q)&=(q_{1},q_{2},q_{3},...,q_{n})=q, \nonumber \\
s_2=h_{2}^{-1}(q)&=(q_{2},q_{1},q_{3},...,q_{n}), \nonumber\\
s_3=h_{3}^{-1}(q)&=(q_{2},q_{3},q_{1},...,q_{n}), \nonumber \\ 
	\quad	\quad \quad \vdots 	&	\quad	\quad \quad \quad \quad	\vdots   \label{eq:set}\\
s_n=h_{n}^{-1}(q)&=(q_{2},q_{3},...,q_{n},q_{1}),\nonumber
\end{align}
and Equation \eqref{eq:pi} is
\begin{align}
\pi_{s_1}\sum_{i=1}^{n} {\lambda_{q_{i}}} &= \lambda_{q_{1}} \sum_{i=1}^n  {\pi_{s_i}}, \quad  {q=s_1} \in \mathcal{Q}. \label{steady}
\end{align} 
Knowing the fact that the steady-state probability is unique, we only need to find a solution for each $\pi_{q}$ which satisfies the equations in \eqref{steady}. Next we prove that the following $\pi_{q}$ satisfies \eqref{steady} and it is a probability function:
\begin{equation} \label{steady2}
\pi_{q}= \frac{\lambda_{q_{1}}}{\sum_{j=1}^{n} \lambda_{q_{j}}} \frac{\lambda_{q_{2}}}{\sum_{j=2}^{n} \lambda_{q_{j}}} \frac{\lambda_{q_{3}}}{\sum_{j=3}^{n} \lambda_{q_{j}}} ... \frac{\lambda_{q_{n-1}}}{\sum_{j=n-1}^{n} \lambda_{q_{j}}}.
\end{equation}

First, let us verity that \eqref{steady2} is a probability function that sums to 1. 
Consider two permutations (states) that only differ in the last two elements. Therefore based on \eqref{steady2} their steady-state probabilities only differ in the last terms which are
$\frac{\lambda_{q_{n}}}{q_{n}+q_{1}}$ and $\frac{\lambda_{q_{1}}}{q_{1}+q_{n}}$, respectively. Consequently, if we add them together, as the other $n-2$ terms are the same and $\frac{\lambda_{q_{n}}}{q_{n}+q_{1}}+\frac{\lambda_{q_{1}}}{q_{1}+q_{n}}=1$ we can cancel the last term in \eqref{steady2}. Then, consider 6 permutations that only differ in the last $3$ elements. Using a similar argument, the last $2$ terms will be cancelled after we sum their probabilities. Continue in a similar argument, the total probability sum is 1.

Next, we prove that \eqref{steady2} satisfies \eqref{steady} by induction.

{\bf Base case.} For $n=2$, it is easy to check \eqref{steady2} satisfies \eqref{steady}.

{\bf Induction step.}
Let us suppose \eqref{steady2} satisfies \eqref{steady} for the case that we have $n-1$ servers and consequently $(n-1)!$ states. We prove that Equation \eqref{steady2} holds for $n$. Let us remove server $q_{2}$ consider the $n-1$ servers $q_{1},q_{3},...,q_{n}$ with rates $\lambda_{q_{1}},\lambda_{q_{3}},...,\lambda_{q_{n}}$, respectively. For state $q^{\prime}=(q_{1},q_{3}, q_{4},..., q_{n})$, the incoming states are
\begin{align}
&s_{1}^{\prime}= h_1^{-1}(q')=(q_{1},q_{3}, q_{4},..., q_{n}) = q', \nonumber\\
&s_{2}^{\prime}= h_2^{-1}(q')=(q_{3},q_{1}, q_{4},..., q_{n}), \nonumber\\
&s_{3}^{\prime}= h_3^{-1}(q')=(q_{3},q_{4}, q_{1},..., q_{n}), \nonumber\\
&\qquad \vdots \qquad \quad \quad \quad 	\quad	\vdots \label{eq:set2}\\
&s_{n-1}^{\prime}= h_{n-1}^{-1}(q')=(q_{3},q_{4}, q_{5},..., q_{1}),  \nonumber
\end{align}
and the corresponding Equation \eqref{eq:pi} is
\begin{align}
&\pi_{s'_1}(\lambda_{q_{1}}+\lambda_{q_{3}}+...+\lambda_{q_{n}})= \lambda_{q_{1}} \sum_{i=1}^{n-1}  {\pi_{s_i^{\prime}}}. \label{steady3}
\end{align}
Notice that the first term $\pi_{s'_1}\lambda_{q_1}$ on both sides in the above equation is identical and can be cancelled.
Comparing the set of states in \eqref{eq:set} and \eqref{eq:set2} and pluging them into \eqref{steady2}, we observe that
$\pi_{s_{i}}= \frac{\lambda_{q_{2}}}{\sum_{j=1}^{n} {\lambda_{q_{j}}}}\pi_{s_{i-1}^{\prime}}$ for $i \in \{3,4,...,n\}$. 
Therefore:
\begin{equation} \label{indi}
\sum_{i=3}^{n} {\pi_{s_{i}}} = \frac{\lambda_{q_{2}}}{\sum_{j=1}^{n} \lambda_{q_{j}}} \sum_{i=3}^{n}\pi_{s_{i-1}^{\prime}}.
\end{equation} 
Using the induction assumption that $\{\pi_{s'_i}\}$ satisfy \eqref{steady3}, we know that:
\begin{align*}
& \lambda_{q_{1}} \sum_{i=3}^{n}\pi_{s_{i-1}^{\prime}} = 
\lambda_{q_{1}} \sum_{i=2}^{n-1}\pi_{s_{i}^{\prime}}=
 (\sum_{i=3}^{n} {\lambda_{q_{i}}}) \pi_{s_{1}^{\prime}} \\ 
 =&(\sum_{i=3}^{n} {\lambda_{q_{i}}}) \frac{\lambda_{q_{1}}}{\sum_{i=1}^{n} {\lambda_{q_{i}}} -\lambda_{q_{2}}} \frac{\lambda_{q_{3}}}{\sum_{i=3}^{n}{\lambda_{q_{i}}}} \frac{\lambda_{q_{4}}}{\sum_{i=4}^{n} {\lambda_{q_{i}}}}...\frac{\lambda_{q_{n-1}}}{\sum_{i=n-1}^{n} {\lambda_{q_{i}}}}.
\end{align*} 
As a result we can now calculate $\sum_{i=3}^{n} \pi_{s_{i}}$ as follows:
\begin{align}
\sum_{i=3}^{n} {\pi_{s_{i}}}= \frac{\lambda_{q_{2}}}{\sum_{j=1}^{n} \lambda_{q_{j}}} \frac{\sum_{i=3}^{n} {\lambda_{q_{i}}}}{\lambda_{q_{1}}}    \frac{\lambda_{q_{1}}}{\sum_{i=1}^{n} {\lambda_{q_{i}}} -\lambda_{q_{2}}}  \frac{\lambda_{q_{3}}}{\sum_{i=3}^{n}{\lambda_{q_{i}}}} \frac{\lambda_{q_{4}}}{\sum_{i=4}^{n} {\lambda_{q_{i}}}}...\frac{\lambda_{q_{n-1}}}{\sum_{i=n-1}^{n} {\lambda_{q_{i}}}}. \label{eq:sum_pi}
\end{align}
Now we calculate the term $\pi_{s_{2}}+\sum_{i=3}^{n} \pi_{s_{i}}$ using \eqref{steady2} and \eqref{eq:sum_pi}:
\begin{align*}
 & \pi_{s_{2}}+\sum_{i=3}^{n} \pi_{s_{i}} \\
 =& \frac{\lambda_{q_{2}}}{\sum_{i=1}^{n} {\lambda_{q_{i}}}} 
 \frac{\lambda_{q_{1}}}{\sum_{i=1}^{n} {\lambda_{q_{i}} -\lambda_{q_{2}}}}
 \frac{\lambda_{q_{3}}}{\sum_{i=3}^{n} \lambda_{q_{i}}} \dots +\frac{\lambda_{q_{n-1}}}{\sum_{i=n-1}^{n} \lambda_{q_{i}}} \\
 & +
\frac{\lambda_{q_{2}}}{\sum_{i=1}^{n} \lambda_{q_{i}}} \frac{\sum_{i=3}^{n} {\lambda_{q_{i}}}}{\lambda_{q_{1}}}       \frac{\lambda_{q_{1}}}{\sum_{i=1}^{n} {\lambda_{q_{i}}} -\lambda_{q_{2}}} \frac{\lambda_{q_{3}}}{\sum_{i=3}^{n}{\lambda_{q_{i}}}}  \dots \frac{\lambda_{q_{n-1}}}{\sum_{i=n-1}^{n} {\lambda_{q_{i}}}} \\
=& 
\frac{\lambda_{q_{1}}}{\sum_{i=1}^{n} {\lambda_{q_{i}}} -\lambda_{q_{2}}} \frac{\lambda_{q_{3}}}{\sum_{i=3}^{n}{\lambda_{q_{i}}}} ...\frac{\lambda_{q_{n-1}}}{\sum_{i=n-1}^{n} {\lambda_{q_{i}}}} \frac{\lambda_{q_{2}}}{\sum_{i=1}^{n} {\lambda_{q_{i}}}} (1+\frac{\sum_{i=3}^{n} {\lambda_{q_{i}}}}{\lambda_{q_{1}}})\\ 
=& 
\frac{\lambda_{q_{1}}}{\sum_{i=1}^{n} {\lambda_{q_{i}}} -\lambda_{q_{2}}} \frac{\lambda_{q_{3}}}{\sum_{i=3}^{n}{\lambda_{q_{i}}}}... \frac{\lambda_{q_{n-1}}}{\sum_{i=n-1}^{n} {\lambda_{q_{i}}}} \frac{\lambda_{q_{2}}}{\sum_{i=1}^{n} {\lambda_{q_{i}}}} \frac{\sum_{i=1}^{n} {\lambda_{q_{i}}} -\lambda_{q_{2}}}{\lambda_{q_{1}}} \\
=& 
\frac{\lambda_{q_{3}}}{\sum_{i=3}^{n}{\lambda_{q_{i}}}}...\frac{\lambda_{q_{n-1}}}{\sum_{i=n-1}^{n} {\lambda_{q_{i}}}} \frac{\lambda_{q_{2}}}{\sum_{i=1}^{n} {\lambda_{q_{i}}}} .
\end{align*}
Now we verify that \eqref{steady} holds, which can be rewritten as
\begin{align} \label{s11}
 \pi_{s_{1}} =
\frac{\lambda_{q_{1}}}{\sum_{i=2}^{n} {\lambda_{q_{i}}}} \sum_{i =2}^{n}
{\pi_{s_i}}  = \frac{\lambda_{q_{1}}}{\sum_{i=2}^{n} {\lambda_{q_{i}}}} (\pi_{s_{2}}+\sum_{i=3}^{n} \pi_{s_{i}}).
\end{align}
By substituting $\pi_{s_{2}}+\sum_{i=3}^{n} \pi_{s_{i}}$
 and simplification, we find the right-hand side of \eqref{s11} as:
\begin{align*}
RHS = \frac{\lambda_{q_{1}}}{\sum_{i=2}^{n} {\lambda_{q_{i}}}} \frac{\lambda_{q_{3}}}{\sum_{i=3}^{n}{\lambda_{q_{i}}}}...\frac{\lambda_{q_{n-1}}}{\sum_{i=n-1}^{n} {\lambda_{q_{i}}}} \frac{\lambda_{q_{2}}}{\sum_{i=1}^{n} {\lambda_{q_{i}}}}= 
\frac{\lambda_{q_{1}}}{\sum_{i=1}^{n} {\lambda_{q_{i}}}} \frac{\lambda_{q_{2}}}{\sum_{i=2}^{n} {\lambda_{q_{i}}}} \frac{\lambda_{q_{3}}}{\sum_{i=3}^{n}{\lambda_{q_{i}}}}... \frac{\lambda_{q_{n-1}}}{\sum_{i=n-1}^{n} {\lambda_{q_{i}}}}.    
\end{align*}
Plugging $q=s_1$ in \eqref{steady2}, the left-hand side of \eqref{s11} is equal to the right-hand side. Thus, \eqref{steady2} is a solution to \eqref{steady}. The proof is completed.
\end{proof}

\begin{proof} (\textbf{Lemma \ref{gen_mat}})
Let us take all the terms other than $\pi_q$ to the left side of Equation \eqref{general_eq}. It is clear that the constant vector should be $\bm{\pi}$.

From Equation \eqref{general_eq}, we notice that 
\begin{align} \label{step1}
({\sum_{j=1}^{n} {\lambda_{q_j}}}) v_{q,q_{1}} = {\pi_{q}}.
\end{align}
Therefore, the case of $i=1$ holds in \eqref{co_t}.
We notice that when $i > 1$, in the row $(q,i)$ and column $(p,i)$ of matrix $T$ there is a non-zero entry of $-\lambda_{q_{1}}$, if and only if $p$ is an incoming state of $q$ or in another word $q= h_{j} (p)$. 
When $(q,i)=(p,k)$, the coefficient of variable $v_{q ,q_{i}}$ is equal to $\sum_{l=1}^{n} (\lambda_{q_{l}} +\mu_{q_{l}})$ minus 2 parts. One part is minus $\lambda_{q_{1}}$ because every state is also an incoming state for itself, and the second part is $\sum_{j=i}^{n} \mu_{q_{j}}$ because whenever we are having an update delivery $\mu_{q_{j}}$ where $j \geq i$, we will have $-\mu_{q_{j}}$ multiplied by the variable $v_{q,q_{i}}$. Therefore the coefficient for variable $v_{q,q_{i}}$ at the end becomes $\sum_{l=1}^{n} (\lambda_{q_{l}} +\mu_{q_{l}}) -\lambda_{q_{1}} - \sum_{j=i}^{n} \mu_{q_{j}}= \sum_{l=2}^{n} \lambda_{q_{l}}  +\sum_{l=1}^{i-1} \mu_{q_{l}}$. Also, when we have an update delivery from server $q_{k}$ with rate $\mu_{q_{k}}$ where $k<i$, since we preempt the older update, we will have the term $-\mu_{q_{k}} v_{q,q_{k}}$ in Equation \eqref{general_eq}, i.e., an entry of $-\mu_{q_{k}}$ in $T((q,i),(p,k))$ when $q=p$ and $k<i$. The cases of $i=0$ can be similarly verified. Obviously the rest of entries of matrix are equal to zero.
\end{proof}
\begin{proof} (\textbf{Lemma \ref{break_gen}})
We take an inductive approach and prove that for iteration $i=1,2,\dots,n$,
variables 
\begin{align}\label{eq:variables_to_solve}
    \{v_{q,q_i}: (q_i,\dots,q_n) \text{ is fixed} \}
\end{align}
can be solved iteratively (Line \ref{line:breakdown} of Algorithm \ref{alg:heter_overall}).
For iteration $i=n+1$, variables 
\begin{align}\label{eq:variables_q0}
    \{v_{q,0}: \text{ all permutations } q\}
\end{align}
can be solved (Line \ref{line:AoI}).
And the associated coefficient matrix for each iteration are in the same form as $T_0$.

{\bf Base case.} For $i=1$, we can solve the variable $ v_{q,q_{1}} = \frac{\pi_{q}}{\sum_{j=1}^{n} {\lambda_{q_j}}}$ according to the case of $i=1$ in \eqref{co_t} or Equation \eqref{step1}. Since the coefficient is a scalar, it is in the same form as $T_0$ parameterized by $0$.

{\bf Induction step.}
The induction hypothesis is that we have solved the variables $v_{q,q_1},..., v_{q,q_{i-1}}$, for $2 \le i \le n$. 
Now we solve the $(i-1)!$ variables in \eqref{eq:variables_to_solve}.
Consider the $(i-1)! \times (i-1)!$ sub-matrix of $T$ associated with rows/columns $(q,i)$ in which $(q_{i},q_{i+1},...,q_{n})$ are fixed, denoted by $T_0^i$. We will prove $T_0^i$ contains all the non-zero entries of $T((q,i),:)$ unless we have already solved the corresponding variables. Namely, the equations defined by $T_0^i$ can be used to solve the variables in \eqref{eq:variables_to_solve} given the induction hypothesis.

The first case of non-zero entries in $T((q,i),:)$ is for
for $i>1$, $p=q$, and $k=1,2,...,i-1$, but based on induction hypothesis we have already solved variables for values of $k<i$. Another case of non-zero entries of $T((q,i),:)$ is when $(q,i)=(p,k)$, which is contained in the sub-matrix $T_0^i$.
The third case of non-zero entries is when $q= h_j(p), k = \langle i \rangle_j $ for $j=2,...,n$. 
Note that $p= (q_2,...,q_j,q_1,q_{j+1},...,q_n)$. 
If $i\geq j+1, k=\langle i \rangle_j = i$ and the last $n-i$ values of $p$ are in the form 
of $(q_i,q_{i+1},...,q_n)$, which are included in the sub-matrix $T_0^i$. If $i \leq j$, then $k=\langle i \rangle_j=i-1$ and we have solved these variables based on the induction hypothesis.
In summary, having excluded the solved variables based on the hypothesis, the variables \eqref{eq:variables_to_solve} can be solved according to the coefficient matrix $T_{0}^{i}$ where $q_{i},q_{i+1},...,q_{n}$ are fixed for $p,q$: 
\begin{align}
T_{0}^{i}(q,p)=
\begin{cases}
\sum_{j=2}^{n} \lambda_{q_j} + \sum_{j=1}^{i-1} \mu_{q_{j}} , & \text{if } q=p, \\
-\lambda_{q_{1}}, & \text{if } q = h_j(p), \quad j=2,\dots,i-1,\\
0 , & \text{o.w.}
\end{cases}
\end{align}
This is exact the same general format of $T_0$ parameterized by $i-1$ as in \eqref{eq:T_0} if we replace $\mu_{q_{1}}$ by $\mu_{q_1}+\sum_{j=i}^{n}\lambda_{q_j}$.


For the final iteration $i=n+1$, we solve the variables in \eqref{eq:variables_q0}.
Let $T_0^{n+1}$ be the $n! \times n!$ sub-matrix of $T$ whose 
rows/columns are $(q,0)$, for any $q$. 
Let us consider the 3 types of non-zero entries of the equation $T((q,0),:)$. 
First there are non-zero entries 
when $i=0$ and $(q,i)=(p,k)$ already included in $T_0^{n+1}$. 
The second type is when $i=k=0$ and $q = h_j(p)$ for  $j=2,\dots,n,$, which are also included in $T_0^{n+1}$. The last type is 
when $i=0$ and $q =p$  for $k=1,\dots,n$. These entries are coefficients of the variables that are already solved based on the hypothesis, hence can be excluded.
Therefore, for solving variables \eqref{eq:variables_q0}, it is sufficient to only consider $T_0^{n+1}$: 
\begin{align}
T_{0}^{n+1}(q,p)=
\begin{cases}
\sum_{i=2}^{n} \lambda_{q_i}  +\sum_{i=1}^{n} \mu_{q_i} , & \text{if } q=p, \\
-\lambda_{q_{1}}, & \text{if } q = h_i(p), i=2,\dots,n,\\
0 , & \text{o.w.}
\end{cases}
\end{align}
This is exact same general format of $T_0$ parameterized by $n$ as in \eqref{eq:T_0}.
\end{proof}

Below is a lemma useful to show the correctness of Algorithm \ref{alg:heter}.
\begin{lemma} \label{lem:column operation} 
Let $p$ be a $(j+1)$-increasing permutation. Fix $j,k$, for some $1 \le j \le n-1, n-j+1 \le k \le n$. Then $Incr_j(h_i(p))$, $Incr_j(h_{i}(g_{j,k}(p)))$, $1 \le i \le n$ are as follows.
\begin{align}
    & Incr_j(h_i(p))  =  (p_i,p_1,\dots,p_{i-1},p_{i+1},\dots,p_{n-j},\dots,p_n) = h_i(p), i=1,2,\dots,n-j-1.\\
    & Incr_j(h_{i}(g_{j,k}(p)))  = (p_i,p_1,\dots,p_{i-1},p_{i+1},\dots,p_{n-j-1},p_k,p_{n-j},\dots,p_{k-1},p_{k+1},\dots,p_n)  = g_{j,k}(h_{i}(p)),\\
    & \quad  i=1,2,\dots,n-j-1. \nonumber \\
    & Incr_j(h_i(p))  =  Incr_j(h_{i+1}(g_{j,k}(p))) = (p_i,p_1,\dots,p_{i-1},p_{i+1},\dots,p_n), i=n-j,\dots,k-1 \label{eq:index_column_op}\\
    & Incr_j(h_k(p))  =  Incr_j(h_{n-j}(g_{j,k}(p))) = (p_k,p_1,\dots,p_{k-1},p_{k+1},\dots,p_n).\\
    & Incr_j(h_i(p))  =  Incr_j(h_{i}(g_{j,k}(p))) = (p_i,p_1,\dots,p_{i-1},p_{i+1},\dots,p_n), i=k+1,\dots,n.
\end{align}
\end{lemma}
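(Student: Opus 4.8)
The plan is to treat Lemma~\ref{lem:column operation} as an exercise in careful index bookkeeping: for each value of $i$ I would determine which entry of $g_{j,k}(p)$ sits at each position, pull the appropriate entry to the front with $h$, restore sortedness of the tail with $Incr_j$, and compare the outcome with the analogous manipulation of $p$ itself.

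First I would record the effect of $g_{j,k}$ on a $(j+1)$-increasing $p$. By definition $g_{j,k}$ is a cyclic right-shift of the block of positions $[n-j,k]$: the entry $p_k$ moves into position $n-j$ and each of $p_{n-j},\dots,p_{k-1}$ moves one position to the right, while all positions outside $[n-j,k]$ are fixed. Hence the entry $p_i$ occupies position $i$ in $g_{j,k}(p)$ when $i\le n-j-1$ or $i\ge k+1$, position $i+1$ when $n-j\le i\le k-1$, and position $n-j$ when $i=k$. This is precisely the pairing of $h$-indices asserted in the five identities ($h_i\leftrightarrow h_i$ outside the block, $h_i\leftrightarrow h_{i+1}$ inside it, and $h_k\leftrightarrow h_{n-j}$). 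I would also record the two sortedness facts used repeatedly: since $p$ is $(j+1)$-increasing, $g_{j,k}(p)$ is $j$-increasing with $Incr_{j+1}(g_{j,k}(p))=p$ (this is \eqref{eq:Incr_gjk}); and each permutation produced below already has its last $j-1$ entries increasing, so $Incr_j$ only has to re-insert the single entry at position $n-j+1$ into that tail.

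Next I would run the case split on $i$. For $i\le n-j-1$ and for $i\ge k+1$ the deleted entry lies outside the shifted block, so $h_i(p)$ keeps its sorted tail and $Incr_j$ acts trivially, while $h_i(g_{j,k}(p))$ has $p_{n-j}$ at position $n-j+1$, again the minimum of the tail, so $Incr_j$ is trivial there too; reading off the results gives $g_{j,k}(h_i(p))$ in the first regime and $(p_i,p_1,\dots,p_{i-1},p_{i+1},\dots,p_n)$ in the last. For $n-j\le i\le k-1$ the entry of $h_i(p)$ at position $n-j+1$ is at most the rest of its tail (it is $p_{n-j}$, or $p_{n-j+1}$ when $i=n-j$), so $Incr_j$ leaves $h_i(p)$ fixed; on the other side $h_{i+1}(g_{j,k}(p))$ carries $p_k$ at position $n-j+1$ with the rest of its tail equal to $p_{n-j},\dots,p_{i-1},p_{i+1},\dots,p_{k-1},p_{k+1},\dots,p_n$, so $Incr_j$ slides $p_k$ back between $p_{k-1}$ and $p_{k+1}$, again producing $(p_i,p_1,\dots,p_{i-1},p_{i+1},\dots,p_n)$. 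The case $i=k$ is the same computation with $p_k$ extracted from position $n-j$ of $g_{j,k}(p)$. Collecting the cases proves the five identities by direct substitution into the definitions of $h$, $g_{j,k}$, $Incr_j$; the degenerate case $j=1$ (where $Incr_1$ is the identity and $g_{1,n}$ transposes the last two coordinates) is immediate and can be checked separately.

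The step that will demand care is exactly this index arithmetic at the endpoints: verifying that after the deletion the entry seen by $Incr_j$ at position $n-j+1$ is $p_{n-j}$ or $p_{n-j+1}$ when we start from $p$, and $p_k$ when we start from $g_{j,k}(p)$ in the shifted regime, and that the extreme values $i=n-j$, $i=k$, $k=n$, $j=n-1$ do not cause off-by-one slips or empty ranges. Beyond the single structural observation that $g_{j,k}$ is a right-shift of $[n-j,k]$ and $Incr_j$ is the compensating single-entry re-insertion at position $n-j+1$, no further idea is needed.
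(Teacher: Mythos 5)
Your approach is the same as the paper's: verify each identity by direct substitution into the definitions of $h_i$, $g_{j,k}$, and $Incr_j$, splitting on the range of $i$ relative to the shifted block $[n-j,k]$. (The paper only writes out the case $n-j\le i\le k-1$, i.e.\ \eqref{eq:index_column_op}, and asserts the others are similar; your version of that case, and your bookkeeping of which position of $g_{j,k}(p)$ holds which $p_i$, are both correct.)

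There is, however, one concrete slip in the regime $i\ge k+1$. You lump it with $i\le n-j-1$ and claim that $h_i(g_{j,k}(p))$ has $p_{n-j}$ at position $n-j+1$, so that $Incr_j$ acts trivially on it. That is true for $i\le n-j-1$ (where position $n-j+1>i$ is untouched by $h_i$), but false for $i\ge k+1$: there $h_i$ shifts positions $1,\dots,i-1$ of $g_{j,k}(p)$ one step to the right, so position $n-j+1$ of $h_i(g_{j,k}(p))$ holds the entry that sat at position $n-j$ of $g_{j,k}(p)$, namely $p_k$. If $Incr_j$ really were trivial there, the result would be $(p_i,p_1,\dots,p_{n-j-1},p_k,p_{n-j},\dots,p_{k-1},p_{k+1},\dots,p_{i-1},p_{i+1},\dots,p_n)$, which is not the claimed $(p_i,p_1,\dots,p_{i-1},p_{i+1},\dots,p_n)$. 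The correct argument for $i\ge k+1$ is exactly the re-insertion computation you already carry out for $n-j\le i\le k-1$: $Incr_j$ must slide $p_k$ from position $n-j+1$ back between $p_{k-1}$ and $p_{k+1}$. With that sub-case repaired, the proof goes through.
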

\begin{proof} (\textbf{Lemma \ref{lem:column operation} })
The proof follows immediately from the definitions of the permutations. We show \eqref{eq:index_column_op} below, and the remaining equations can be shown similarly. For $n-j \le i \le k-1$,
$$h_i(p) = (p_i,p_1,\dots,p_{n-j-1}, p_{n-j}, \dots, p_{i-1}, p_{i+1},\dots, p_n).$$
Noting that the last $j$ elements $p_{n-j}, \dots, p_{i-1}, p_{i+1},\dots, p_n$ are already increasing, we get
$$Incr_j(h_i(p)) = h_i(p).$$
Moreover, for $n-j+1 \le i+1 \le k$,
\begin{align}
     & h_{i+1}(g_{j,k}(p)) \\
    =& h_{i+1}(p_1,\dots,p_{n-j-1}, p_k, p_{n-j}, \dots, p_{k-1}, p_{k+1}, \dots, p_n) \\
    =& (p_i,p_1,\dots,p_{n-j-1}, p_k, p_{n-j}, \dots, p_{i-1}, p_{i+1},\dots, p_{k-1},p_{k+1},\dots, p_n).
\end{align}
If we apply $Incr_j$ to the above permutation, the last $j$ positions become $p_{n-j},\dots,p_{i-1},p_{i+1},\dots, p_n$. Thus
$$Incr_j(h_{i+1}(g_{j,k}(p))) = (p_i,p_1,\dots,p_{n-j-1},p_{n-j},\dots,p_{i-1},p_{i+1},\dots, p_n) = Incr_j(h_i(p)).$$
The equation is proved.
\end{proof}

\begin{proof}  (\textbf{Lemma \ref{lem:alg2_output}}) \\
%
%
\textbf{Claim C1.}
After the $j$-th iteration in the forward path of Algorithm \ref{alg:heter}, $0 \le j \le n-2$ we get for $(j+1)$-increasing permutations $q,p$,
\begin{align} \label{eq:T_j}
T_{j} (q,p) =
\begin{cases}
\sum_{i=1}^{n} (\lambda_i +\mu_i ) - \lambda_{q_{1}}, & \text{if } q=p \\
-\lambda_{q_{1}}, & \text{if } q =  Incr_{j+1}( h_i(p) ), i=2,3,\dots,n, \\
0 , & \text{o.w.}
\end{cases}
\end{align}
Note that the matrix is of size $\frac{n!}{(j+1)!} \times \frac{n!}{(j+1)!}$ since $q,p$ are $(j+1)$-increasing. We prove the claim by induction.

The base case is trivial for $j=0$.

Assume C1 holds after the $(j-1)$-th iteration. We will show C1 for the $j$-th iteration. From \eqref{eq:T_j}, the non-zero entries of $T_{j-1}$ in column $p$ are indexed by 
\begin{align} \label{eq:index_T_j}
 (Incr_j( h_i(p) ), p), 1 \le i \le n.    
\end{align}

After Line \ref{line:T_j'}, $T_j^{\prime}=T_{j-1}$. The rows and columns are $j$-increasing.

After the column operations  in Line \ref{line:column}, for $(j+1)$-increasing $p$, column $p$ does not change. Otherwise, for column $g_{j,k}(p)$, $n-j+1 \le k \le n$,
$$ T_j^{\prime}(:, g_{j,k}(p))
    = T_{j-1}(:, g_{j,k}(p)) - T_{j-1}(:, p). $$  
Lemma \ref{lem:column operation} compares the row indices of the non-zero entries in columns $ g_{j,k}(p)$ and  $p$, which are $\{Incr_j(h_i(g_{j,k}(p))),$ $1 \le i \le n\}$ and $\{Incr_j(h_{i}(g_{j,k}(p))), 1 \le i \le n\}$.
We get for $(j+1)$-increasing $p$ and $j$-increasing $q$,
\begin{align} \label{eq:90}
    & T_j^{\prime}(q, g_{j,k}(p)) 
    =& \begin{cases}
    T_{j-1}(q, g_{j,k}(p)) = \sum_{i=1}^{n} (\lambda_i +\mu_i)  - \lambda_{q_{1}}, & \text{if } q=g_{j,k}(p),\\
    -T_{j-1}(q, p) \quad = -\sum_{i=1}^{n} (\lambda_i +\mu_i)  + \lambda_{q_{1}}, & \text{if } q=p \\
    T_{j-1}(q, g_{j,k}(p)) = -\lambda_{q_1}, & \text{if } q = g_{j,k}(h_i(p)), 2 \le i \le n-j-1,\\
    -T_{j-1}(q, p) \quad = \lambda_{q_1}, & \text{if } q = h_i((p)), 2 \le i \le n-j-1,\\
    0, & \text{o.w.}
    \end{cases}
\end{align}

After the row operations in Line \ref{line:row}, for a row that is not $(j+1)$-increasing, it does not change. Otherwise,
\begin{align}\label{eq:row_op}
    T_j^{\prime\prime}(q,:) = \sum_{k=n-j}^{n}(T_{j}^{\prime}(g_{j,k}(q),:)).
\end{align}
First, consider column $p$ that is $(j+1)$-increasing, and the  non-zero entries of $T_j^{\prime}(:,p) = T_{j-1}(:,p)$ are indexed by \eqref{eq:index_T_j}. Using the observation from \eqref{eq:75}, there is at most one non-zero term in the sum of \eqref{eq:row_op}, and
\begin{align} \label{eq:91}
    T_j^{\prime\prime}(q,p) = \begin{cases}
    \sum_{i=1}^{n} (\lambda_i +\mu_i)  - \lambda_{q_{1}}, & \text{if } q=p,\\
    -\lambda_{q_1}, & \text{if } q=Incr_j(h_i(p)), \text{ or } q = Incr_{j+1}(h_i(p)), i=2,3,\dots,n \\
    0, & \text{o.w.}
    \end{cases}
\end{align}
Second, consider column $g_{j,k}(p)$ that is $j$-increasing but not $(j+1)$-increasing, whose entries are in \eqref{eq:90}. 
For row $q = p$ which is $(j+1)$-increasing, the first two cases of \eqref{eq:90} will be added and canceled according to \eqref{eq:row_op}. Similarly for row $q=h_i(p)$ which is $(j+1)$-increasing, Cases 3 and 4 of \eqref{eq:90} will be added  and canceled. Thus,
\begin{align} \label{eq:92}
     T_j^{\prime\prime}(q, g_{j,k}(p)) 
    = \begin{cases}
    \sum_{i=1}^{n} (\lambda_i +\mu_i)  - \lambda_{q_{1}}, & \text{if } q=g_{j,k}(p),\\
     -\lambda_{q_1}, & \text{if } q = g_{j,k}(h_i(p))=h_i(g_{j,k}(p)), 2 \le i \le n-j-1,\\
    0, & \text{o.w.}
    \end{cases}
\end{align}
From \eqref{eq:92} one can see that for row/equation $q$ that is $(j+1)$-increasing, $T_j^{\prime\prime}(q, g_{j,k}(p)) = 0$. Namely, $(j+1)$-increasing equations  only involve $(j+1)$-increasing variables. Thus we only need to find $T_j$ as in Line \ref{line:pick_T_j} in Algorithm \ref{alg:heter} and solve all variables $v^{(j)}_{p}$ for $(j+1)$-increasing $p$,
\begin{align}\label{eq:T_j_equation}
    T_j \mathbf{v}^{(j)} = \mathbf{c}^{(j)}.
\end{align}
Afterwards, we can exclude these solved variables, and consider only $R_j$ as in Line \ref{line:pick_R_j} to solve the remaining variables, 
$v^{(j-1)}_p$ where $p$ is $j$-increasing but not $(j+1)$-increasing. Notice that the solved variables $- S_j \mathbf{v}^{(j)}$ should be added to the constant vector, and define $\overline{Q}$ as in Line \ref{line:Q_bar}. The equation associated with $R_j$ is:
\begin{align}\label{eq:R_j_equation}
        R_j\mathbf{v}^{(j-1)}(\overline{Q})=\mathbf{c}^{(j-1)}(\overline{Q}) - S_j \mathbf{v}^{(j)}.
\end{align}

$R_j$ is the sub-matrix of \eqref{eq:92} whose rows and columns are not  $(j+1)$-increasing (Line \ref{line:pick_R_j}). Since the row index $q=g_{j,k}(p)$ or $q=h_i(g_{j,k}(p)), 2 \le i \le n-j-1$ is not $(j+1)$-increasing, $R_j$ is in the same form as \eqref{eq:92}. We rewrite it such that the column is indexed by $p$,
\begin{align}\label{eq:R_j}
    R_j(q,p) =  \begin{cases}
    \sum_{i=1}^{n} (\lambda_i +\mu_i)  - \lambda_{q_{1}}, & \text{if } q=p,\\
     -\lambda_{q_1}, & \text{if } q = h_i(p), 2 \le i \le n-j-1,\\
    0, & \text{o.w.}
    \end{cases}
\end{align}
Similarly, $T_j$ is the sub-matrix of \eqref{eq:91} with $(j+1)$-increasing columns and rows. Notice that in Case 2 of \eqref{eq:91}, $Incr_j(h_i(p))$ is either equal to  $Incr_{j+1}(h_i(p))$ or not $(j+1)$-increasing. Thus we can remove the case $q=Incr_j(h_i(p))$ and obtain $T_j$, which is identical to \eqref{eq:T_j}. Thus the induction to prove C1 is completed.

{\bf Proof of Line \ref{line:sum}.}
Based on $T_{n-2}$ in \eqref{eq:T_j}, after the last iteration, one can see that the coefficient matrix is
\begin{align}
 T_{n-1} = \sum_{i=1}^{n}\mu_i, 
\end{align}
which is a scalar. From Line \ref{line:constant_vector} of Algorithm \ref{alg:heter}, we see that the constant vector 
\begin{align}
    \mathbf{c}^{(n-1)} =  \sum_{\text{all permutations } q} c_q^{(0)} 
\end{align}
is a scalar, and the corresponding variable is $\sum_{\text{all permutations } q} v_q^{(0)} $. Hence the lemma holds. 
\end{proof}
\begin{IEEEproof} (\textbf{Lemma \ref{lem:heter_nonneg}})
The lemma will follow after proving the following claim by induction.

{\bf Claim C2.} The correctness and non-negativity statements of the lemma hold when $T_0$ is parameterized by $i$, $i=0,1,\dots,n$.

Base case. When $T_0$ is parameterized by $i=0$ or $1$, it is a positive scalar defined in \eqref{eq:T_0_n1}. Therefore, the claim holds trivially.

{\bf Correctness.}
We prove that the backward path correctly finds $\mathbf{v}^{(0)}$ as the solution to $T_0\mathbf{v}^{(0)}=\mathbf{c}^{(0)}$.
Recall the variables in each iteration are listed in \eqref{eq:j_iter_variable}, and the equations are in \eqref{eq:T_j_equation} and \eqref{eq:R_j_equation}.
In Line \ref{line:sum} of Algorithm \ref{alg:heter}, we solved $v_{(1,2,\dots,n)}^{(n-1)} = \sum_{\text{all permutations $q$}}v_q^{(0)}$ using $T_{n-1}$, where $p=(1,2,\dots,n)$. We can exclude this variable and solve $v_p^{(n-2)}$ for $p$ that is $(n-1)$-increasing but not $n$-increasing using $R_{n-1}$ (to be explained in more details later). Finally, for $n$-increasing $p=(1,2,\dots,n)$, we get from \eqref{eq:j_iter_variable} that $v_p^{(n-2)} = v_p^{(n-1)} - \sum\limits_{k=2}^{n} v^{(n-2)}_{g_{n-1,k}(p)}$.
Equivalently, the equation $T_{n-2}\mathbf{v}^{(n-2)} = \mathbf{c}^{(n-2)}$ is solved.

Next, exclude $v_{p}^{(n-2)}$ for all $(n-1)$-increasing $p$. We solve $v_{p}^{(n-3)}$ for $p$ that is $(n-2)$-increasing but not $(n-1)$-increasing using $R_{n-2}$. For $(n-1)$-increasing $p$, $v_{p}^{(n-3)} = v_p^{(n-2)} - \sum\limits_{k=3}^{n} v^{(n-3)}_{g_{n-2,k}(p)}$. Equivalently, $T_{n-3}\mathbf{v}^{(n-3)} = \mathbf{c}^{(n-3)}$ is solved.
Continuing in the same manner, for odd (2-increasing) $p$, we can finally solve the sum $v_p^{(1)} = v^{(0)}_p+v^{(0)}_{g_{1,n}(p)}$, and the even variable $v^{(0)}_{g_{1,n}(p)}$ associated with $T_1$ and $R_1$, respectively. Apparently, the odd variable $v_p^{(0)}$ is solved as well. 

Next, let us explain how to solve the equations defined by $R_j$ for some $1 \le j \le n-1$. 
Let $N$ be the set defined in Line \ref{line:N}.
By \eqref{eq:R_j}, for row $q \in N$, all non-zero entries of $R_j$ are in columns $N$. Thus we can consider  the submatrix of $R_j$ whose the row and column indices $(q,p)$ are restricted by $N$ and solve the associated equations.
Comparing $R_j$ in \eqref{eq:R_j} and $T_0$ in \eqref{eq:T_0}, we see that if we substitute $\mu_{n-j-1}+\dots+\mu_n+\lambda_{n-j}+\dots+\lambda_{n}$ by $\mu_{n-j-1}$, this submatrix of $R_j$ is the same as $T_0$ parameterized by $n-j-1$. When $(c_{n-j},\dots,c_{n})$ traverses over all possible tuples, all equations defined by $R_j$ are solved as in Line \ref{line:solve_R_j} assuming the correctness induction hypothesis.


{\bf Non-negativity.}
First, we show the even variables $v_p^{(0)}$ (namely, $p_{n-1} > p_n$) are non-negative, which are defined by $R_1$.
Note that the equations defined by $R_1$ can be decomposed into equations defined by $T_0$ parameterized by $n-2$, and the corresponding constant vector is non-negative.
By the non-negativity induction hypothesis, the equations have a non-negative solution. Therefore, the even variables are non-negative.

Hence, all the even variables can be excluded from the odd equations.
We need to prove the odd (2-increasing) variables are non-negative.
Instead, we prove that variable $\{v_p^{(0)}: p_n=c\}$ is non-negative, for $c=n,n-1,\dots,2$ (note that some of these variables are even).  


Consider the $q$-th row/equation of $T_0$, $q_n=n$. 
By \eqref{eq:T_0}, its nonzero column/variable indices are
$$p \in \{ h_1^{-1}(q),\dots,h_n^{-1}(q) \},$$
all of which has $p_n=n$ except $p=h_n^{-1}(q)=(q_2,\dots,q_{n-1},n,q_1)$. However, $p=h_n^{-1}(q)$ corresponds to an even variable and is excluded from the equation. 
As a result, we are left with only columns in the set $\{p:p_n=n\}$. 
Furthermore, these equations correspond to the submatrix $T_{sub}$ of $T_0$ whose rows and columns have $q_n=p_n=n$, 
\begin{align} \label{eq:S}
    T_{sub}(q,p) =  \begin{cases}
\sum_{i=1}^{n} (\lambda_i +\mu_i)  - \lambda_{q_{1}}, & \text{if } p=q \\
-\lambda_{q_{1}}, & \text{if } q = h_i(p),i=2,3,\dots,n-1\\
0 , & \text{o.w.}
\end{cases}
\end{align}
We see that if we substitute $\mu_{n-1}+\mu_n+\lambda_n$ by $\mu_{n-1}$, the submatrix is equal to $T_0$ parameterized by $n-1$. By the non-negativity hypothesis, the variables $\{v_p^{(0)}:p_n=n\}$ are all non-negative. From here on we exclude these variables.

Next, consider the $q$-th row/equation of $T_0$, $q_n=n-1$. 
By \eqref{eq:T_0}, its nonzero column/variable indices are 
$$p \in \{ h_1^{-1}(q),\dots,h_n^{-1}(q) \}.$$
all of which has $p_n=n-1$ except $p=h_n^{-1}(q)=(q_2,\dots,q_{n-1},c,q_1)$.\\
1. If $n-1 > q_1$, $p=h_n^{-1}(q)$ corresponds to an even variable and is excluded from the equation. \\
2. If $n-1 < q_1 = p_n$, variable $p$ is already solved and excluded.  
Hence we can again form a submatrix $T_{sub}$ that is in the same form as \ref{eq:S} restricted to rows and columns with $q_n=p_n=n-1$. The corresponding variables are non-negative and then excluded.

Continue in a similar manner, for equations $\{q: q_n=c\}$, $c = n, n-1, \dots, 2$, we can obtain non-negative solutions to $\{v_p^{(0)}: p_n=c\}$.
The proof is completed.
\end{IEEEproof}
\begin{proof} (\textbf{Theorem \ref{thm:heter_AoI}})
In Lemma \ref{gen_mat} we found the format of $T$ and $\bm{\pi}$ for the transition equations. By Lemma \ref{lem:yates}, if these equations have a non-negative solution, then AoI is calculated as in  \eqref{eq:heter_AoI}. In Lemma \ref{break_gen}, the $(n+1)!$ equations are broken down into smaller sets of equations by Algorithm \ref{alg:heter_overall}, all of which have coefficient matrix in the form of $T_0$ parameterized by various numbers. Therefore, each set of equations (Line \ref{line:breakdown} of Algorithm \ref{alg:heter_overall}) can be solved by calling Algorithm \ref{alg:heter}. The solutions should be non-negative according to Lemma \ref{lem:heter_nonneg}, resulting in  non-negative constant vector for the remaining equations (Line \ref{line:constant2} of Algorithm \ref{alg:heter_overall}). Finally, the AoI (Line \ref{line:AoI} of Algorithm \ref{alg:heter_overall}) can be computed by Algorithm \ref{alg:heter} according to Lemma \ref{lem:alg2_output}.
\end{proof}

\bibliographystyle{IEEEtran}
\bibliography{biblio} 

\begin{thebibliography}{10}
\providecommand{\url}[1]{#1}
\csname url@samestyle\endcsname
\providecommand{\newblock}{\relax}
\providecommand{\bibinfo}[2]{#2}
\providecommand{\BIBentrySTDinterwordspacing}{\spaceskip=0pt\relax}
\providecommand{\BIBentryALTinterwordstretchfactor}{4}
\providecommand{\BIBentryALTinterwordspacing}{\spaceskip=\fontdimen2\font plus
\BIBentryALTinterwordstretchfactor\fontdimen3\font minus
  \fontdimen4\font\relax}
\providecommand{\BIBforeignlanguage}[2]{{%
\expandafter\ifx\csname l@#1\endcsname\relax
\typeout{** WARNING: IEEEtran.bst: No hyphenation pattern has been}%
\typeout{** loaded for the language `#1'. Using the pattern for}%
\typeout{** the default language instead.}%
\else
\language=\csname l@#1\endcsname
\fi
#2}}
\providecommand{\BIBdecl}{\relax}
\BIBdecl

\bibitem{8908653}
E.~Kasaeyan~Naeini, S.~Shahhosseini, A.~Subramanian, T.~Yin, A.~M. Rahmani, and
  N.~Dutt, ``An edge-assisted and smart system for real-time pain monitoring,''
  in \emph{2019 IEEE/ACM International Conference on Connected Health:
  Applications, Systems and Engineering Technologies (CHASE)}, 2019, pp.
  47--52.

\bibitem{10.1145/3368089.3409682}
R.~Trimananda, S.~A.~H. Aqajari, J.~Chuang, B.~Demsky, G.~H. Xu, and S.~Lu,
  ``Understanding and automatically detecting conflicting interactions between
  smart home {IoT} applications,'' in \emph{Proceedings of the 28th ACM Joint
  Meeting on European Software Engineering Conference and Symposium on the
  Foundations of Software Engineering}, ser. ESEC/FSE 2020.\hskip 1em plus
  0.5em minus 0.4em\relax Association for Computing Machinery, 2020, p.
  1215–1227.

\bibitem{du2015effective}
R.~Du, C.~Chen, B.~Yang, N.~Lu, X.~Guan, and X.~Shen, ``Effective urban traffic
  monitoring by vehicular sensor networks,'' \emph{IEEE Transactions on
  Vehicular Technology}, vol.~64, no.~1, pp. 273--286, 2015.

\bibitem{kaul2012real}
S.~Kaul, R.~Yates, and M.~Gruteser, ``Real-time status: How often should one
  update?'' in \emph{INFOCOM, 2012 Proceedings IEEE}.\hskip 1em plus 0.5em
  minus 0.4em\relax IEEE, 2012.

\bibitem{9013935}
A.~Javani, M.~Zorgui, and Z.~Wang, ``Age of information in multiple sensing,''
  in \emph{2019 IEEE Global Communications Conference (GLOBECOM)}, 2019, pp.
  1--6.

\bibitem{yates2018status}
R.~D. Yates, ``Status updates through networks of parallel servers,'' in
  \emph{2018 IEEE International Symposium on Information Theory (ISIT)}.\hskip
  1em plus 0.5em minus 0.4em\relax IEEE, 2018, pp. 2281--2285.

\bibitem{yates2018age}
R.~D. Yates and S.~K. Kaul, ``The age of information: Real-time status updating
  by multiple sources,'' \emph{IEEE Transactions on Information Theory}, 2018.

\bibitem{9099557}
M.~Moltafet, M.~Leinonen, and M.~Codreanu, ``On the age of information in
  multi-source queueing models,'' \emph{IEEE Transactions on Communications},
  vol.~68, no.~8, pp. 5003--5017, 2020.

\bibitem{8406966}
R.~D. Yates, ``Age of information in a network of preemptive servers,'' in
  \emph{IEEE INFOCOM 2018 - IEEE Conference on Computer Communications
  Workshops (INFOCOM WKSHPS)}, 2018, pp. 118--123.

\bibitem{kam2016effect}
C.~Kam, S.~Kompella, G.~D. Nguyen, and A.~Ephremides, ``Effect of message
  transmission path diversity on status age,'' \emph{IEEE Transactions on
  Information Theory}, vol.~62, no.~3, pp. 1360--1374, 2016.

\bibitem{doncel2020age}
J.~Doncel and M.~Assaad, ``Age of information in a decentralized network of
  parallel queues with routing and packets losses,'' 2020.

\bibitem{8433704}
H.~B. {Beytur} and E.~{Uysal-Biyikoglu}, ``Minimizing age of information for
  multiple flows,'' in \emph{2018 IEEE International Black Sea Conference on
  Communications and Networking (BlackSeaCom)}, 2018, pp. 1--5.

\bibitem{zhou2020age}
B.~Zhou and W.~Saad, ``On the age of information in internet of things systems
  with correlated devices,'' 2020.

\bibitem{9174054}
S.~Banerjee, R.~Bhattacharjee, and A.~Sinha, ``Fundamental limits of
  age-of-information in stationary and non-stationary environments,'' in
  \emph{2020 IEEE International Symposium on Information Theory (ISIT)}, 2020,
  pp. 1741--1746.

\bibitem{9148771}
S.~Zhang, H.~Zhang, L.~Song, Z.~Han, and H.~V. Poor, ``Sensing and
  communication tradeoff design for {AoI} minimization in a cellular internet
  of {UAV}s,'' in \emph{ICC 2020 - 2020 IEEE International Conference on
  Communications (ICC)}, 2020, pp. 1--6.

\bibitem{mankar2020throughput}
P.~D. Mankar, Z.~Chen, M.~A. Abd-Elmagid, N.~Pappas, and H.~S. Dhillon,
  ``Throughput and age of information in a cellular-based {IoT} network,''
  2020.

\bibitem{8761106}
L.~{Zhang}, L.~{Yan}, Y.~{Pang}, and Y.~{Fang}, ``Fresh: Freshness-aware
  energy-efficient scheduler for cellular {IoT} systems,'' in \emph{ICC 2019 -
  2019 IEEE International Conference on Communications (ICC)}, 2019, pp. 1--6.

\bibitem{8849636}
A.~Arafa, J.~Yang, S.~Ulukus, and H.~V. Poor, ``Using erasure feedback for
  online timely updating with an energy harvesting sensor,'' in \emph{2019 IEEE
  International Symposium on Information Theory (ISIT)}, 2019, pp. 607--611.

\bibitem{8437547}
S.~{Feng} and J.~{Yang}, ``Minimizing age of information for an energy
  harvesting source with updating failures,'' in \emph{2018 IEEE International
  Symposium on Information Theory (ISIT)}, 2018, pp. 2431--2435.

\bibitem{9154225}
P.~Rafiee and O.~Ozel, ``Active status update packet drop control in an energy
  harvesting node,'' in \emph{2020 IEEE 21st International Workshop on Signal
  Processing Advances in Wireless Communications (SPAWC)}, 2020, pp. 1--5.

\bibitem{9086254}
N.~{Pappas}, Z.~{Chen}, and M.~{Hatami}, ``Average {AoI} of cached status
  updates for a process monitored by an energy harvesting sensor,'' in
  \emph{2020 54th Annual Conference on Information Sciences and Systems
  (CISS)}, 2020, pp. 1--5.

\bibitem{9086214}
O.~M. {Sleem}, S.~{Leng}, and A.~{Yener}, ``Age of information minimization in
  wireless powered stochastic energy harvesting networks,'' in \emph{2020 54th
  Annual Conference on Information Sciences and Systems (CISS)}, 2020, pp.
  1--6.

\bibitem{gindullina2020ageofinformation}
E.~Gindullina, L.~Badia, and D.~Gündüz, ``Age-of-information with information
  source diversity in an energy harvesting system,'' 2020.

\bibitem{9217302}
M.~Hatami, M.~Jahandideh, M.~Leinonen, and M.~Codreanu, ``Age-aware status
  update control for energy harvesting {IoT} sensors via reinforcement
  learning,'' in \emph{2020 IEEE 31st Annual International Symposium on
  Personal, Indoor and Mobile Radio Communications}, 2020, pp. 1--6.

\bibitem{9149052}
A.~{Javani}, M.~{Zorgui}, and Z.~{Wang}, ``On the age of information in erasure
  channels with feedback,'' in \emph{ICC 2020 - 2020 IEEE International
  Conference on Communications (ICC)}, 2020, pp. 1--6.

\bibitem{9013381}
A.~Arafa, K.~Banawan, K.~G. Seddik, and H.~V. Poor, ``On timely channel coding
  with hybrid {ARQ},'' in \emph{2019 IEEE Global Communications Conference
  (GLOBECOM)}, 2019, pp. 1--6.

\bibitem{9144090}
A.~Srivastava, A.~Sinha, and K.~Jagannathan, ``On minimizing the maximum
  age-of-information for wireless erasure channels,'' in \emph{2019
  International Symposium on Modeling and Optimization in Mobile, Ad Hoc, and
  Wireless Networks (WiOPT)}, 2019, pp. 1--6.

\bibitem{8636088}
A.~Arafa, J.~Yang, S.~Ulukus, and H.~V. Poor, ``Online timely status updates
  with erasures for energy harvesting sensors,'' in \emph{2018 56th Annual
  Allerton Conference on Communication, Control, and Computing (Allerton)},
  2018, pp. 966--972.

\bibitem{9174086}
A.~Arafa, K.~Banawan, K.~G. Seddik, and H.~Vincent~Poor, ``Timely estimation
  using coded quantized samples,'' in \emph{2020 IEEE International Symposium
  on Information Theory (ISIT)}, 2020, pp. 1812--1817.

\bibitem{9374448}
A.~Ferdowsi, M.~A. Abd-Elmagid, W.~Saad, and H.~S. Dhillon, ``Neural
  combinatorial deep reinforcement learning for age-optimal joint trajectory
  and scheduling design in uav-assisted networks,'' \emph{IEEE Journal on
  Selected Areas in Communications}, vol.~39, no.~5, pp. 1250--1265, 2021.

\bibitem{9097584}
B.~{Yin}, S.~{Zhang}, and Y.~{Cheng}, ``Application-oriented scheduling for
  optimizing the age of correlated information: A deep reinforcement learning
  based approach,'' \emph{IEEE Internet of Things Journal}, pp. 1--1, 2020.

\bibitem{9348053}
O.~Ayan, H.~M. Gürsu, S.~Hirche, and W.~Kellerer, ``{AoI}-based finite horizon
  scheduling for heterogeneous networked control systems,'' in \emph{GLOBECOM
  2020 - 2020 IEEE Global Communications Conference}, 2020, pp. 1--7.

\bibitem{9055353}
B.~{Sombabu} and S.~{Moharir}, ``Age-of-information based scheduling for
  multi-channel systems,'' \emph{IEEE Transactions on Wireless Communications},
  vol.~19, no.~7, pp. 4439--4448, 2020.

\bibitem{9184001}
Z.~Bao, Y.~Dong, Z.~Chen, P.~Fan, and K.~B. Letaief, ``Age-optimal service and
  decision processes in internet of things,'' \emph{IEEE Internet of Things
  Journal}, vol.~8, no.~4, pp. 2826--2841, 2021.

\bibitem{feng2020pecoding}
S.~Feng and J.~Yang, ``Pecoding and scheduling for {AoI} minimization in {MIMO}
  broadcast channels,'' 2020.

\bibitem{8570843}
M.~A. {Abd-Elmagid} and H.~S. {Dhillon}, ``Average peak age-of-information
  minimization in {UAV}-assisted {IoT} networks,'' \emph{IEEE Transactions on
  Vehicular Technology}, vol.~68, no.~2, pp. 2003--2008, 2019.

\bibitem{9285214}
S.~F. Abedin, M.~S. Munir, N.~H. Tran, Z.~Han, and C.~S. Hong, ``Data freshness
  and energy-efficient {UAV} navigation optimization: A deep reinforcement
  learning approach,'' \emph{IEEE Transactions on Intelligent Transportation
  Systems}, pp. 1--13, 2020.

\bibitem{9374461}
F.~Wu, H.~Zhang, J.~Wu, Z.~Han, H.~Vincent~Poor, and L.~Song, ``{UAV}-to-device
  underlay communications: Age of information minimization by multi-agent deep
  reinforcement learning,'' \emph{IEEE Transactions on Communications}, pp.
  1--1, 2021.

\bibitem{9048736}
S.~{Farazi}, A.~G. {Klein}, and D.~R. {Brown}, ``Age of information with
  unreliable transmissions in multi-source multi-hop status update systems,''
  in \emph{2019 53rd Asilomar Conference on Signals, Systems, and Computers},
  2019, pp. 2017--2021.

\bibitem{8732378}
A.~M. Bedewy, Y.~Sun, and N.~B. Shroff, ``The age of information in multihop
  networks,'' \emph{IEEE/ACM Transactions on Networking}, vol.~27, no.~3, pp.
  1248--1257, 2019.

\bibitem{9149009}
B.~Soret, S.~Ravikanti, and P.~Popovski, ``Latency and timeliness in multi-hop
  satellite networks,'' in \emph{ICC 2020 - 2020 IEEE International Conference
  on Communications (ICC)}, 2020, pp. 1--6.

\bibitem{8262777}
R.~{Talak}, S.~{Karaman}, and E.~{Modiano}, ``Minimizing age-of-information in
  multi-hop wireless networks,'' in \emph{2017 55th Annual Allerton Conference
  on Communication, Control, and Computing (Allerton)}, 2017, pp. 486--493.

\bibitem{costa2016age}
M.~Costa, M.~Codreanu, and A.~Ephremides, ``On the age of information in status
  update systems with packet management,'' \emph{IEEE Transactions on
  Information Theory}, vol.~62, no.~4, pp. 1897--1910, 2016.

\bibitem{abd2018average}
M.~A. Abd-Elmagid and H.~S. Dhillon, ``Average peak age-of-information
  minimization in {UAV}-assisted {IoT} networks,'' \emph{IEEE Transactions on
  Vehicular Technology}, 2018.

\bibitem{9360520}
P.~D. Mankar, M.~A. Abd-Elmagid, and H.~S. Dhillon, ``Spatial distribution of
  the mean peak age of information in wireless networks,'' \emph{IEEE
  Transactions on Wireless Communications}, pp. 1--1, 2021.

\bibitem{dogan2020multisource}
O.~Dogan and N.~Akar, ``The multi-source preemptive {M/PH/1/1} queue with
  packet errors: Exact distribution of the age of information and its peak,''
  2020.

\bibitem{9061059}
S.~{Asvadi}, S.~{Fardi}, and F.~{Ashtiani}, ``Analysis of peak age of
  information in blocking and preemptive queuing policies in a {HARQ}-based
  wireless link,'' \emph{IEEE Wireless Communications Letters}, pp. 1--1, 2020.

\bibitem{9322606}
C.~Chaccour and W.~Saad, ``On the ruin of age of information in augmented
  reality over wireless terahertz (thz) networks,'' in \emph{GLOBECOM 2020 -
  2020 IEEE Global Communications Conference}, 2020, pp. 1--6.

\bibitem{hespanha2006modelling}
J.~P. Hespanha, ``Modelling and analysis of stochastic hybrid systems,''
  \emph{IEE Proceedings-Control Theory and Applications}, vol. 153, no.~5, pp.
  520--535, 2006.

\end{thebibliography}

\IEEEtriggeratref{3}

\end{document}